\newcommand{\field}[1]{\mathbb{#1}}
\newcommand{\E}{\field{E}}
\def\Zhang{\textcolor{red}}
\theoremstyle{example} \theoremstyle{remark} \theoremstyle{lemma}
\theoremstyle{definition} \theoremstyle{corol}
\theoremstyle{proposition} \theoremstyle{condition}
\theoremstyle{assumption}
\newtheorem{assumption}{\n{Assumption}}[section]
\newtheorem{theorem}{\n{Theorem}}[section]
\newtheorem{example}{\n{Example}}[section]
\newtheorem{remark}{\n{Remark}}[section]
\newtheorem{lemma}{\n{Lemma}}[section]
\newtheorem{proposition}{\n{Proposition}}[section]
\newcommand{\rmnum}[1]{\romannumeral #1}
\newcommand{\Rmnum}[1]{\expandafter\romannumeral #1}
\font\n=cmcsc12
\def\var{{\mbox{var}}}
\begin{document}
\title{\Large Testing High Dimensional Mean Under Sparsity}
\author{\large Xianyang Zhang\thanks{Department of Statistics, Texas A\&M University, Collage Station, TX 77843, USA. E-mail: zhangxiany@stat.tamu.edu. }
\medskip\\
%{\large Texas A\&M University}
}
\date{\normalsize This version: \today}
\maketitle
\sloppy%
\onehalfspacing \small \textbf{Abstract} Motivated by the likelihood
ratio test under the Gaussian assumption, we develop a maximum
sum-of-squares test for conducting hypothesis testing on high
dimensional mean vector. The proposed test which incorporates the
dependence among the variables is designed to ease the computational
burden and to maximize the asymptotic power in the likelihood ratio
test. A simulation-based approach is developed to approximate the
sampling distribution of the test statistic. The validity of the
testing procedure is justified under both the null and alternative
hypotheses. We further extend the main results to the two sample problem without the equal covariance
assumption. Numerical results suggest that the proposed test can be
more powerful than some existing alternatives.
\\
\strut \textbf{Keywords:} High dimensionality, Maximum-type-test,
Simulation-based approach, Sparsity, Sum-of-squares test
\strut \onehalfspacing %\normalsize

%\normalsize

\section{Introduction}
Due to technology advancement, modern statistical data analysis
often deals with high dimensional data arising from many areas such
biological studies. High dimensionality poses significant
challenge to hypothesis testing.
%the problem of conducting simultaneous inference for a large
%parameter vector has also attracted considerable attention. This
%specific situation arises when simultaneously testing a large number
%of hypotheses, which is motivated for example by genetic sequence
%studies and gene selection. In recent years, there is a growing body
%of literature concerning (simultaneous) inference on the mean vector
%and the covariance structure of high- dimensional data using the
%squares type statistics or the maximum type statistics.
In this paper, we consider a canonical hypothesis testing problem in
multivariate analysis, namely inference on mean vector. Let
$\{X_i\}^{n}_{i=1}$ be a sequence of i.i.d $p$-dimensional random
vectors with $\E X_i=\theta.$ We are interested in testing
$$H_0: \theta=0_{p\times 1}~\text{versus}~~H_a:\theta\neq 0_{p\times 1}.$$
When $p\ll n$, the Hotelling's $T^2$ test has been shown to enjoy
some optimal properties for testing $H_0$ against $H_a$ [Anderson (2003)]. However, for large
$p$, the finite sample performance of the Hotelling's $T^2$ test is
often unsatisfactory. To cope with the high dimensionality, several
alternative approaches have been suggested, see e.g Bai and Saranadasa (1996), Srivastava
and Du (2008), Srivastava (2009), Chen and Qin (2010), Lopes et al. (2009), Cai et al. (2014), Gregory et al. (2015) and references therein. In general, these tests can be categorized into two types:
the sum-of-squares type test and the maximum type test. The
former is designed for testing dense but possibly weak signals,
i.e., $\theta$ contains a large number of small non-zero entries.
The latter is developed for testing sparse signals, i.e., $\theta$
has a small number of large coordinates. In this paper, our interest
concerns the case of sparse signals which can arise in many real
applications such as detecting disease outbreaks in early stage, anomaly detection in medical imaging [Zhang et al. (2000)] and ultrasonic flaw detection in highly scattering materials [James et al. (2001)].

Suitable transformation on the original data, which explores the
advantages of the dependence structure among the variables, can lead
to magnified signals and thus improves the power of the testing
procedure. This phenomenon has been observed in the literature [see
e.g. Hall and Jin (2010); Cai et al. (2014); Chen et al. (2014); Li
and Zhong (2015)]. To illustrate the point, we consider the signal
$\theta=(\theta_1,\dots,\theta_{200})'$, where $\theta$ contains 4
nonzero entries whose magnitudes are all equal to $\psi_j\sqrt{\log
(200)/100}\approx 0.23\psi_j$ with $\psi_j$ being independent and
identically distributed (i.i.d) random variables such that
$P(\psi_j=\pm 1)=1/2.$ Let $\Sigma=(\sigma_{i,j})^{p}_{i,j=1}$ with
$\sigma_{i,j}=0.6^{|i-j|}$, and $\Gamma=\Sigma^{-1}$. Figure
\ref{fig:sig} plots the original signal $\theta$ as well as the
signal after transformation and studentization
$\widetilde{\theta}=(\widetilde{\theta}_1,\dots,\widetilde{\theta}_p)'$
with $\widetilde{\theta}_j=(\Gamma\theta)_j/\sqrt{\gamma_{jj}}$. It
is clear that the linear transformation magnifies both the strength
and the number of signals (the number of nonzero entries increases
from 4 to 12 after the linear transformation). In the context of
signal recovery, the additional nonzero entries are treated as fake
signals and need to be excised, but they are potentially helpful in
simultaneous hypothesis testing as they carry certain information
about the presence of signal. In general, it appears to be sensible
to construct a test based on the transformed data, which targets not
only for the largest entry [see Cai et al. (2014)] but also other
leading components in $\widetilde{\theta}$. Intuitively such test
can be constructed based on $\bar{Z}=\Gamma\bar{X}$ with
$\bar{X}=\sum^{n}_{i=1}X_i/n$ being the sample mean, which serves as
a natural estimator for $\Gamma\theta.$ For known $\Gamma$, a test
statistic which examines the largest $k$ components of
$\Gamma\theta$ can be defined as,
$$T_n(k)=n\max_{1\leq j_1<j_2<\cdots<j_k\leq p}\sum^{k}_{s=1}\frac{\bar{z}_{j_s}^2}{\gamma_{j_s,j_s}}.$$
\begin{figure}[h]
\centering
\includegraphics[height=7cm,width=7cm]{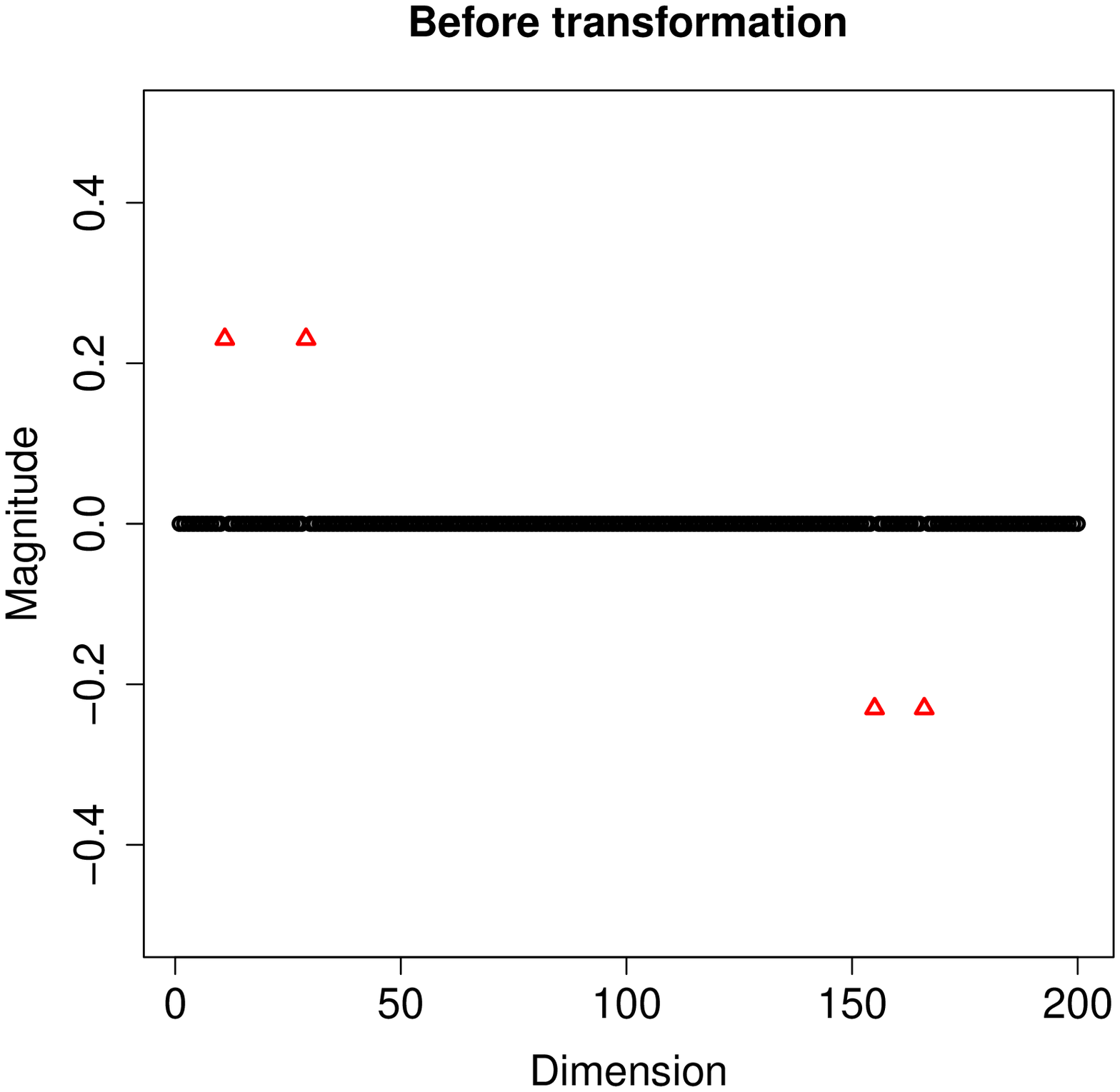}
\includegraphics[height=7cm,width=7cm]{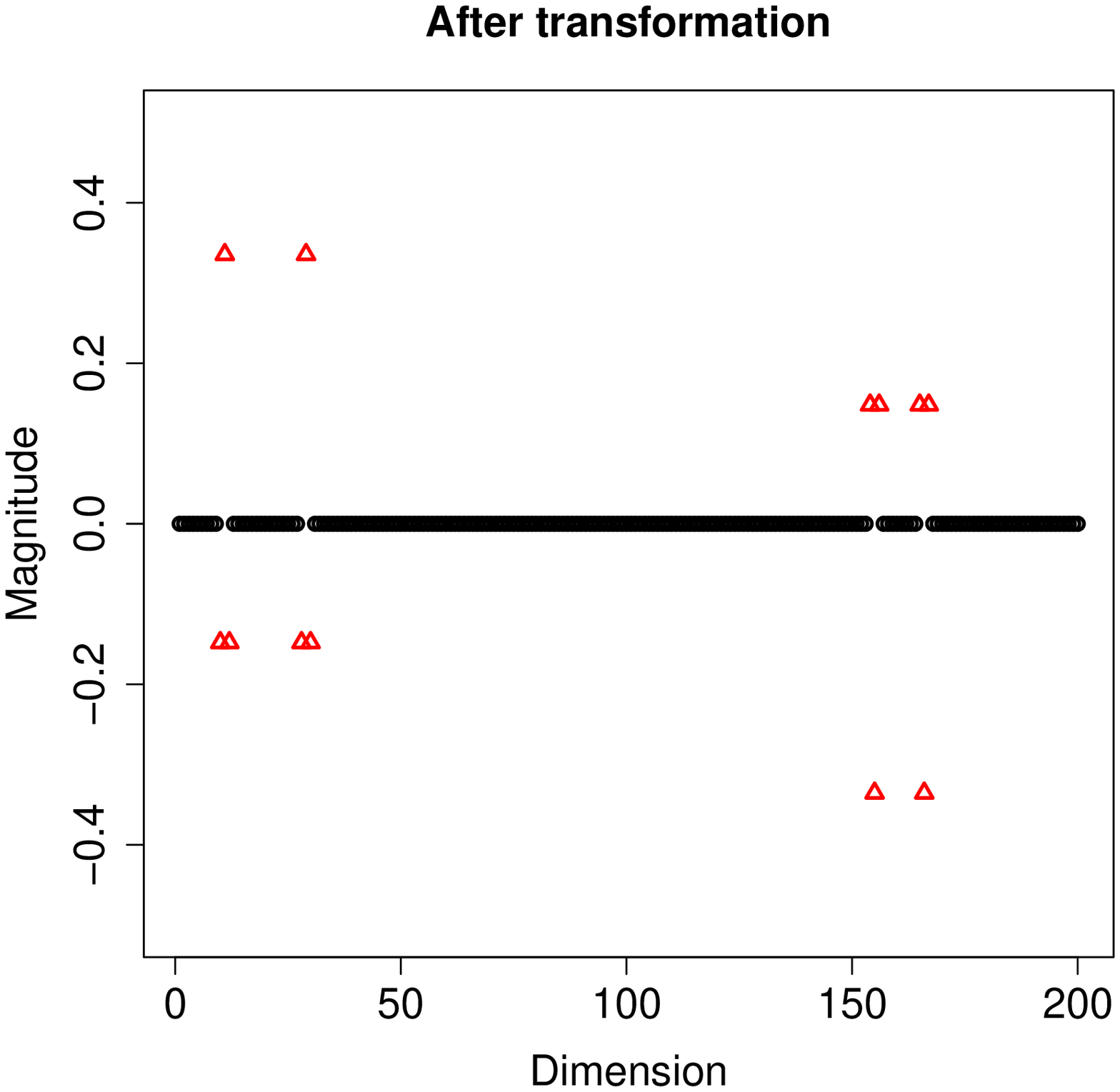}
\caption{Signals before (left panel) and after (right panel) the
linear transformation $\Gamma$. The non-zero entries are denoted by
\Zhang{$\bigtriangleup$}. }\label{fig:sig}
\end{figure}
If the \emph{original} mean $\theta$ contains exactly $k$ nonzero
components with relatively strong signals, it seems reasonable to
expect that $T_n(k)$ outperforms $T_n(1)$.
Interestingly, we shall show that the test statistic $T_n(k)$ is
closely related with the likelihood ratio (LR) test for testing
$H_0$ against a sparse alternative on $\theta$.

Our derivation also provides insight on some methods in the
literature. In particular, we show that the data transformation
based on the precision matrix proposed in Cai et al.(2014) can be
derived explicitly using the maximum likelihood principle when
$\Theta_a$ is the space of vectors with exactly one nonzero
component. We also reveal a connection between $T_n(k)$ and the thresholding test in Fan
(1996).

The rest of the paper is organized as follows. Adopting the maximum
likelihood viewpoint, we develop a new class of tests named maximum
sum-of-squares tests in Section \ref{sec:m}. Section \ref{sec:power}
presents a power analysis. In Section
\ref{sec:fes}, we introduce the feasible testing procedure by
replacing the precision matrix by its estimator. A simulation-based
approach is proposed to approximate the sampling distribution of the
test. We describe a modified testing procedure in Section
\ref{sec:k}. Section \ref{sec:theory} presents some theoretical
results based on the Gaussian approximation theory for high
dimensional vector. We extend our main results to the two sample
problem in Section \ref{sec:two}. Section \ref{sec:sim} reports some
numerical results. Section \ref{sec:con} concludes. The technical
details are deferred to the appendix.

\emph{Notation.} For a vector $a=(a_1,\dots,a_p)'$ and $q>0$, define
$|a|_q=(\sum^{p}_{i=1}|a_i|^q)^{1/q}$ and $|a|_{\infty}=\max_{1\leq
j\leq p}|a_j|$. Set $|\cdot|=|\cdot|_2$. Denote by $||\cdot||_0$ the
$l_0$ norm of a vector or the cardinality of a set. For
$C=(c_{ij})^{p}_{i,j=1}\in \mathbb{R}^{p\times p}$, define
$||C||_1=\max_{1\leq j\leq p}\sum^{p}_{i=1}|c_{ij}|$,
$||C||_2=\max_{|a|=1}|Ca|$ and $||C||_{\infty}=\max_{1\leq i,j\leq
p}|c_{ij}|$. Denote by $\text{diag}(C)$ the diagonal matrix
$\text{diag}(c_{11},c_{22},\dots,c_{pp})$. The notation
$N_p(\theta,\Sigma)$ is reserved for the $p$-variate multivariate
normal distribution with mean $\theta$ and covariance matrix
$\Sigma$.

\section{Main results}
\subsection{Likelihood ratio test}\label{sec:m}
Let $X_i=(x_{i1},\dots,x_{ip})'$ be a sequence of i.i.d $N_p(\theta,\Sigma)$ random
vectors with $\Sigma=(\sigma_{ij})^{p}_{i,j=1}$. We are interested in testing
\begin{equation}\label{prob}
H_0: \theta\in \Theta_0=\{0_{p\times 1}\} \quad \text{versus} \quad
H_{a}: \theta\in \Theta_a\subseteq \Theta_0^c.
\end{equation}
Let $\Theta_{a,k}=\{b\in\mathbb{R}^p:||b||_0=k\}$, where
$||\cdot||_0$ denotes the $l_0$ norm of a vector. Notice that
$\Theta_a\subseteq \Theta_0^c=\cup_{k=1}^p\Theta_{a,k}.$ A practical
challenge for conducting high dimensional testing is the
specification of the alternative space $\Theta_a,$ or in another
word, the direction of possible violation from the null hypothesis.

Hypothesis testing for high-dimensional mean has received
considerable attention in recent literature, see e.g. Srivastava and
Du (2008), Srivastava (2009), Chen and Qin (2010), Lopes et al.
(2009), Cai et al. (2014), Gregory et al. (2015) among others. Although existing testing
procedures are generally designed for a particular type of
alternatives, the alternative space is not often clearly specified.
In this paper, we shall study (\ref{prob}) with the alternative
space $\Theta_a$ stated in a more explicit way (to be more precise, it is stated in terms of the $l_0$ norm). By doing so, one can
derive the test which targets for a particular type of alternative.
This formulation also sheds some light on some existing tests.
To motivate the subsequent derivations, we consider the following
problem
\begin{equation}\label{prob2}
H_0:\theta\in\Theta_0\quad \text{versus}\quad
H_{a,k}:\theta\in\Theta_{a,k}.
\end{equation}
Given the covariance matrix $\Sigma$ or equivalently the precision
matrix $\Gamma=(\gamma_{ij})^{p}_{i,j=1}:=\Sigma^{-1}$, we shall
develop a testing procedure based on the maximum likelihood
principle. Under the Gaussian assumption, the negative
log-likelihood function (up to a constant) is given by
\begin{align*}
l_n(\theta)=\frac{1}{2}\sum^{n}_{i=1}(X_i-\theta)'\Gamma(X_i-\theta).
\end{align*}
The maximum likelihood estimator (MLE) under $H_{a,k}$ is defined as
\begin{equation}
\hat{\theta}=\arg\min_{\theta\in
\Theta_{a,k}}\sum^{n}_{i=1}(X_i-\theta)'\Gamma(X_i-\theta).
\end{equation}
To illustrate the idea, we first consider the case $\Gamma=I_p$,
i.e., the components of $X_i$ are i.i.d $N(0,1)$. It is
straightforward to see that the $k$ nonzero components of
$\hat{\theta}$ are equal to
$\bar{x}_{j_s^*}=\sum^{n}_{i=1}x_{ij_s^*}/n$, where
$$(j_1^*,j_2^*,\dots,j_k^*)=\underset{1\leq j_1<j_2<\cdots<j_k\leq p}{\arg\max}\sum^{k}_{s=1}\bar{x}_{j_s}^2.$$
Although the maximum is taken over $\binom{n}{k}$ possible sets, it
is easy to see that $j_1^*,\dots,j_k^*$ are just the indices
associated with the $k$ largest $|\bar{x}_j|$, i.e., we only need to
sort $|\bar{x}_j|$ and pick the indices associated with the $k$
largest values. In this case, the LR test (with known $\Gamma$) can
be written as maximum of sum-of-squares, i.e.,
\begin{align*}
LR_n(k)=n\max_{1\leq j_1<j_2<\cdots<j_k\leq
p}\sum^{k}_{s=1}\bar{x}_{j_s}^2.
\end{align*}
The LR test is seen as a combination of the maximum type test and
the sum-of-squares type test and it is designed to optimize the
power for testing $H_0$ against $H_{a,k}$ with $k\geq 1.$ The two
extreme cases are $k=1$ (the sparsest alternative) and $k=p$ (the
densest alternative). In the former case, we have
$LR_n(1)=n|\bar{X}|^2_{\infty}=n\max_{1\leq j\leq p}|\bar{x}_j|^2$,
while in the latter case,
$LR_n(p)=n|\bar{X}|^2=n\sum^{p}_{j=1}|\bar{x}_j|^2$, where
$\bar{X}=\sum^{n}_{i=1}X_i/n=(\bar{x}_{1},\dots,\bar{x}_p)'$.

We note that an alternative expression for the LR test is given by
\begin{align}\label{eq:sparse}
Thred_n(\delta)=n\sum^{p}_{j=1}\bar{x}_j^2\mathbf{1}\{|\bar{x}_j|>\delta\},
\end{align}
for some $\delta>0,$ where $\mathbf{1}\{\cdot\}$ denotes the
indicator function. Thus $LR_n(k)$ can also be viewed as a
thresholding test [see Donoho and Johnstone (1994); Fan (1996)]. In
this paper, we focus on the regime of very sparse (e.g. the number
of nonzero entries grows slowly with $n$) while strong signals (e.g.
the cumulative effect of the nonzero entries of $\theta$ is greater
than $\sqrt{2k\log (p)/n}$), and choose
$\delta=|\bar{x}_{j_{k+1}^*}|$ with $|x_{j_1^*}|\geq |x_{j_2^*}|\geq
\cdots|x_{j_p^*}|$ in (\ref{eq:sparse}) (assuming that
$|x_{j_k^*}|>|x_{j_{k+1}^*}|$). For weaker but denser signals, Fan
(1996) suggested the use of $\delta=\sqrt{2\log(pa_p)/n}$ for
$a_p=c_1(\log p)^{-c_2}$ with $c_1,c_2>0.$ A more delicate regime is
where the signals are weak so that they cannot have a visible effect
on the upper extremes, e.g., the strength of signals is
$\sqrt{2r\log(p)/n}$ for $r\in (0,1)$. In this case, the signals and
noise may be almost indistinguishable. To tackle this challenging
problem, the thresholding test with $\delta=\sqrt{2s\log (p)/n}$ for
$s\in (0,1)$ was recently considered in Zhong et al. (2013). And a
second level significance test by taking maximum over a range of
significance levels (the so-called Higher Criticism test) was used
to test the existence of any signals [Donoho and Jin (2004)].

It has been shown in the literature that incorporating the
componentwise dependence helps to boost the power of the testing
procedure [Hall and Jin (2010); Cai et al. (2014); Chen et al.
(2014)]. Below we develop a general test which takes the advantage
of the correlation structure contained in $\Gamma.$ We first
introduce some notation. For a set $S\subseteq \{1,2,\dots,p\}$, let
$\Gamma_{S,S}$ be the submatrix of $\Gamma$ that contains the rows
and columns in $S$. Similarly we can define $\Gamma_{S,-S}$ with the
rows in $S$ and the columns in $\{1,2,\dots,p\}\setminus S$. Further
let $\theta_S=(\theta_j)_{j\in S}$, $\bar{X}_S=(\bar{x}_j)_{j\in S}$
and $\bar{X}_{-S}=(\bar{x}_j)_{j\in \{1,2,\dots,p\}\setminus S}$.
For a set $S$ with $||S||_{0}=k$, we consider the following
optimization problem,
\begin{align*}
&\max_{\theta\in\mathbb{R}^p:\theta_j=0,j\notin
S}\left(\theta'\Gamma\bar{X}-\frac{1}{2}\theta'\Gamma\theta\right)
=\max_{\theta_S\in\mathbb{R}^{k}}\left\{
\theta_S'(\Gamma_{S,S}\bar{X}_S+\Gamma_{S,-S}\bar{X}_{-S})-\frac{1}{2}\theta'_S\Gamma_{S,S}\theta_S\right\}.
\end{align*}
The solution to the above problem is
$\theta_S=\Gamma_{S,S}^{-1}(\Gamma_{S,S}\bar{X}_S+\Gamma_{S,-S}\bar{X}_{-S})$
with the corresponding maximized value equal to
$(\Gamma_{S,S}\bar{X}_S+\Gamma_{S,-S}\bar{X}_{-S})'\Gamma_{S,S}^{-1}(\Gamma_{S,S}\bar{X}_S+\Gamma_{S,-S}\bar{X}_{-S})/2$.
Based on the above derivation, the LR test for testing $H_0$ against
$H_{a,k}$ is given by
\begin{align*}
LR_n(k)=&\max_{\theta\in
\Theta_{a,k}}\sum^{n}_{i=1}\left\{X_i'\Gamma
X_i-(X_i-\theta)'\Gamma(X_i-\theta)\right\}=2n\max_{\theta\in
\Theta_{a,k}}\left(\theta'\Gamma\bar{X}-\frac{1}{2}\theta'\Gamma\theta\right)
\\=&2n\max_{S:||S||_{0}=k}\max_{\theta_S\in\mathbb{R}^{k}}\left\{
\theta_S'(\Gamma_{S,S}\bar{X}_S+\Gamma_{S,-S}\bar{X}_{-S})-\frac{1}{2}\theta'_S\Gamma_{S,S}\theta_S\right\}
\\=&n\max_{S:||S||_{0}=k}(\Gamma_{S,S}\bar{X}_S+\Gamma_{S,-S}\bar{X}_{-S})'\Gamma_{S,S}^{-1}(\Gamma_{S,S}\bar{X}_S+\Gamma_{S,-S}\bar{X}_{-S}).
\end{align*}
%where the maximizer is $\theta_S=\Gamma_{S,S}^{-1}(\Gamma_{S,S}\bar{X}_S+\Gamma_{S,-S}\bar{X}_{-S}).$
Letting $Z=(z_1,\dots,z_p)'=\Gamma \bar{X}$, a simplified
expression is then given by
$$LR_n(k)=n\max_{S:||S||_{0}=k}Z_S'\Gamma_{S,S}^{-1}Z_S.$$
It is worth pointing out that $LR_n(k)$ is indeed the LR test for testing
$$H_0:\theta\in\Theta_0 \quad \text{against}\quad  H_{a,1:k}: \theta\in
\cup_{j=1}^k\Theta_{a,j},$$ because
$\hat{\theta}=\text{argmin}_{\theta\in
\cup^{k}_{j=1}\Theta_{a,j}}\sum^{n}_{i=1}(X_i-\theta)'\Gamma(X_i-\theta)=\text{argmin}_{\theta\in
\Theta_{a,k}}\sum^{n}_{i=1}(X_i-\theta)'\Gamma(X_i-\theta).$ As an
illustration, we consider the following two examples.

\begin{example}[Sparsest case]
{\rm When $k=1$, we have
\begin{align*}
LR_n(1)=n\max_{1\leq j\leq p}\frac{|z_j|^2}{\gamma_{jj}},
\end{align*}
which has been recently considered in Cai et al. (2014) in the two
sample problem. Cai et al. (2014) pointed out that ``the linear
transformation $\Gamma X_i$ magnifies the signals and the number of
the signals owing to the dependence in the data''. Although a rigorous theoretical justification was provided in Cai et al. (2014), the linear transformation
based on $\Gamma$ still seems somewhat mysterious. Here we ``rediscover'' the test from a different
perspective.
%Because our derivation is based on the maximum
%likelihood principle, it partly explains why test based on $\{Z_i\}$
%has good power property under sparse alternatives.
}
\end{example}

\begin{example}[Densest case]
{\rm To test against the dense alternative $H_{a,p},$ one may
consider
\begin{align*}
LR_n(p)=n\bar{X}'\Gamma\bar{X}=n\sum_{i,j=1}^p
\bar{x}_i\bar{x}_j\gamma_{ij},
\end{align*}
or its $U$-statistic version,
\begin{align*}
LR_{n,U}(p)=\frac{1}{n-1}\sum_{i,j=1}^p \gamma_{ij}\sum_{k\neq
l}x_{ki}x_{lj}.
\end{align*}
In view of the results in Chen and Qin (2010), the asymptotic behavior of such test is expected to be very
different from $LR_n(k)$ with relatively small $k$. A serious investigation
for this test is beyond the scope of the current paper. }
\end{example}

%The specification of the alternative space $\Theta_{a}$ is usually vague in the literature.
We note that the test statistic $LR_n(k)$ involves taking
maximization over $\binom{p}{k}$ tuples $(j_1,\dots,j_k)$ with
$1\leq j_1<\cdots<j_k\leq p$, which can be computationally very
intensive if $p$ is large. To reduce the computational burden, we
consider the following modified test by replacing $\Gamma_{S,S}$
with $\text{diag}(\Gamma_{S,S})$ which contains only the diagonal
elements of $\Gamma_{S,S}$. With this substitution, we have
\begin{align*}
T_n(k)=&n\max_{S:||S||_{0}=k}Z_S'\text{diag}^{-1}(\Gamma_{S,S})Z_S
=n\max_{1\leq j_1<j_2\cdots<j_k\leq
p}\sum^{k}_{l=1}\frac{|z_{j_l}|^2}{\gamma_{j_l,j_l}}.
\end{align*}
To compute the modified statistic, one only needs to sort the values
$|z_{j_l}|^2/\gamma_{j_l,j_l}$ and find the indices corresponding to
the $k$ largest ones, say $j_1^*,j_2^*,\dots,j_k^*$. Then the test
statistic can be computed as
$$T_n(k)=n\sum^{k}_{l=1}\frac{|z_{j_l^*}|^2}{\gamma_{j_l^*,j_l^*}}.$$
Therefore, the computation cost for $T_n(k)$ with $k>1$ is
essentially the same as $LR_n(1)$. By the matrix inversion formula
$\Gamma_{-j,j}=-\Sigma_{-j,-j}^{-1}\Sigma_{-j,j}\gamma_{jj}$,
$z_j/\gamma_{jj}=\bar{x}_j-\Sigma_{j,-j}\Sigma_{-j,-j}^{-1}\bar{X}_{-j}$.
From the above derivation, we note that $n|z_{j}|^2/\gamma_{jj}$ can
be interpreted as the likelihood ratio test for testing $\theta_j=0$
given that $\theta_{k}=0$ for $k\neq j$. This strategy is
conceptually simple and can be conveniently implemented in practice.
Also it can be generalized to other parametric models.

\begin{remark}
{\rm
One may employ the so-called graph-assisted procedure [see e.g. Jin et al. (2014); Ke et al. (2014)] to circumvent the NP hard problem in the definition of $LR_n(k)$. Under the Gaussian assumption, $\Gamma$ defines a graph $(V,E)$ in terms of conditional (in)dependence, that is the nodes $i$
and $j$ are connected if and only if $\gamma_{ij}\neq 0.$ Let $J(1),\dots,J(q_0)$ be all the connected components of $(V,E)$ with size less or equal to $k$.
Then an alternative test statistic can be defined as,
\begin{align}\label{graph}
LR_{graph,n}(k)=n\max\sum^{k_0}_{i=1}Z_{J(j_i)}'\Gamma_{J(j_i),J(j_i)}^{-1}Z_{J(j_i)},
\end{align}
where the maximization is over all $\{j_1,\dots,j_{k_0}\}\subseteq \{1,2,\dots,q_0\}$ such that $\sum^{k_0}_{i=1}||J(j_i)||_0\leq k.$
Under suitable assumptions on $\Gamma$, it was shown in Jin et al. (2014) that the number of all connected components with size less or equal to $k$ is of the order $O(p)$ (up to a
multi-$\log(p)$ term). Greedy algorithm can be used to list all the sub-graphs. Note that $T_n(k)$ corresponds to the case where $J(j)=\{j\}$ for $1\leq j\leq p.$
Thus $LR_{graph,n}(k)$ could be viewed as a generalized version of $T_n(k)$ with the ability to explore the dependence in $Z$ via the connected
components of $(V,E)$.
%In practice, the price one needs to pay is additional regularization step and the choice of tuning parameters in estimating the subgraphs.
}
\end{remark}

\begin{remark}
{\rm
Another strategy is to use marginal thresholding to screen out a large number of irrelevant variables [see e.g. Fan et al. (2015)]. Specifically, define $\mathcal{M}=\{1\leq j\leq p:|z_{j}|/\sqrt{\gamma_{j,j}}>\delta\}$, where $\delta$ is a proper threshold.
A test statistic can be defined as
\begin{align*}
LR_{threshold,n}(k)=&n\max_{S\subseteq \mathcal{M}:||S||_{0}=k}Z_S'\Gamma_{S,S}^{-1}Z_S,
\end{align*}
where the optimization can be solved for $\mathcal{M}$ with relatively small size using greedy algorithm.
}
\end{remark}

%\begin{remark}
%{\rm When the diagonal elements of $\Gamma$ are equal, one can
%consider the nonstudentized defined as
%$\widehat{LR}_{nst,n}(k)=n\max_{1\leq j_1<j_2\cdots<j_k\leq
%n}\sum^{k}_{l=1}|\bar{z}_{j_l}|^2.$}
%\end{remark}

\subsection{Power analysis}\label{sec:power}
To better understand the power performance of $T_n(k)$, we present
below a small numerical study. Let $W=(w_1,w_2,\dots,w_p)\sim
N_p(0,\text{diag}^{-1/2}(\Gamma)\Gamma\text{diag}^{-1/2}(\Gamma))$
and recall that
$\widetilde{\theta}=(\widetilde{\theta}_1,\dots,\widetilde{\theta}_p)'$
with $\widetilde{\theta}_j=(\Gamma\theta)_j/\sqrt{\gamma_{jj}}$ from
the introduction. Define $T^W(k;\widetilde{\theta})=\max_{1\leq
j_1<j_2<\cdots j_k\leq
p}\sum^{k}_{l=1}(w_{j_l}+\sqrt{n}\widetilde{\theta}_{j_l})^2.$ It is obvious
that $T_n(k)=^d T^W(k;\widetilde{\theta})$, where ``$=^d$'' means
equal in distribution. Denote by $C_k(\alpha)$ the $100(1-\alpha)$th
quantile of $T^W(k;0)$, which can be obtained via simulation (in our study, $C_k(\alpha)$ is estimated via $100000$ simulation runs). Define
the power function
$$\mathcal{P}(k,\alpha,\widetilde{\theta},\Gamma)=P(T^W(k;\widetilde{\theta})>C_k(\alpha)).$$
We focus on the AR(1) covariance structure
$\Sigma=(\sigma_{i,j})^{p}_{i,j=1}$ with $\sigma_{i,j}=0.6^{|i-j|}$
and $\Gamma=\Sigma^{-1}$. The mean vector $\theta$ is assumed to
contain $k_0$ nonzero components with the same magnitude
$\sqrt{2r\log(p)/n}$, where the locations of the nonzero components
are drawn without replacement from $\{1,2,\dots,p\}$. Figure
\ref{fig:power} presents the power function
$\mathcal{P}(k,0.05,\widetilde{\theta},\Gamma)$ as a curve of $r$
which determines the signal strength, where $k_0=1,5,10,20$, and
$p=200,1000.$ These results are consistent with our statistical
intuition. In particular, for $k_0>1$, the power of $T_n(k)$ with
$k>1$ dominates the power of $T_n(1)$. Therefore from the power
consideration, when $\theta$ contains more than one nonzero
component, it seems beneficial to use $T_n(k)$ with $k>1$. This is
further confirmed in our simulation studies, see Section
\ref{sec:sim}.

\subsection{Feasible test}\label{sec:fes}
We have so far focused on the oracle case in which the precision
matrix is known. However, in most applications $\Gamma$ is unknown
and thus needs to be estimated. Estimating the precision matrix has
been extensively studied in the literature in recent years [see e.g.
Meinshausen and B\"{u}hlmann (2006); Bickel and Levina (2008a;
2008b); Friedman et al. (2008); Yuan (2010); Cai and Liu (2011); Cai
et al. (2011); Liu and Wang (2012); Sun and Zhang (2013)].

When $\Gamma$ is known to be banded or bandable, one can employ the
banding method based on the cholesky decomposition [Bickel and
Levina (2008a)] to estimate $\Gamma$. For sparse precision matrix
without knowing the banding structure, the nodewise Lasso and its
variants [Meinshausen and B\"{u}hlmann (2006); Liu and Wang (2012);
Sun and Zhang (2013)] or the constrained $l_1$-minimization for
inverse matrix estimation (CLIME) [Cai et al. (2011)] can be used to
estimate $\Gamma$.

In this paper, we use the nodewise Lasso regression to estimate the
precision matrix $\Gamma$ [Meinshausen and B\"{u}hlmann (2006)], but
other estimation approaches can also be used as long as the
resulting estimator satisfies some desired properties [see
(\ref{e1})-(\ref{e3})]. Let
$\widetilde{\mathbf{X}}:=(\widetilde{X}_1,\widetilde{X}_2,\dots,\widetilde{X}_n)'\in\mathbb{R}^{n\times
p}$ with $\widetilde{X}_i=X_i-\bar{X}$. Let
$\widetilde{\mathbf{X}}_{-j}$ be the $n\times (p-1)$ matrix without
the $j$th column. For $j=1,2,\dots,p$, consider
\begin{equation}
\widehat{\gamma}_j=\arg\min_{\gamma\in\mathbb{R}^{p-1}}(|\widetilde{X}_j-\widetilde{\mathbf{X}}_{-j}\gamma|^2/n+2\lambda_j|\gamma|_1),
\end{equation}
with $\lambda_j>0$, where $\widehat{\gamma}_j=\{\widehat{\gamma}_{jk}:~1\leq k\leq
p,~k\neq j\}$. Let
$\widehat{C}=(\widehat{c}_{ij})_{i,j=1}^{p}$ be a $p\times p$
matrix with $\widehat{c}_{ii}=1$ and
$\widehat{c}_{ij}=-\widehat{\gamma}_{ij}$ for $i\neq j$. Let
$\widehat{\tau}_j^2=|\widetilde{X}_j-\widetilde{\mathbf{X}}_{-j}\widehat{\gamma}_j|^2/n+\lambda_j|\widehat{\gamma}_j|_1$
and write
$\widehat{T}^2=\text{diag}(\widehat{\tau}^2_1,\dots,\widehat{\tau}^2_p)$
as a diagonal matrix. The nodewise Lasso estimator for
$\Gamma$ is constructed as
$$\widehat{\Gamma}=\widehat{T}^{-2}\widehat{C}.$$

\begin{remark}
{\rm We note that $\widehat{\Gamma}$ does not have to be symmetric.
As in Yuan (2010), we can improve them by using a symmetrization
step
$$\widetilde{\Gamma}=\underset{A:A=A'}{\arg\min}||A-\widehat{\Gamma}||_1,$$
which can be solved by linear programming. It is obvious that $\widetilde{\Gamma}$ is symmetric, but
not guaranteed to be positive-definite. Alternatively, semi-definite programming, which is somewhat more expensive
computationally, can be used to produce a nonnegative-definite $\widetilde{\Gamma}$.
}
\end{remark}

Given a suitable precision matrix estimator
$\widehat{\Gamma}=(\widehat{\gamma}_{ij})^{p}_{j=1}$ (e.g. obtained via nodewise Lasso), our feasible
test can be defined by replacing $\Gamma$ with its estimator, i.e.,
%\begin{align*}
%LR_{fe,n}(k)=&n\max_{S:||S||_{0}=k}\widehat{Z}_S'\widehat{\Gamma}_{S,S}^{-1}\widehat{Z}_S,
%\end{align*}
%and
\begin{align*}
T_{fe,n}(k)=&n\max_{1\leq j_1<j_2\cdots<j_k\leq p}\sum^{k}_{l=1}\frac{|\widehat{z}_{j_l}|^2}{\widehat{\gamma}_{j_l,j_l}},
\end{align*}
where
$\widehat{Z}=\widehat{\Gamma}\bar{X}=(\widehat{z}_{1},\dots,\widehat{z}_p)'.$
%As $\theta$ is sparse under alternatives, we consider the hard thresholding estimator given by
%$$\theta_H=(\theta_{H,1},\dots,\theta_{H,p})',\quad \theta_{H,j}=\bar{x}_j\mathbf{1}\left\{\frac{n|\bar{x}_j|^2}{\hat{\sigma}^2_j}>2\log p\right\},$$
%where $\hat{\sigma}_j^2$ is the sample variance for
%$\{x_{ij}\}^{n}_{i=1}$.
%In particular, we consider the banding method proposed in Bickel and
%Levina (2008a) to estimate $\Gamma$. Let
%$\tilde{x}_{ij}=x_{ij}-\bar{x}_{j}.$ For $2\leq j\leq p$, regress
%$\tilde{x}_{ij}$ on
%$\tilde{x}_{i,j-\tau:j-1}=(\tilde{x}_{i,j-\tau},\dots,\tilde{x}_{i,j-1})'$
%and let $\hat{a}_{j,\tau}$ be the regression coefficients, where
%$\tau$ is the banding parameter that can be chosen via
%cross-validation. Let $d_1^2=\sum^{n}_{i=1}\tilde{x}_{i1}^2/n$ and
%$d_j^2=\sum^{n}_{i=1}(\tilde{x}_{ij}-\hat{a}_{j,\tau}'\tilde{x}_{i,j-\tau:j-1})^2/n$
%for $2\leq j\leq p$. Put
%$(0_{j-\tau-1}',\hat{a}_{j,\tau}',0_{p-j+1}')$ be the $j$th row of a
%lower triangular matrix $\hat{A}_{\tau}$ and
%$\hat{D}_{\tau}=\text{diag}(d_1^2,\dots,d_p^2)$. The estimator of
%$\Gamma$ is then given by
%$$\widehat{\Gamma}_{\tau}=(I-\hat{A}_{\tau})'\hat{D}^{-1}_{\tau}(I-\hat{A}_{\tau}).$$
Under suitable assumptions, it has been shown in Cai et al. (2014)
that $T_{fe,n}(1)$ converges to an extreme distribution of Type I.
To mimic the sampling distribution of $T_{fe,n}(k)$ for $k\geq 1$
under sparsity assumption, we propose a simulation-based approach which is related with the multiplier bootstrap approach in Chernozhukov et al. (2015).
The procedure can be described as follows:
\begin{enumerate}
\item Estimate $\Gamma$ by $\widehat{\Gamma}$ using a suitable regularization method.
\item Generate $\widehat{Z}^*=(\widehat{z}_{1}^*,\dots,\widehat{z}_p^*)'=\widehat{\Gamma}\sum^{n}_{i=1}(X_i-\bar{X})e_i/n$, where $e_i\sim^{i.i.d} N(0,1)$ are independent of the sample.
\item Compute the simulation-based statistic as
\begin{align*}
T_{fe,n}^*(k)=n\max_{1\leq j_1<j_2\cdots<j_k\leq
p}\sum^{k}_{l=1}\frac{|\widehat{z}_{j_l}^*|^2}{\widehat{\gamma}_{j_l,j_l}}.
\end{align*}
\item Repeat Steps 2-3 several times to get the $1-\alpha$ quantile of
$T_{fe,n}^*(k),$ which serves as the simulation-based critical
value.
\end{enumerate}

\subsection{Choice of $k$ and a modified test}\label{sec:k}
%The choice of $k$ requires prior knowledge on the alternative space.
In this subsection, we propose a data dependent method for choosing
$k$ which is motivated from the power consideration. Consider the
Hotelling's $T^2$ test $T^2_n:=n\bar{X}'\widehat{S}^{-1}\bar{X}$
with $\widehat{S}$ being the sample covariance matrix. Bai and
Saranadasa (1996) showed that the asymptotic power function for the
Hotelling's $T^2$ test under $p/n\rightarrow b\in(0,1)$ has the form
\begin{align}\label{power}
\Phi\left(-z_{1-\alpha}+\sqrt{\frac{n(n-p)}{2p}}\theta'\Gamma\theta\right),
\end{align}
where $\Phi$ is the distribution function of $N(0,1)$ and
$z_{1-\alpha}$ is the $1-\alpha$ quantile of $N(0,1)$. Intuitively,
for a set $S$ with $||S||_0=k$ and $k<n$, one may expect that the
asymptotic power function of $nZ_S'\Gamma_{S,S}^{-1}Z_S$
is determined by the term,
$$\sqrt{\frac{n-k}{2k}}(\Gamma\theta)'_S\Gamma^{-1}_{S,S}(\Gamma\theta)_S.$$
For known $\Gamma$, we note that
$Z_S'\Gamma^{-1}_{S,S}Z_S-\frac{k}{n}$ is an unbiased estimator for
$(\Gamma\theta)'_S\Gamma^{-1}_{S,S}(\Gamma\theta)_S$. From the power
consideration, a natural test statistic can be defined as
\begin{align*}
\max_{1\leq k\leq M}\max_{||S||_0=k}\sqrt{\frac{n-k}{2k}}\left(Z_S'\Gamma^{-1}_{S,S}Z_S-\frac{k}{n}\right),
\end{align*}
where $M$ is an upper bound. By replacing $\Gamma^{-1}_{S,S}$ with
$\text{diag}^{-1}(\Gamma_{S,S})$, a computational feasible testing
procedure is then given by
$$\widetilde{T}_{n}(M)=\max_{1\leq k\leq M}\max_{||S||_0=k}\sqrt{\frac{1-k/n}{2k}}\left(nZ_S'\text{diag}^{-1}(\Gamma_{S,S})Z_S-k\right)
=\max_{1\leq k\leq M}\sqrt{\frac{1-k/n}{2k}}\left(T_n(k)-k\right).$$
Substituting $\Gamma$ with $\widehat{\Gamma}$, we therefore propose the
following test
$$\widetilde{T}_{fe,n}(M)=\max_{1\leq k\leq M}\sqrt{\frac{1-k/n}{2k}}\left(T_{fe,n}(k)-k\right).$$
To approximate its sampling distribution, we suggest the following modified
simulation-based statistic in Step 3 above,
\begin{align*}
\widetilde{T}_{fe,n}^*(M)=\max_{1\leq k\leq
M}\sqrt{\frac{1-k/n}{2k}}\left(n\max_{1\leq j_1<j_2\cdots<j_k\leq
p}\sum^{k}_{l=1}\frac{|\widehat{z}_{j_l}^*|^2}{\widehat{\gamma}_{j_l,j_l}}-k\right),
\end{align*}
where
$\widehat{Z}^*=(\widehat{z}_{1}^*,\dots,\widehat{z}_p^*)'=\widehat{\Gamma}\sum^{n}_{i=1}(X_i-\bar{X})e_i/n$
with $e_i\sim^{i.i.d} N(0,1)$ that are independent of the sample.

%\subsection{A penalized likelihood approach}\label{sec:k2}
%Based on likelihood consideration, we propose an alternative approach, called penalized likelihood ratio test defined as
%\begin{equation}\label{a}
%\widehat{T}_{fe,n}(M)=\max_{1\leq k\leq M}\left\{T_{fe,n}(k)-\lambda(n)k\right\},
%\end{equation}
%where $\lambda(n)$ can be viewed as a penalty term for examining an additional (possibly nonzero) element. For example, when $\lambda(n)=\log(n)$, it corresponds to the
%Bayesian information criterion. Again we can use bootstrap to approximate its sampling distribution. The details are similar to those for $\widetilde{T}_{fe,n}(M)$.

\subsection{Theoretical results}\label{sec:theory}
In this subsection, we study the theoretical properties of the
proposed test and justify the validity of the simulation-based approach. To facilitate the derivations, we make
the following assumptions. Denote by $\lambda_{\min}(\Sigma)$ and
$\lambda_{\max}(\Sigma)$ the smallest and the largest eigenvalues of
$\Sigma$ respectively. Let $d=\max_{1\leq j\leq p}||\{\gamma_{jk}:
k\neq j, 1\leq k\leq p\}||_0$.

\begin{assumption}\label{ass1}
Suppose $\max_{1\leq j\leq p}\sigma_{j,j}<c_1$ and $c_2<\lambda_{\min}(\Sigma)$ for some $c_1,c_2>0$.
\end{assumption}

\begin{assumption}\label{ass2}
Suppose $d^2\log(p)/n=o(1)$.
\end{assumption}

Let $\widehat{\Sigma}=\sum^{n}_{i=1}(X_i-\bar{X})(X_i-\bar{X})'/n$.
Denote by $\widehat{\Gamma}_j$ and $\Gamma_j$ the $j$th rows of
$\widehat{\Gamma}$ and $\Gamma$ respectively.

\begin{proposition}\label{prop0}
Under Assumptions \ref{ass1}-\ref{ass2}, we have
\begin{align}
&\max_{1\leq j\leq p}|\widehat{\gamma}_{jj}-\gamma_{jj}|=O_p\left(\sqrt{\frac{d\log(p)}{n}}\right), \label{e1}
\\&\max_{1\leq j\leq p}|\widehat{\Gamma}_j-\Gamma_j|_1=O_p\left(d\sqrt{\frac{\log(p)}{n}}\right), \label{e2}
\\&||\widehat{\Gamma}\widehat{\Sigma}\widehat{\Gamma}'-\widehat{\Gamma}'||_{\infty}=O_p\left(\sqrt{\frac{d\log(p)}{n}}\right). \label{e3}
\end{align}
\end{proposition}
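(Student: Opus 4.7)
The plan is to treat this as a standard nodewise-Lasso estimation result in the spirit of Meinshausen--B\"uhlmann (2006) and van de Geer et al.\ (2014). Under Gaussianity the $j$-th nodewise regression takes the form $\widetilde{X}_j = \widetilde{\mathbf{X}}_{-j}\gamma_j^* + \eta_j$, where $\gamma_j^*=\Sigma_{-j,-j}^{-1}\Sigma_{-j,j}$, $\eta_j\sim N(0,\tau_j^2)$ with $\tau_j^2=1/\gamma_{jj}$, and $\eta_j\perp\widetilde{\mathbf{X}}_{-j}$. Assumption~\ref{ass1} gives $c_3\leq \tau_j^2\leq c_4$ and $|\gamma_j^*|_1=O(\sqrt{d})$ uniformly in $j$. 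Choosing $\lambda_j\asymp\sqrt{\log p/n}$ and combining Gaussian concentration for $\max_j|\widetilde{\mathbf{X}}_{-j}'\eta_j/n|_\infty$ with a restricted-eigenvalue condition for $\widetilde{\mathbf{X}}_{-j}'\widetilde{\mathbf{X}}_{-j}/n$ (which follows from $\lambda_{\min}(\Sigma_{-j,-j})\geq c_2$ plus sample concentration controlled by Assumption~\ref{ass2}), the standard Lasso oracle inequalities yield, uniformly over $j$,
\begin{equation*}
|\widehat{\gamma}_j-\gamma_j^*|_1=O_p(d\sqrt{\log p/n}),\qquad |\widetilde{\mathbf{X}}_{-j}(\widehat{\gamma}_j-\gamma_j^*)|_2^2/n=O_p(d\log p/n).
\end{equation*}

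For (\ref{e1}), I would expand $\widehat{\tau}_j^2=|\widetilde{X}_j-\widetilde{\mathbf{X}}_{-j}\widehat{\gamma}_j|_2^2/n+\lambda_j|\widehat{\gamma}_j|_1$. The residual piece equals $|\eta_j|_2^2/n+2\eta_j'\widetilde{\mathbf{X}}_{-j}(\gamma_j^*-\widehat{\gamma}_j)/n+|\widetilde{\mathbf{X}}_{-j}(\gamma_j^*-\widehat{\gamma}_j)|_2^2/n$; the first term is $\tau_j^2+O_p(\sqrt{\log p/n})$, while the cross and quadratic terms are $O_p(d\log p/n)$ by H\"older and the two master bounds above. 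The penalty piece satisfies $\lambda_j|\widehat{\gamma}_j|_1\leq \lambda_j(|\gamma_j^*|_1+|\widehat{\gamma}_j-\gamma_j^*|_1)=O_p(\sqrt{d\log p/n})$, which dominates. Since $\tau_j^2$ is bounded away from zero, $\widehat{\gamma}_{jj}=1/\widehat{\tau}_j^2$ gives (\ref{e1}). For (\ref{e2}), write $\widehat{\Gamma}_{jj}=\widehat{\gamma}_{jj}$ and $\widehat{\Gamma}_{jk}=-\widehat{\gamma}_{jj}\widehat{\gamma}_{jk}$ (and analogously for $\Gamma_j$) and decompose
\begin{equation*}
|\widehat{\Gamma}_j-\Gamma_j|_1\leq |\widehat{\gamma}_{jj}-\gamma_{jj}|(1+|\widehat{\gamma}_j|_1)+\gamma_{jj}|\widehat{\gamma}_j-\gamma_j^*|_1=O_p(\sqrt{d\log p/n}\cdot\sqrt{d})+O_p(d\sqrt{\log p/n}),
\end{equation*}
which is $O_p(d\sqrt{\log p/n})$, using $|\widehat{\gamma}_j|_1=O_p(\sqrt{d})$ from the triangle inequality.

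For (\ref{e3}), the key device is the KKT characterization of the nodewise Lasso. Letting $\widehat{c}_j$ denote the $j$-th row of $\widehat{C}$ viewed as a column vector, the fitted residual is $\widetilde{\mathbf{X}}\widehat{c}_j$, and the KKT conditions force $(\widehat{\Sigma}\widehat{c}_j)_k=\lambda_j\widehat{\kappa}_{jk}$ with $|\widehat{\kappa}_{jk}|\leq 1$ for $k\neq j$, while a direct calculation shows $(\widehat{\Sigma}\widehat{c}_j)_j=\widehat{\tau}_j^2$. Hence $\|\widehat{C}\widehat{\Sigma}-\widehat{T}^2\|_\infty\leq\max_j\lambda_j=O(\sqrt{\log p/n})$, and (\ref{e1}) combined with $\widehat{\Gamma}=\widehat{T}^{-2}\widehat{C}$ yields $\|\widehat{\Gamma}\widehat{\Sigma}-I_p\|_\infty=O_p(\sqrt{\log p/n})$. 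Writing $\widehat{\Gamma}\widehat{\Sigma}\widehat{\Gamma}'-\widehat{\Gamma}'=(\widehat{\Gamma}\widehat{\Sigma}-I_p)\widehat{\Gamma}'$ and bounding each entry by H\"older gives
\begin{equation*}
\|\widehat{\Gamma}\widehat{\Sigma}\widehat{\Gamma}'-\widehat{\Gamma}'\|_\infty\leq \|\widehat{\Gamma}\widehat{\Sigma}-I_p\|_\infty\cdot\max_k|\widehat{\Gamma}_k|_1=O_p(\sqrt{\log p/n})\cdot O_p(\sqrt{d})=O_p(\sqrt{d\log p/n}).
\end{equation*}
The main technical hurdle is establishing the restricted-eigenvalue/compatibility condition for $\widetilde{\mathbf{X}}_{-j}'\widetilde{\mathbf{X}}_{-j}/n$ uniformly over $j$, which is precisely where Assumption~\ref{ass2} ($d^2\log p/n=o(1)$) enters; once this uniform RE is available, the remaining calculations are bookkeeping built on the two master bounds and the sub-Gaussian tail estimates.
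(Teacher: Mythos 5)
Your proposal is correct and follows essentially the same route as the paper, which simply defers to the nodewise-Lasso analysis in Lemmas 5.3 and 5.4 of van de Geer et al.\ (2014); you have in effect reconstructed the details (the $\ell_1$/prediction oracle bounds, the expansion of $\widehat{\tau}_j^2$ with the penalty term dominating, and the KKT identity $(\widehat{\Sigma}\widehat{c}_j)_j=\widehat{\tau}_j^2$) that the paper omits. The only point worth adding is the one the paper explicitly flags: the regression here centers at $\bar{X}$ rather than at $\theta$, so $\eta_j$ is the centered noise vector, distributed as $N\bigl(0,\tau_j^2(I_n-\mathbf{1}\mathbf{1}'/n)\bigr)$ rather than i.i.d., but this rank-one perturbation does not affect any of the concentration bounds you invoke.
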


By the arguments in the proofs of Lemma 5.3 and Lemma 5.4 in van de Geer et al. (2014), we have (\ref{e1}), (\ref{e2}) and (\ref{e3}) hold
if $\widetilde{X}_i=X_i-\bar{X}$ is replaced by $X_i-\theta$ in the nodewise Lasso regression and $\widehat{\Sigma}$ is replaced by $\sum^{n}_{i=1}(X_i-\theta)(X_i-\theta)'/n$.
A careful inspection of their proofs shows that the conclusion remains valid when $\theta$ is replaced with $\bar{X}$. We omit the technical details here to conserve space.
We are now in position to present the main results in this
section. Define the quantity
$\phi(\Gamma;k)=\min_{|v|=1,||v||_0\leq k}v'\Gamma v$. Let $X^n_1=\{X_1,\dots,X_n\}$.

\begin{theorem}\label{thm}
Assume that $k^2d(\log(np))^{5/2}/\sqrt{n}=o(1)$ and
$\phi(\Gamma;k)>c$ for some positive constant $c$. Under
Assumptions \ref{ass1}-\ref{ass2} and $H_0$, we have
$$\sup_{t\geq 0}\left|P\left(T_{fe,n}^*(k)\leq t\bigg|X_1^n\right)-P\left(T_{fe,n}(k)\leq t\right)\right|=o_p(1).$$
\end{theorem}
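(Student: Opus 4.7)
The plan is to show that, under $H_0$, both $T_{fe,n}(k)$ and the conditional distribution of $T_{fe,n}^*(k)$ given $X_1^n$ are uniformly approximated in Kolmogorov distance by the law of the same pivot $T^G(k):=\max_{|S|=k}\sum_{j\in S}\xi_j^2/\gamma_{jj}$ with $\xi\sim N(0,\Gamma)$. A triangle-inequality argument together with an anti-concentration bound for $T^G(k)$ will yield the desired conclusion. Because $X_i$ are Gaussian and mean-zero under $H_0$, $\sqrt n\,z=\sqrt n\,\Gamma\bar X$ is exactly $N(0,\Gamma)$, so the oracle statistic $T_n(k):=n\max_{|S|=k}\sum_{j\in S}z_j^2/\gamma_{jj}$ already equals $T^G(k)$ in distribution and no high-dimensional CLT is required.

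Step 1 (reduction to oracle): writing $\widehat z_j=\Gamma_j'\bar X+(\widehat\Gamma_j-\Gamma_j)'\bar X$, the bound $|\bar X|_\infty=O_p(\sqrt{\log p/n})$ combined with Proposition \ref{prop0} yields $\max_j|\widehat z_j-z_j|=O_p(d\log p/n)$ and $\max_j|\widehat\gamma_{jj}-\gamma_{jj}|=O_p(\sqrt{d\log p/n})$. Expanding $\widehat z_j^2/\widehat\gamma_{jj}$ around $z_j^2/\gamma_{jj}$ and summing over $k$ coordinates gives $|T_{fe,n}(k)-T_n(k)|=O_p(kd(\log p)^{3/2}/\sqrt n)$, and the analogous argument conditional on the data---using that $\sum_i(X_i-\bar X)e_i/n\mid X_1^n\sim N(0,\widehat\Sigma/n)$ has sup-norm $O_p(\sqrt{\log p/n})$---controls $|T_{fe,n}^*(k)-T_n^*(k)|$, where $T_n^*(k):=n\max_{|S|=k}\sum_{j\in S}(z_j^*)^2/\gamma_{jj}$ is the oracle bootstrap analogue. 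The rate $k^2d(\log(np))^{5/2}/\sqrt n=o(1)$, combined with an anti-concentration estimate for $T^G(k)$ (derived by Nazarov-type arguments for the smoothed max of Gaussian quadratic forms, with $\phi(\Gamma;k)>c$ ensuring non-degeneracy of the underlying densities), is precisely what converts these pointwise perturbations into $o_p(1)$ Kolmogorov distance on both sides.

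Step 2 (Gaussian-to-Gaussian comparison for the bootstrap): conditional on $X_1^n$, $\sqrt n\,z^*\sim N(0,\Gamma\widehat\Sigma\Gamma)$, and combining (\ref{e3}) with the standard concentration $\|\widehat\Sigma-\Sigma\|_\infty=O_p(\sqrt{\log p/n})$ and the $\ell_1$-row sparsity of $\Gamma$ gives $\|\Gamma\widehat\Sigma\Gamma-\Gamma\|_\infty=o_p(1)$. A Chernozhukov--Chetverikov--Kato-type Gaussian comparison inequality, adapted from the max-of-linear-forms setting to the max of $\binom{p}{k}$ quadratic forms $\sum_{j\in S}G_j^2/\gamma_{jj}$, then converts closeness of covariances in $\|\cdot\|_\infty$-norm into Kolmogorov closeness of the two top-$k$-squared maxima, so that the conditional law of $T_n^*(k)$ is $o_p(1)$-close to the law of $T^G(k)$. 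Chaining this with Step 1 completes the proof.

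The hard part will be establishing the Gaussian comparison for the max of quadratic forms used in Step 2. I would either (i) run the Chernozhukov--Chetverikov--Kato Gaussian interpolation argument directly on a log-sum-exp smoothed max-of-quadratic-forms functional, explicitly bounding its second and third derivatives (where the combinatorial size $\binom{p}{k}\le p^k$ enters only through the smoothing parameter and so only logarithmically), or (ii) linearize via $\sum_{j\in S}y_j^2/\gamma_{jj}=\sup_{|u|=1}(u'\text{diag}^{-1/2}(\gamma_{jj})_{j\in S}y_S)^2$, discretize the unit sphere in $\mathbb{R}^k$ by an $\varepsilon$-net of size $(3/\varepsilon)^k$, and apply the standard max-of-sums Gaussian comparison to the resulting $p^k(3/\varepsilon)^k$ linear forms. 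Either route must absorb the $p^k$ combinatorial blow-up against the $\sqrt n$ rate, and the assumption $k^2d(\log(np))^{5/2}/\sqrt n=o(1)$ together with $\phi(\Gamma;k)>c$ (guaranteeing nondegenerate limit quadratic forms on the $k$-sparse cone) are exactly calibrated to cover this cost.
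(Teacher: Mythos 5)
Your proposal is correct and follows essentially the same route as the paper: a triangle-inequality reduction to the oracle statistic via Proposition \ref{prop0} (exploiting that the oracle statistic is exactly Gaussian under $H_0$, so no high-dimensional CLT is needed), anti-concentration via Nazarov's inequality, and a Chernozhukov--Chetverikov--Kato Gaussian-to-Gaussian comparison for the multiplier bootstrap with the covariance error $\Delta_n$ controlled by (\ref{e1})--(\ref{e3}). Your option (ii) for the ``hard part''---rewriting the top-$k$ sum of squares as a supremum of linear forms over $k$-sparse unit vectors and discretizing by an $\epsilon$-net of cardinality roughly $\binom{p}{k}(1+2/\epsilon)^k$, with the combinatorial factor entering only logarithmically---is precisely the device of Lemma \ref{lemma1} and the two-sided sandwich $A_1(t)\subseteq\mathcal{A}(t;k)\subseteq A_2(t)$ used in the paper's proof.
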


\begin{theorem}\label{thm2}
Assume that $M^4d(\log(np))^{5/2}/\sqrt{n}=o(1)$ and
$\phi(\Gamma;M)>c$ for some positive constant $c$. Under
Assumptions \ref{ass1}-\ref{ass2} with $k$ replaced by $M$ and
$H_0$, we have
$$\sup_{t_M\geq t_{M-1}\geq \cdots \geq t_1\geq 0}\left|P\left(\bigcap^{M}_{j=1}\left\{T_{fe,n}^*(j)\leq t_j\right\}\bigg|X_1^n\right)-P\left(\bigcap^M_{j=1}\left\{T_{fe,n}(j)\leq t_j\right\}\right)\right|=o_p(1).$$
As a consequence, we have
$$\sup_{t\geq 0}\left|P\left(\widetilde{T}_{fe,n}^*(M)\leq t\bigg|X_1^n\right)-P\left(\widetilde{T}_{fe,n}(M)\leq t\right)\right|=o_p(1).$$
\end{theorem}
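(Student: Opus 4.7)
The plan is to reduce the joint distributional comparison to a Gaussian approximation on $M$-sparsely convex sets, in the sense of Chernozhukov--Chetverikov--Kato (CCK), and then to deduce the second statement from the monotonicity of $k\mapsto T_{fe,n}(k)$.

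First, I rewrite the joint event in a sparsely convex form. Set $V=\sqrt{n}\,\mathrm{diag}^{-1/2}(\Gamma)\Gamma\bar{X}$ and $V^{\ast}=\sqrt{n}\,\mathrm{diag}^{-1/2}(\Gamma)\Gamma\sum_{i=1}^{n}(X_i-\bar{X})e_i/n$, the oracle (known-$\Gamma$) analogues of the feasible and bootstrap statistics; then $T_n(k)=\max_{|S|=k}|V_S|^2$, and
$$\bigcap_{j=1}^{M}\{T_n(j)\leq t_j\}=\Bigl\{V\in\bigcap_{k=1}^{M}\bigcap_{|S|=k}\{x\in\mathbb{R}^p:|x_S|^2\leq t_k\}\Bigr\}.$$
Each defining half-space depends on at most $M$ coordinates of $x$, so the set on the right is $M$-sparsely convex, and as $(t_1,\dots,t_M)$ ranges over the monotone cone the resulting family lies inside the class $\mathcal{A}_M^{\mathrm{spc}}$ of $M$-sparsely convex subsets of $\mathbb{R}^p$ for which CCK provide both a Gaussian CLT and an anti-concentration inequality.

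Second, I control the plug-in error so that $T_{fe,n}(k)$ (resp.\ $T_{fe,n}^{\ast}(k)$) can be replaced by $T_n(k)$ (resp.\ $T_n^{\ast}(k)$) uniformly over $k\leq M$. Writing $\widehat{z}_j-z_j=(\widehat{\Gamma}_j-\Gamma_j)\bar{X}$ and expanding $\widehat{\gamma}_{j,j}^{-1}-\gamma_{j,j}^{-1}$, Proposition~\ref{prop0} together with the standard Gaussian tail bound $|\bar{X}|_{\infty}=O_p(\sqrt{\log(p)/n})$ yields, after expanding the quadratic form $\sum_{l=1}^{k}\widehat{z}_{j_l}^{2}/\widehat{\gamma}_{j_l,j_l}$,
$$\max_{k\leq M}|T_{fe,n}(k)-T_n(k)|=o_p(1/\sqrt{\log p}),$$
together with the same bound for the bootstrap pair. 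The strengthened condition $M^4 d(\log(np))^{5/2}/\sqrt{n}=o(1)$ is what is needed to absorb these errors into CCK's anti-concentration bound for Gaussian maxima on $\mathcal{A}_M^{\mathrm{spc}}$, which carries a polynomial penalty in $M$ in addition to the usual logarithmic penalty in $p$.

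Third, I apply the CCK high-dimensional CLT on $\mathcal{A}_M^{\mathrm{spc}}$ to both $V$ and $V^{\ast}$ with the common Gaussian target $V^{G}\sim N(0,\mathrm{diag}^{-1/2}(\Gamma)\Gamma\,\mathrm{diag}^{-1/2}(\Gamma))$. The Gaussianity of $X_i$, Assumption~\ref{ass1}, and the sparse eigenvalue bound $\phi(\Gamma;M)>c$ together supply the moment and non-degeneracy hypotheses (any $k\leq M$ principal submatrix of the covariance of $V^G$ has smallest eigenvalue bounded below by $c/\max_j\gamma_{jj}$). For $V^{\ast}\mid X_1^{n}$, which after Step~2 is exactly $N(0,\mathrm{diag}^{-1/2}(\Gamma)\Gamma\widehat{\Sigma}\Gamma\,\mathrm{diag}^{-1/2}(\Gamma))$, the convergence of its covariance to that of $V^{G}$ in max-norm is (\ref{e3}) combined with (\ref{e2}), and a Gaussian--Gaussian comparison on $\mathcal{A}_M^{\mathrm{spc}}$ then transfers this into set-probability closeness. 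Chaining the two approximations and adding the plug-in errors from Step~2 yields the first assertion. For the consequence, observe that $\{\widetilde{T}_{fe,n}(M)\leq t\}=\bigcap_{k=1}^{M}\{T_{fe,n}(k)\leq s_k(t)\}$ with $s_k(t):=k+t\sqrt{2k/(1-k/n)}$; because $T_{fe,n}(\cdot)$ and $T_{fe,n}^{\ast}(\cdot)$ are both non-decreasing in $k$, replacing $s_k(t)$ by its running maximum $\tilde{s}_k(t):=\max_{l\leq k}s_l(t)$ leaves both events unchanged, and $\tilde{s}_k(t)$ is by construction monotone in $k$, so the second assertion follows from the first evaluated at $(\tilde{s}_1(t),\dots,\tilde{s}_M(t))$. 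The main obstacle is Step~2: Proposition~\ref{prop0} only supplies entrywise and row-wise $\ell_1$ rates for $\widehat{\Gamma}-\Gamma$, and converting these into a uniform bound on the sparse quadratic form $Z_S'\mathrm{diag}^{-1}(\widehat{\Gamma}_{S,S})Z_S$ over all $|S|=k\leq M$ at the precision required by CCK's anti-concentration is precisely what forces the strengthened rate $M^4 d(\log(np))^{5/2}/\sqrt{n}=o(1)$.
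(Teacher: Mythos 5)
Your proposal is correct and follows essentially the same route as the paper's proof: control the plug-in error $\max_{j\leq M}|T_{fe,n}(j)-T_n(j)|$ via Proposition \ref{prop0}, absorb it through Gaussian anti-concentration, and handle the bootstrap side by a Gaussian--Gaussian comparison of the conditional covariance with the population one. The only packaging difference is that you invoke the Chernozhukov--Chetverikov--Kato theory for $M$-sparsely convex sets as a black box, whereas the paper re-derives the needed representation by hand: Lemma \ref{lemma1} sandwiches each event between intersections of at most $M\{(2+1/n)epn\}^M$ half-spaces indexed by an $\epsilon$-net of sparse unit vectors, and then Nazarov's inequality and equation (49) of Chernozhukov et al.\ (2015) are applied directly to the resulting finite maxima; these are the same ingredients underlying the sparsely-convex-set results, so nothing substantive changes. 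Two small remarks. First, your claimed plug-in rate $o_p(1/\sqrt{\log p})$ is not by itself sufficient: after the union bound over $j=1,\dots,M$ of the anti-concentration bounds (each of width $\sqrt{\zeta\, j\log(np/j)}$), one needs $\zeta=o_p(1/(M^3\log(np)))$, which combined with $\zeta\asymp Md(\log(np))^{3/2}/\sqrt{n}$ is exactly where the $M^4$ condition enters --- you attribute the condition to the right place, but the stated intermediate rate is too weak. Second, your explicit reduction of the second assertion to the first via $\{\widetilde{T}_{fe,n}(M)\leq t\}=\bigcap_{k\leq M}\{T_{fe,n}(k)\leq s_k(t)\}$ with $s_k(t)=k+t\sqrt{2k/(1-k/n)}$ (monotone in $k$ for $t\geq 0$, so the thresholds already lie in the monotone cone) is a useful addition, as the paper leaves this step implicit.
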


Next we study the power property of the proposed testing procedure. To proceed, we impose the following conditions.

\begin{assumption}\label{ass3}
Assume that $\max_{1\leq j_1<j_2<\cdots<j_k\leq
p}\sum^{k}_{l=1}\gamma_{j_l,j_l}\theta_{j_l}^2\geq
(2k+\epsilon)\log(p)/n$ for some $\epsilon>0$.
\end{assumption}

\begin{assumption}\label{ass4}
Suppose $\sum^{p}_{j=1}I\{\theta_j\neq 0\}=p^r$ for some $0\leq
r<1/4$, and the non-zero locations are randomly uniformly drawn from
$\{1,2,\dots,p\}$. And the scheme is independent of $\{X_i-\theta\}^{n}_{i=1}$.
\end{assumption}

\begin{assumption}\label{ass4a}
Let $\text{diag}^{-1/2}(\Gamma)\Gamma\text{diag}^{-1/2}(\Gamma)=(\nu_{ij})_{i,j=1}^p$. Assume that $\max_{1\leq i<j\leq p}|\nu_{ij}|\leq c_0<1$ for some constant $0<c_0<1$. Further assume that $\lambda_{\max}(\Sigma)\leq C_0$ for some constant $C_0>0.$
\end{assumption}

Define $c^*_{\alpha}(k)=\inf\{t>0:P(T_{fe,n}^*(k)\leq t|X^n_1)\geq
1-\alpha\}$ and
$\widetilde{c}^*_{\alpha}(M)=\inf\{t>0:P(\widetilde{T}_{fe,n}^*(M)\leq
t|X^n_1)\geq 1-\alpha\}$ the simulation-based critical values. The
consistency of the testing procedure is established in the following
theorem.

\begin{theorem}\label{thm:consis}
Suppose $k^2d(\log(np))^{5/2}/\sqrt{n}=o(1)$. Under
Assumptions \ref{ass1}-\ref{ass4a}, we have
\begin{equation}
P(T_{fe,n}(k)>c_{\alpha}^*(k))\rightarrow 1.
\end{equation}
Moreover, suppose Assumption \ref{ass3} holds with $k=M$. Then for $M$ such that
$M^4d(\log(np))^{5/2}/\sqrt{n}=o(1)$,
\begin{align*}
P(\widetilde{T}_{fe,n}(M)>\widetilde{c}_{\alpha}^*(M))\rightarrow 1.
\end{align*}
\end{theorem}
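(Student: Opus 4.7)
The plan is to handle both consistency claims in parallel, by (i) upper bounding the random simulation critical value via the asymptotic null distribution using Theorems \ref{thm} and \ref{thm2}, (ii) bounding that null quantile using Gaussian-maxima concentration under Assumption \ref{ass4a}, (iii) lower bounding the test statistic under $H_a$ by restricting the max to a specific ``good'' $k$-tuple supplied by Assumption \ref{ass3}, and (iv) controlling the oracle-to-feasible gap via Proposition \ref{prop0} together with the sparsity in Assumption \ref{ass4}.

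For the first claim, Theorem \ref{thm} gives $c_\alpha^{*}(k)=q_\alpha(k)+o_p(1)$, where $q_\alpha(k)$ is the $(1-\alpha)$-quantile of $T_{fe,n}(k)$ under $H_0$. By Proposition \ref{prop0} I would replace $T_{fe,n}(k)$ by its oracle analogue $T_n(k)$ up to negligible error; $T_n(k)$ is then the sum of the top $k$ squared order statistics of the approximately $N(0,1)$ variables $Y_j=\sqrt{n}\,z_j/\sqrt{\gamma_{jj}}$, whose pairwise correlations are bounded away from $1$ by Assumption \ref{ass4a}. A Slepian/Gaussian comparison argument therefore yields $q_\alpha(k)\le 2k\log p + O(\log\log p)$. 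On the alternative side, letting $S^{*}$ attain the maximum in Assumption \ref{ass3}, I would lower bound $T_{fe,n}(k)\ge n\sum_{j\in S^{*}}|\widehat z_j|^{2}/\widehat\gamma_{jj}$, expand $\widehat z_j$ into its mean $(\Gamma\theta)_j$ plus Gaussian noise, and verify (a) the mean piece contributes at least $(2k+\varepsilon)\log p$, (b) the centered $\chi^{2}_k$-type piece is $O_p(\sqrt{k})$, and (c) the cross term is $o(\log p)$. Combining (a)--(c) with the critical-value bound gives an $\varepsilon\log p - o(\log p)$ margin, which forces the power to $1$.

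The second claim is treated identically, with $\widetilde T_{fe,n}(M)$ replacing $T_{fe,n}(k)$ and Theorem \ref{thm2} replacing Theorem \ref{thm}. On the null side, writing $\widetilde T_n(M)=\max_{k\le M}\sqrt{(1-k/n)/(2k)}\,(T_n(k)-k)$ and using the Gaussian order-statistics approximation $Y_{(l)}^{2}\approx 2\log(p/l)$, I would obtain $\widetilde q_\alpha(M)\le \sqrt{2M}\log(p/M)+O(\sqrt M)$ via a union bound over $k\le M$. On the alternative side, applying the first-claim argument at $k=M$ yields $T_{fe,n}(M)-M\ge 2M\log p+\varepsilon\log p - o(\log p)$, hence $\widetilde T_{fe,n}(M)\ge \sqrt{2M}\log p + \varepsilon\log p/\sqrt{2M} - o(\sqrt M\,\log p)$, which exceeds $\widetilde q_\alpha(M)$ by $\sqrt{2M}\log M + \varepsilon\log p/\sqrt{2M}\to\infty$.

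The hardest step will be bounding the oracle-to-feasible gap $T_{fe,n}(k)-T_n(k)$ uniformly over $k$-subsets under $H_a$: the dangerous piece of $\widehat z_j-z_j=((\widehat\Gamma-\Gamma)\bar X)_j$ is the deterministic term $((\widehat\Gamma-\Gamma)\theta)_j$, whose $\ell_\infty$ norm I would control by $\max_j|\widehat\Gamma_j-\Gamma_j|_1\cdot|\theta|_\infty=O_p(d\sqrt{\log p/n})\cdot|\theta|_\infty$ via (\ref{e2}). Assumption \ref{ass4} (sparsity with $r<1/4$) together with the rate conditions $k^{2}d(\log(np))^{5/2}/\sqrt n=o(1)$, respectively $M^{4}d(\log(np))^{5/2}/\sqrt n=o(1)$, are exactly what is needed to absorb this error into the $\varepsilon\log p$ signal margin. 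A secondary point is that the null-quantile bound for the second claim must be uniform in $k\le M$; this follows from a union bound exploiting the sharp $O(\sqrt k)$ fluctuation of each $T_n(k)$ around its centering, together with the anti-concentration guarantees implicit in Theorem \ref{thm2}.
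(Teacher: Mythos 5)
Your overall architecture matches the paper's proof: upper-bound the simulation critical value by the null quantile via Theorems \ref{thm}--\ref{thm2} together with a Gaussian extreme-value bound under Assumption \ref{ass4a} (the paper invokes Lemma 6 of Cai et al. (2014) to get $c_\alpha^*(k)\le 2k\log p-k\log\log p+O_p(1)$, which is essentially your Slepian step), lower-bound the statistic at the subset supplied by Assumption \ref{ass3}, and control the feasible-versus-oracle gap. Two steps in your outline, however, would not go through as written.

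First, your treatment of the mean piece conflates two different quantities. Assumption \ref{ass3} bounds $\sum_l\gamma_{j_l,j_l}\theta_{j_l}^2$ from below, whereas the statistic sees $\widetilde\theta_j^2=(\Gamma\theta)_j^2/\gamma_{jj}$, which picks up the off-diagonal contributions $\sum_{k\neq j}\gamma_{jk}\theta_k$. The bridge is Lemma 3 of Cai et al. (2014): under the random uniform placement of the $p^r$ nonzero coordinates with $r<1/4$ (Assumption \ref{ass4}), one has $\widetilde\theta_j=\sqrt{\gamma_{jj}}\theta_j(1+o_p(1))$ on the support. This is the actual role of Assumption \ref{ass4} in the paper; your outline instead assigns that assumption to the precision-matrix error and verifies your item (a) by fiat. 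Relatedly, the bound $n\sum_l q_{j_l^*}^2=O_p(k)$ at the signal-dependent indices $j_l^*$ requires the independence of the noise from the location scheme, which Assumption \ref{ass4} also supplies.

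Second, your control of the feasible-versus-oracle gap, $O_p(d\sqrt{\log p/n})\,|\theta|_\infty$, is not $o(\log p)$ unless $|\theta|_\infty$ is itself $O(\sqrt{\log p/n})$ --- and Assumption \ref{ass3} gives only a lower bound on the signal, so no such upper bound is available; the rate conditions cannot ``absorb'' this term in absolute scale. The paper resolves this by splitting on the event $A=\{\max_j|\theta_j|<C_0\sqrt{\log p/n}\}$: on $A$ the gap is $O_p(kd(\log p)^{3/2}/\sqrt n)$, which is negligible against the $\epsilon\log p$ margin from Step 1; on $A^c$ a single large coordinate already forces $T_n(k)$ above $2k\log(p)-k\log\log(p)$ with probability tending to one, so the test rejects regardless of the estimation error. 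Without this case split (or an equivalent relative-error argument comparing the error to the signal rather than to $\log p$), your step (iv) has a hole.
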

When $k=1$, Assumption \ref{ass3} reduces to
$\max_{1\leq j\leq p}|\theta_j|/\sqrt{\sigma_{jj}}\geq \sqrt{2\{1/(\sigma_{jj}\gamma_{jj})+\epsilon_0\}\log(p)/n},$
for some $\epsilon_0>0$. According to Theorem 3 of Cai et al. (2014), the separation rate $\sqrt{\log (p)/n}$ is
minimax optimal.

Finally, we point out that the Gaussian assumption can be relaxed by
employing the recently developed Central Limit Theorem in high
dimension [Chernozhukov et al. (2015)]. For a random variable $X$,
we define the sub-Gaussian norm [see Definition 5.7 of Vershynin
(2012)] as
$$||X||_{\psi}=\sup_{q\geq 1}q^{-1/2}(\E|X|^q)^{1/q}.$$
\begin{assumption}\label{ass5}
Assume that $\sup_{v\in\mathbb{S}^{p-1}}||v'X_i||_{\psi}<c_3$ and $\sup_{v\in\mathbb{S}^{p-1}}||v'\Gamma X_i||_{\psi}< c_4$ for some constants $c_3,c_4>0$.
\end{assumption}

Let $W=(w_1,w_2,\dots,w_p)\sim
N_p(0,\text{diag}^{-1/2}(\Gamma)\Gamma\text{diag}^{-1/2}(\Gamma))$
and define $T^W(k)=\max_{1\leq j_1<j_2<\cdots j_k\leq
p}\sum^{k}_{l=1}w_{j_l}^2.$ Specifically we have the following
result, which indicates that under the sub-Gaussian assumption, the
distribution of $T_{fe,n}(k)$ can be approximated by its Gaussian
counterpart $T^W(k)$.
\begin{proposition}\label{prop00}
Assume that $d(k\log(np))^{7/2}/\sqrt{n}=o(1)$ and
$\phi(\Gamma;k)>c$ for some positive constant $c>0$. Under
Assumptions \ref{ass1}, \ref{ass2}, \ref{ass5} and $H_0$, we have
$$\sup_{t\geq 0}\left|P\left(T_{fe,n}(k)\leq t\right)-P\left(T^W(k)\leq t\right)\right|=o(1).$$
\end{proposition}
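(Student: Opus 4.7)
My approach is to decouple the proposition into two sub-problems: (i) replacing the plug-in $\widehat{\Gamma}$ by the oracle $\Gamma$ inside the test statistic, and (ii) approximating the resulting oracle statistic $T_n(k)$ by its Gaussian analogue $T^W(k)$. These two pieces are then glued together by an anti-concentration inequality for $T^W(k)$.

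For the oracle step, let $V_j=\sqrt{n}\,z_j/\sqrt{\gamma_{jj}}$. Under $H_0$ we can write $V=n^{-1/2}\sum_{i=1}^n Y_i$ with $(Y_i)_j=(\Gamma X_i)_j/\sqrt{\gamma_{jj}}$. The $Y_i$ are i.i.d., mean zero, and sub-Gaussian by Assumption \ref{ass5}, and their covariance $\text{diag}^{-1/2}(\Gamma)\Gamma\text{diag}^{-1/2}(\Gamma)$ is exactly the covariance of $W$. Moreover $T_n(k)=\max_{|S|=k}\sum_{j\in S}V_j^2$, so $\{T_n(k)\leq t\}=\{V\in A_t\}$ with
$$A_t=\bigcap_{|S|=k}\Big\{v\in\mathbb{R}^p:\sum_{j\in S}v_j^2\leq t\Big\}.$$
Each of the $\binom{p}{k}$ sets in this intersection is convex and depends only on $k$ coordinates, so $A_t$ is $k$-sparsely convex. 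I would then invoke the Gaussian approximation theorem of Chernozhukov, Chetverikov and Kato for sparsely convex sets; its moment hypotheses are met by Assumption \ref{ass5}, and the dimension--rate condition $d(k\log(np))^{7/2}/\sqrt{n}=o(1)$ in the statement of the proposition is exactly tuned to its requirements, yielding $\sup_{t\geq 0}|P(T_n(k)\leq t)-P(T^W(k)\leq t)|=o(1)$.

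For the plug-in step, I would bound $\Delta_n=T_{fe,n}(k)-T_n(k)$ by expanding each summand inside the maximum. From $\widehat{z}_j=\widehat{\Gamma}_j\bar{X}$ we have $|\widehat{z}_j-z_j|\leq|\widehat{\Gamma}_j-\Gamma_j|_1\,|\bar{X}|_\infty$, and sub-Gaussianity of $\bar{X}$ gives $|\bar{X}|_\infty=O_p(\sqrt{\log(p)/n})$; combined with (\ref{e1})--(\ref{e2}) and the lower bound $\phi(\Gamma;k)>c$, a routine calculation (together with a Taylor expansion of $1/\widehat{\gamma}_{jj}-1/\gamma_{jj}$) delivers $|\Delta_n|=o_p(1/\sqrt{\log p})$ under the prescribed rate. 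An anti-concentration inequality for $T^W(k)$ of the form $\sup_t P(|T^W(k)-t|\leq\eta)\leq C\eta\sqrt{\log p}$, which I would derive by combining scalar Gaussian anti-concentration with a union bound over the $\binom{p}{k}$ sums of squared Gaussian coordinates, then transfers the Gaussian approximation from $T_n(k)$ to $T_{fe,n}(k)$ and closes the proof.

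The principal obstacle is the Gaussian approximation step itself: $T_n(k)$ is the maximum of $\binom{p}{k}$ nonlinear (quadratic) functionals, so the widely used CCK result for maxima of linear sums does not apply directly and one must pass through the heavier sparsely convex sets theorem, which is precisely what forces the polynomial-in-$k$ factor $k^{7/2}$ in the rate hypothesis. A subtler secondary point is calibrating the sharpness of the anti-concentration bound for $T^W(k)$ so that it is strong enough to absorb the plug-in perturbation $\Delta_n$ without losing excessive polylog factors, since $T^W(k)$ is a supremum of many strongly dependent chi-squared-type variables.
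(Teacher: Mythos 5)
Your overall architecture (oracle step plus plug-in step plus anti-concentration) matches the paper's, and your Gaussian-approximation step is sound: the paper does not literally cite the sparsely convex sets theorem but reproduces its proof by hand, writing $\{T_n(k)\leq t\}=\{\max_{v\in\mathcal{F}}v'V\leq\sqrt{t}\}$ with $\mathcal{F}=\{v\in\mathbb{S}^{p-1}:||v||_0\leq k\}$, discretizing $\mathcal{F}$ by an $\epsilon$-net of cardinality $\{(2+\epsilon)ep/(\epsilon k)\}^k$ (Lemma \ref{lemma1}), and then applying Corollary 2.1 of Chernozhukov et al.\ (2015) to the resulting finite maximum of linear forms, which is where the $(k\log(np))^{7/6}/n^{1/6}$ rate and hence the hypothesis $d(k\log(np))^{7/2}/\sqrt{n}=o(1)$ come from. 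Invoking the packaged sparsely convex sets result instead is a legitimate shortcut, provided you also verify $\var(v'V)\geq c'>0$ for all $k$-sparse unit $v$ (this is what $\phi(\Gamma;k)>c$ buys) and note that (\ref{e1})--(\ref{e3}) extend from the Gaussian to the sub-Gaussian setting.

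The genuine gap is your anti-concentration inequality for $T^W(k)$. The bound $\sup_t P(|T^W(k)-t|\leq\eta)\leq C\eta\sqrt{\log p}$ cannot be obtained by ``combining scalar Gaussian anti-concentration with a union bound over the $\binom{p}{k}$ sums of squared Gaussian coordinates'': a union bound over the events $\{\sum_{j\in S}w_j^2\in[t,t+\eta]\}$ multiplies the scalar bound by $\binom{p}{k}\approx(ep/k)^k$, which is super-polynomially large and destroys the estimate; anti-concentration of a maximum is emphatically not inherited from anti-concentration of each term via a union bound. The working route --- the one the paper uses --- is to pass to the square-root scale, where $\sqrt{T^W(k)}=\sup_{v\in\mathcal{F}}v'W$ is a supremum of Gaussian linear forms with variances bounded below, discretize by the same $\epsilon$-net, and apply Nazarov's inequality, which yields a band probability of order $\delta\sqrt{k\log(np/k)}$ for a perturbation $\delta$ \emph{on the $\sqrt{t}$ scale}. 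Two consequences follow that your write-up does not account for: (i) converting the $t$-scale perturbation $\zeta_1$ into a $\sqrt{t}$-scale one requires a separate argument for large $t$ (the paper splits at $t\asymp k^3\{\log(np/k)\}^2$ and uses $\E\sup_{v}v'V\lesssim\sqrt{k\log(np/k)}$ together with Markov's inequality); and (ii) the tolerance you must verify for the plug-in error is $\zeta_1\,k\log(np)=o(1)$, not $\zeta_1=o(1/\sqrt{\log p})$. Your target $o_p(1/\sqrt{\log p})$ is too weak to close the argument against the correct anti-concentration bound; fortunately the actual plug-in error is $O_p(kd(\log p)^{3/2}/\sqrt{n})$, which does satisfy the stronger requirement under the stated rate condition, so the proof is repairable once the anti-concentration step is replaced by the Nazarov-on-the-net argument.
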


\section{Extension to the two sample problem}\label{sec:two}
\subsection{Likelihood ratio test}
The maximum likelihood viewpoint allows a direct extension of the above procedure to the two sample problem.
Consider two samples $\{X_i\}^{n_1}_{i=1}\sim^{i.i.d} N_p(\theta_1,\Sigma_1)$
and $\{Y_i\}^{n_2}_{i=1}\sim^{i.i.d} N_p(\theta_2,\Sigma_2)$, where the two samples are independent of each other. A canonical problem in multivariate analysis is the hypothesis testing of
$$H_0': \theta_1-\theta_2\in \Theta_0\quad \text{versus}\quad H_a': \theta_1-\theta_2\in \Theta_a\subseteq \Theta_0^c.$$
Given the priori $\theta_1-\theta_2 \in \Theta_{a,k}$, we consider
$$H_0': \Delta\in \Theta_0\quad \text{versus}\quad H_{a,k}': \Delta\in\Theta_{a,k},$$
where $\Delta=\theta_1-\theta_2$.

Notation-wise, let $\Gamma_j=\Sigma_j^{-1}$ for $j=1,2.$ Define
$C_1=(n_1\Gamma_1+n_2\Gamma_2)^{-1}n_1\Gamma_1$ and
$C_2=(n_1\Gamma_1+n_2\Gamma_2)^{-1}n_2\Gamma_2.$ Further let
$\Omega^{21}=C_2'\Gamma_1C_2$, $\Omega^{12}=C_1'\Gamma_2C_1$,
$\widetilde{X}=C_2'\Gamma_1(\bar{X}-\widetilde{\theta})$ and
$\widetilde{Y}=C_1'\Gamma_2(\bar{Y}-\widetilde{\theta})$, where
$\bar{X}=\sum^{n_1}_{i=1}X_i/n_1$, $\bar{Y}=\sum^{n_2}_{i=1}Y_i/n_2$
and
$$\widetilde{\theta}=(n_1\Gamma_1+n_2\Gamma_2)^{-1}\left(\Gamma_1\sum^{n_1}_{i=1}X_i+\Gamma_2\sum^{n_2}_{i=1}Y_i\right)$$
which is the MLE for $\theta:=\theta_1=\theta_2$ under the null. The
following proposition naturally extends the result in Section
\ref{sec:m} to the two sample case.
\begin{proposition}\label{prop1}
The LR test for testing $H_0'$ against $H_{a,k}'$ is given by
$$LR_n(k)=\max_{S:||S||_0=k}(n_1\widetilde{X}_S-n_2\widetilde{Y}_S)'(n_1\Omega_{S,S}^{21}+n_2\Omega_{S,S}^{12})^{-1}(n_1\widetilde{X}_S-n_2\widetilde{Y}_S).$$
\end{proposition}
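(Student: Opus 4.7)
My plan is a direct Gaussian-likelihood calculation that parallels the one-sample derivation in Section \ref{sec:m}, with the twist that there is now a nuisance common-mean parameter to profile out. The starting point is the negative log-likelihood
$$l(\theta_1,\theta_2)=\tfrac{1}{2}\sum_{i=1}^{n_1}(X_i-\theta_1)'\Gamma_1(X_i-\theta_1)+\tfrac{1}{2}\sum_{i=1}^{n_2}(Y_i-\theta_2)'\Gamma_2(Y_i-\theta_2).$$
Under $H_0'$ the MLE is $\theta_1=\theta_2=\widetilde{\theta}$ as stated. Under $H_{a,k}'$ I will reparametrize as $\theta_1=\theta_2+\Delta$ with $\Delta\in\Theta_{a,k}$, and for each fixed $\Delta$ minimize $l$ over $\theta_2\in\mathbb{R}^p$. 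Differentiating and solving the normal equations gives
$$\hat{\theta}_2(\Delta)=(n_1\Gamma_1+n_2\Gamma_2)^{-1}[n_1\Gamma_1(\bar X-\Delta)+n_2\Gamma_2\bar Y]=\widetilde{\theta}-C_1\Delta,$$
and consequently $\hat{\theta}_1(\Delta)=\widetilde{\theta}+C_2\Delta$, using the key identity $C_1+C_2=I$.

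Next I will substitute these profiled estimators back and expand. Writing the LR as $2[l(\widetilde\theta,\widetilde\theta)-l(\hat\theta_1(\Delta),\hat\theta_2(\Delta))]$ and using the recentering $X_i-\hat\theta_1(\Delta)=(X_i-\widetilde\theta)-C_2\Delta$ and $Y_i-\hat\theta_2(\Delta)=(Y_i-\widetilde\theta)+C_1\Delta$, the cross terms produce exactly the linear piece in $\Delta$, while the quadratic pieces combine through $\Omega^{21}=C_2'\Gamma_1 C_2$ and $\Omega^{12}=C_1'\Gamma_2 C_1$. The resulting expression is
$$LR(\Delta)=2\Delta'\!\bigl[n_1 C_2'\Gamma_1(\bar X-\widetilde\theta)-n_2 C_1'\Gamma_2(\bar Y-\widetilde\theta)\bigr]-\Delta'(n_1\Omega^{21}+n_2\Omega^{12})\Delta=2\Delta'(n_1\widetilde X-n_2\widetilde Y)-\Delta'(n_1\Omega^{21}+n_2\Omega^{12})\Delta,$$
where I have recognized $\widetilde X$ and $\widetilde Y$ from the definitions in the paper.

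The final step mirrors the one-sample argument. Fixing a support $S$ with $\|S\|_0=k$ and restricting $\Delta$ to have support $S$, the LR becomes a concave quadratic in $\Delta_S\in\mathbb{R}^k$ with Hessian $n_1\Omega^{21}_{S,S}+n_2\Omega^{12}_{S,S}$; setting the gradient to zero yields the optimizer $\Delta_S^*=(n_1\Omega^{21}_{S,S}+n_2\Omega^{12}_{S,S})^{-1}(n_1\widetilde X_S-n_2\widetilde Y_S)$ and the corresponding maximum $(n_1\widetilde X_S-n_2\widetilde Y_S)'(n_1\Omega^{21}_{S,S}+n_2\Omega^{12}_{S,S})^{-1}(n_1\widetilde X_S-n_2\widetilde Y_S)$. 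Maximizing over supports $S$ of size $k$ yields the formula in the proposition.

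The main obstacle is purely notational bookkeeping: because $C_1$ and $C_2$ are not symmetric (their transposes involve $\Gamma_j A^{-1}$ rather than $A^{-1}\Gamma_j$), I must be careful with primes when collecting the linear and quadratic terms, and I need to verify that $\Gamma_1 C_2=\Gamma_2 C_1$ times scalar corrections do not leave extra cross terms. I will rely on the identity $n_1\Gamma_1 C_2=n_2\Gamma_2 C_1$ (both equal $n_1n_2\Gamma_1 A^{-1}\Gamma_2$), which is what ensures the cross terms consolidate cleanly into $n_1\widetilde X-n_2\widetilde Y$ and that the quadratic form has the symmetric matrix $n_1\Omega^{21}+n_2\Omega^{12}$ as stated. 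Positive-definiteness of this matrix, needed for the quadratic maximization to be well-posed, follows from positive-definiteness of $\Gamma_1$ and $\Gamma_2$ together with $C_1,C_2$ being invertible.
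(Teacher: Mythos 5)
Your proposal is correct and follows essentially the same route as the paper's proof: profile out the common nuisance mean to get $\hat{\theta}_2(\Delta)=\widetilde{\theta}-C_1\Delta$, reduce the log-likelihood ratio to the concave quadratic $2\Delta'(n_1\widetilde{X}-n_2\widetilde{Y})-\Delta'(n_1\Omega^{21}+n_2\Omega^{12})\Delta$, and maximize over $\Delta$ supported on each $S$ with $||S||_0=k$. The identities you flag ($C_1+C_2=I$ and $n_1\Gamma_1C_2=n_2\Gamma_2C_1$, the latter following from symmetry of $\Gamma_1(n_1\Gamma_1+n_2\Gamma_2)^{-1}\Gamma_2$) do hold and the bookkeeping goes through exactly as you describe.
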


\subsection{Equal covariance structure}\label{se:equal}
We first consider the case of equal covariance, i.e.,
$\Gamma:=\Gamma_1=\Gamma_2$. Simple calculation yields that
$C_1=n_1(n_1+n_2)^{-1}$, $C_2=n_2(n_1+n_2)^{-1}$,
$\widetilde{\theta}=(n_1\bar{X}+n_2\bar{Y})/(n_1+n_2),
\widetilde{X}=n_2^2\Gamma(\bar{X}-\bar{Y})/(n_1+n_2)^2,
\widetilde{Y}=n_1^2\Gamma(\bar{Y}-\bar{X})/(n_1+n_2)^2$ and
$n_1\Omega_{S,S}^{21}+n_2\Omega_{S,S}^{12}=n_1n_2\Gamma/(n_1+n_2).$
Thus the LR test can be simplified as,
\begin{equation}\label{eq:inf}
LR_n(k)=\max_{S:||S||_0=k}\frac{n_1n_2}{n_1+n_2}(\Gamma\bar{X}-\Gamma\bar{Y})'_S(\Gamma_{S,S})^{-1}(\Gamma\bar{X}-\Gamma\bar{Y})_S.
\end{equation}
We note that $LR_n(1)$ reduces to the two sample test proposed in
Cai et al. (2014). By replacing $\Gamma_{S,S}$ with
$\text{diag}(\Gamma_{S,S})$ in (\ref{eq:inf}), we obtain the
(infeasible) statistic
\begin{align*}
T_{n}(k)=\max_{1\leq j_1< j_2<\cdots<j_k\leq
p}\frac{n_1n_2}{n_1+n_2}\frac{(\Gamma\bar{X}-\Gamma\bar{Y})^2_{j_l}}{\gamma_{j_l,j_l}},
\end{align*}
which is computationally efficient.

Let $\widehat{\Gamma}=(\widehat{\gamma}_{i,j})^{p}_{i,j=1}$ be a
suitable estimator for $\Gamma$ based on the pooled sample. The
feasible test is given by
$$T_{fe,n}(k)=\max_{1\leq j_1< j_2<\cdots<j_k\leq p}\frac{n_1n_2}{n_1+n_2}\frac{(\widehat{\Gamma}\bar{X}-\widehat{\Gamma}\bar{Y})^2_{j_l}}{\widehat{\gamma}_{j_l,j_l}}.$$
To approximate the sampling distribution of the above test, one can
employ the simulation-based approach described below:
\begin{enumerate}
\item Estimate $\widehat{\Gamma}$ using suitable regularization method based on the pooled sample.
\item Let $X^*=\sum^{n_1}_{i=1}(X_i-\bar{X})e_{i}/n_1$ and $Y^*=\sum^{n_2}_{i=1}(Y_i-\bar{Y})\tilde{e}_{i}/n_1$, where
$\{e_i\}$ and $\{\tilde{e}_i\}$ are two independent sequences of i.i.d $N(0,1)$ random variables that are independent of the sample.
\item Compute the simulation-based statistic $T_{fe,n}^*(k)$ by replacing $\bar{X}$ and $\bar{Y}$ with $X^*$ and $Y^*$.
\item Repeat Steps 2-3 several times to get the $1-\alpha$ quantile of
$T_{fe,n}^*(k)$, which serves as the simulation-based critical
value.
\end{enumerate}

Next, we briefly discuss the choice of $k$. By Theorem
2.1 in Bai and Saranadasa (1996), we know that the asymptotic power
function for the two sample Hotelling's $T^2$ test is given by
\begin{align*}
\Phi\left(-z_{1-\alpha}+\sqrt{\frac{N(N-p)}{2p}}\frac{n_1n_2}{N^2}(\theta_1-\theta_2)'\Gamma(\theta_1-\theta_2)\right),
\end{align*}
under $p/N\rightarrow b\in (0,1)$, where $N=n_1+n_2-2$. Thus for
$k<N$, the asymptotic power of $T_{fe,n}(k)$ is related to
$$\sqrt{\frac{N-k}{2k}}\max_{||S||_0=k}\{\widehat{\Gamma}(\theta_1-\theta_2)\}'_S\text{diag}^{-1}(\widehat{\Gamma}_{S,S})\{\widehat{\Gamma}(\theta_1-\theta_2)\}_S.$$
Notice that
$$(\Gamma\bar{X}-\Gamma\bar{Y})'_S\text{diag}^{-1}(\Gamma_{S,S})(\Gamma\bar{X}-\Gamma\bar{Y})_S-\frac{k(n_1+n_2)}{n_1n_2}$$ is an
unbiased estimator for
$\{\Gamma(\theta_1-\theta_2)\}'_S\text{diag}^{-1}(\Gamma_{S,S})\{\Gamma(\theta_1-\theta_2)\}_S$.
Thus we propose to choose $k$ by
\begin{align*}
\widehat{k}=&\arg\max_{1\leq k\leq
M'}\sqrt{\frac{N-k}{2k}}\left(T_{fe,n}(k)-k\right),
\end{align*}
where $M'$ is a pre-specified upper bound for $k$. Following the
same spirit in Section \ref{sec:k}, a modified test statistic is
given by
$$\widetilde{T}_{fe,n}(k)=
\max_{1\leq k\leq M'}\sqrt{\frac{1-k/N}{2k}}\left(\max_{1\leq
j_1<j_2<\cdots<j_k\leq
p}\frac{n_1n_2}{n_1+n_2}\frac{(\widehat{\Gamma}\bar{X}-\widehat{\Gamma}\bar{Y})^2_{j_l}}{\widehat{\gamma}_{j_l,j_l}}-k\right),$$
and the simulation-based procedure can be used to approximate its
sampling distribution.

We can justify the validity of the testing procedure under both the
null and alternative hypotheses. The arguments are similar to those in the one sample case, see Sections \ref{sec:fes} and
\ref{appendix}.

%As in the one sample case,
%we can use the simulation critical values for $LR_{fe,n}(1)$ and $\widehat{LR}_{fe,n}(1)$ as the critical values for $LR_{fe,n}(\hat{k})$ and $\widehat{LR}_{fe,n}(\hat{k})$.

\subsection{Unequal covariance structures}
In the case of unequal covariance structures i.e., $\Gamma_1\neq
\Gamma_2$, we cannot use the pooled sample to estimate the
covariance structures. Let $\widehat{\Gamma}_i$ with $i=1,2$ be
suitable precision matrix estimators based on each sample
separately. Denote by $\widehat{C}_i$ the estimator for $C_i$ with
$i=1,2$. A particular choice here is given by
$$\widehat{C}_1=(n_1\widehat{\Gamma}_1+n_2\widehat{\Gamma}_2)^{-1}n_1\widehat{\Gamma}_1,\quad \widehat{C}_2=(n_1\widehat{\Gamma}_1+n_2\widehat{\Gamma}_2)^{-1}n_2\widehat{\Gamma}_2.$$
Further define
$\widehat{\Omega}^{21}=\widehat{C}_2'\widehat{\Gamma}_1\widehat{C}_2$,
$\widehat{\Omega}^{12}=\widehat{C}_1'\widehat{\Gamma}_2\widehat{C}_1$,
$\widehat{X}=\widehat{C}_2'\widehat{\Gamma}_1(\bar{X}-\widehat{\theta})$
and
$\widehat{Y}=\widehat{C}_1'\widehat{\Gamma}_2(\bar{Y}-\widehat{\theta})$,
where
$$\widehat{\theta}=\widehat{C}_1\bar{X}+\widehat{C}_2\bar{Y}.$$
Let
$\widehat{\Psi}=(\widehat{\psi}_{ij})_{i,j=1}^{p}=n_1\widehat{\Omega}^{21}+n_2\widehat{\Omega}^{12}$,
and
$\widehat{G}=(\widehat{g}_1,\dots,\widehat{g}_p)'=n_1\widehat{X}-n_2\widehat{Y}$.
By replacing $\widehat{\Psi}$ with $\text{diag}(\widehat{\Psi})$, we
suggest the following computational feasible test,
\begin{equation}
T_{fe,n}(k)=\max_{1\leq j_1<j_2<\cdots<j_k\leq p}\sum^{k}_{l=1}\frac{|\widehat{g}_{j_l}|^2}{\widehat{\psi}_{j_l,j_l}}.
\end{equation}
When $k=1$, we have
\begin{equation}
T_{fe,n}(1)=\max_{1\leq j\leq p}\frac{|\widehat{g}_{j}|^2}{\widehat{\psi}_{jj}},
\end{equation}
which can be viewed as an extension of Cal et al. (2014)'s test statistic to
the case of unequal covariances. Again one can employ the
simulation-based approach to obtain the critical values for
$T_{fe,n}(k)$. In this case, a modified test can be defined in a
similar manner as
$$\widetilde{T}_{fe,n}(k)=
\max_{1\leq k\leq M''}\sqrt{\frac{1-k/N}{2k}}\left(\max_{1\leq
j_1<j_2<\cdots<j_k\leq
p}\sum^{k}_{l=1}\frac{|\widehat{g}_{j_l}|^2}{\widehat{\psi}_{j_l,j_l}}-k\right)$$
for some upper bound $M''$.

\section{Simulation studies}\label{sec:sim}
\subsection{Empirical size and power}
In this section, we report the numerical results for comparing the
proposed testing procedure with some existing alternatives.
Specially we focus on the two sample problem for testing
$H_0':\Delta\in \Theta_0$ against the alternatives
$H_{a,k}':\Delta\in\Theta_{a,k}$. Without loss of generality, we set
${\theta}_2={0}$. Note that under $H_{a,k}'$, ${\theta}_1$ has $k$
non-zero elements. Denote by $\lfloor x \rfloor$ the largest integer
not greater than $x$. We consider the settings below.
\begin{itemize}
\item[(1)] Case 1: $k=\lfloor 0.05 p\rfloor$ and
the non-zero entries are equal to $\varphi_j\sqrt{\log(p)/n}$, where
$\varphi_j$ are i.i.d random variables with $P(\varphi_j=\pm
1)=1/2$.
\item[(2)] Case 2: $k=\lfloor \sqrt{p}\rfloor$ and the strength of the signals is the same as (1).
\item[(3)] Case 3: $k=\lfloor p^{0.3}\rfloor$ and the nonzero entries are all equal to $\sqrt{4r\log p/n}$ with $r=0.1,0.2,0.3,0.4,$ and $0.5.$
\end{itemize}
Here the locations of the nonzero entries are drawn without
replacement from $\{1,2,\dots,p\}$. Following Cai et al. (2014), the
following four covariance structures are considered.
\begin{itemize}
  \item[(a)] (block diagonal ${\Sigma}$): ${\Sigma} = \left( \sigma_{j,k} \right)$\, where $\sigma_{j,j} = 1$\, and $\sigma_{j,k} = 0.8$\, for $2(r-1)+1 \le j \ne k \le 2r$,\, where $r = 1,\ldots,\lfloor p/2\rfloor$ and $\sigma_{j,k}=0$\, otherwise.

  \item[(b)] (`bandable' ${\Sigma}$): ${\Sigma} = \left( \sigma_{j,k} \right)$\, where $\sigma_{j,k} = 0.6^{|j-k|}$\, for $1 \le j, k \le p$.

  \item[(c)] (banded ${\Gamma}$): ${\Gamma} = \left( \gamma_{j,k} \right)$\, where $\gamma_{j,j}=2$\, for\, $j=1,\ldots,p$,\, $\gamma_{j,(j+1)}=0.8$\, for \,$j=1,\ldots,p-1$,\, $\gamma_{j,(j+2)}=0.4$\, for \,$j=1,\ldots,p-2$,\, $\gamma_{j,(j+3)}=0.4$\, for \,$j=1,\ldots,p-3$,\, $\gamma_{j,(j+4)}=0.2$\, for \,$j=1,\ldots,p-4$,\, $\gamma_{j,k}=\gamma_{k,j}$\, for \,$j,k=1,\ldots,p$,\, and \,$\gamma_{j,k}=0$\, otherwise.

  \item[(d)] (block diagonal $\Gamma$): Denote by $D$ a
diagonal matrix with diagonal elements generated independently
from the uniform distribution on $(1,3)$. Let $\Sigma_0$ be
generated according to (a). Define $\Gamma=D^{1/2}\Sigma^2_0D^{1/2}$ and
$\Sigma=\Gamma^{-1}$.
\end{itemize}
For each covariance structure, two independent random samples are
generated with the same sample size $n_1=n_2=80$ from the following
multivariate models,
\begin{equation}\label{eq:sim}
X=\theta_1+\Sigma^{1/2}U_1,\quad Y=\theta_2+\Sigma^{1/2}U_2,
\end{equation}
where $U_1$ and $U_2$ are two independent $p$-dimensional random
vectors with independent components such that $\E(U_j)=0$ and
$\var(U_j)=I_p$ for $j=1,2$. We consider two cases: $U_j\sim N(0,
I_p)$, and the component of $U_j$ is standardized Gamma(4,1) random
variable such that it has zero mean and unit variance. The
dimension $p$ is equal to $50,100$ or $200$. Throughout the
simulations, the empirical sizes and powers are calculated based on
1000 Monte Carlo replications.

To estimate the precision matrix, we use the nodewise
square root Lasso [Belloni et al. (2012)] proposed in Liu and Wang
(2012), which is essentially equivalent to the scaled-Lasso from Sun and Zhang (2013).
To select the tuning parameter $\lambda$ in the nodewise
square root Lasso, we consider the following criteria,
\begin{align*}
\lambda^*=\text{argmin}_{\lambda\in \Lambda_n}||\widehat{\Gamma}(\lambda)\widehat{\Sigma}\widehat{\Gamma}(\lambda)'-\widehat{\Gamma}(\lambda)||_{\infty}
\end{align*}
where $\widehat{\Sigma}$ is the pooled sample covariance matrix and
the minimization is taken over a prespecified finite set
$\Lambda_n$. Moreover, we employ the data dependent method in
Section \ref{se:equal} to select $k$ with the upper bound $M'=40$
(we also tried $M'=20,80$ and found that the results are basically
the same as those with $M'=40$). For the purpose of comparison, we
also implemented the Hotelling's $T^2$ test and the two sample tests
proposed in Bai and Saranadasa (1996), Chen and Qin (2010), and Cai
et al. (2014). As the results under the Gamma model are
qualitatively similar to those under the Gaussian model, we only
present the results from the Gaussian model. Table \ref{tb1}
summarizes the sizes and powers in cases 1 and 2. The empirical
powers in case 3 with $r$ ranging from 0.1 to 0.5 are presented in
Figure \ref{fig:p1}. Some remarks are in order regarding the
simulation results: (\rmnum{1}) the empirical sizes are reasonably
close to the nominal level 5\% for all the tests; (\rmnum{2}) the
proposed tests and the maximum type test in Cai et al. (2014)
significantly outperform the sum-of-squares type testing procedures
in terms of power under Models (a), (b) and (d); Under Model (c),
the proposed method is quite competitive to Chen and Qin (2010)'s
test which delivers more power than Cai et al. (2014)'s test in some
cases; (\rmnum{3}) $T_{fe,n}(k)$ is consistently more powerful than
Cai et al. (2014)'s test in almost all the cases; (\rmnum{4}) the
modified test $\widetilde{T}_{fe,n}(M')$ is insensitive to the upper
bound $M'$ (as shown in our unreported results). And its power is
very competitive to $T_{fe,n}(k)$ with a suitably chosen $k$.

%We further examine the proposed test in the case of unequal
%covariances. Suppose $X_i\sim N(\theta_1,\Sigma_1)$ and $Y_i\sim
%N(0,\Sigma_2)$, where $\Sigma_1$ is block diagonal as in (a), and
%$\Sigma_2$ is bandable as in (b). Here ${\theta}_1$ have $\lfloor
%p^{0.3}\rfloor$ non-zero elements, where the nonzero entries are all
%equal to $\sqrt{4r\log p/n}$ with $r=0,0.1,0.2,0.3,0.4,$ and $0.5.$
%We compare the proposed test with the sum-of-squares test in Chen
%and Qin (2010) which allows unequal covariance structures. The
%results are summrized. It is clearly that the proposed test
%significantly outperform the sum-of-squares test in terms of power.
%It is seen that the proposed test is generally robust to the choice of $k$.

\subsection{Power comparison under different signal allocations}
We conduct additional simulations to compare the power of the proposed method with alternative approaches under different signal allocations.
The data are generated from (\ref{eq:sim}) with Gaussian distribution and bandable covariance structure (b). Let $k=\lfloor 0.1p\rfloor$
and consider the following four patterns of allocation, where the locations of the nonzero entries are drawn without replacement from $\{1,2,\dots,p\}$.
\begin{itemize}
\item[\rmnum{1}] (Square root): the nonzero entries are equal $\sqrt{4r\log(p)/n}\sqrt{j/k}$ for $1\leq j\leq k$.
\item[\rmnum{2}] (Linear): the nonzero entries are equal $\sqrt{4r\log(p)/n}(j/k)$ for $1\leq j\leq k$.
\item[\rmnum{3}] (Rational): the nonzero entries are equal $\sqrt{4r\log(p)/n}(1/j)$ for $1\leq j\leq k$.
\item[\rmnum{4}] (Random): the nonzero entries are drawn uniformly from $(-\sqrt{4r\log(p)/n},\sqrt{4r\log(p)/n})$.
\end{itemize}
Figure \ref{fig:p2} reports the empirical rejection probabilities
for $p=100,200$, and $r$ ranging from 0.1 to 0.5. We observe that the slower the strength of the signals decays, the higher power the tests can generate.
The proposed method generally outperforms the two sample tests in Chen and Qin (2010) and Cai et al. (2014) especially when the magnitudes of signals decay slowly.
This result makes intuitive sense as when the magnitudes of signals are close, the top few signals together provide a stronger indication for the violation from the null as compared to the indication using only the largest signal.
To sum up, the numerical results demonstrate the advantages of the proposed
method over some competitors in the literature.

\section{Concluding remark}\label{sec:con}
In this paper, we developed a new class of tests named maximum
sum-of-squares tests for conducting inference on high dimensional
mean under sparsity assumption. The maximum type test has been shown to be optimal under
very strong sparsity [Arias-Castro et al. (2011)]. Our result
suggests that even for very sparse signal (e.g. $k$ grows slowly
with $n$), the maximum type test may be improved. It is worth
mentioning that our method can be extended to more general settings.
For example, consider a parametric model with the negative
log-likelihood (or more generally loss function)
$\mathcal{L}(Y,X'\beta)$, where $\beta\in \mathbb{R}^p$ is the
parameter of interest, $X$ is the $p$-dimensional covariate and $Y$
is the response variable. We are interested in testing
$H_0:\beta=0_{p\times 1}$ versus $H_{a,k}:\beta\in \Theta_{a,k}.$
Given $n$ observations $\{Y_i,X_i\}^{n}_{i=1}$, the LR test for
testing $H_0$ against $H_{a,k}$ is then defined as
$LR_n(\beta)=2\sum^{n}_{i=1}\mathcal{L}(Y_i,0)-2\min_{\beta\in
\Theta_{a,k}}\sum^{n}_{i=1}\mathcal{L}(Y_i,X_i'\beta).$ In the case
of linear model, it is related with the maximum spurious correlations
recently considered in Fan et al. (2015) under the null. It is of interest to study
the asymptotic properties of $LR_n(\beta)$ and investigate the Wilks
phenomenon in this more general context.

%Another related problem is the hypothesis testing of $H_0: X_i\sim N_p(0,\Sigma)$
%versus $H_a: X_i=\theta_i+S_i$, where $S_i\sim N_p(0,\Sigma)$ and $\theta_i$ follows a mixture distribution $\theta_i\sim^{i.i.d} (1-\epsilon)\delta_0+\epsilon F_p$.
%Here $\delta_0$ denotes the point mass at

%Finally, we point out that existing methods for precision matrix
%estimation typically rely on sparsity assumption and they are not
%particularly designed for hypothesis testing. One can consider
%alternative precision matrix estimation approach that aims to
%maximize the (asymptotic) power of the testing procedure, see
%Javanmard and Montanari (2014) for an idea in the context of
%confidence interval construction in high dimensional linear models.

\section{Technical appendix}
\subsection{Preliminaries}

We provide proofs of the main results in the paper. Throughout the
appendix, let $C$ be a generic constant which is different from line
to line. Define the unit sphere $\mathbb{S}^{p-1}=\{b\in\mathbb{R}^{p}:|b|=1\}$.

For any $1\leq k\leq p$, define
\begin{align*}
\mathcal{A}(t;k)=\underset{j=1}{\overset{\binom{p}{k}}{\bigcap}}\mathcal{A}_{j}(t),
\quad \mathcal{A}_j(t)=\{w\in\mathbb{R}^p: w_{S_j}'w_{S_j}\leq t\}.
\end{align*}
Here $S_j$ is the $j$th subset of $[p]:=\{1,2,\dots,p\}$ with
cardinality $k$ for $1\leq j\leq \binom{p}{k}$. It is
straightforward to verify that $ \mathcal{A}_{j}(t)$ is convex and
it only depends on $w_{S_j}$, i.e. the components in $S_j$. The dual
representation [see Rockafellar (1970)] for the convex set
$\mathcal{A}_j(t)$ with $1\leq j\leq \binom{p}{k}$ is given by
\begin{align*}
\mathcal{A}_j(t)=\bigcap_{v\in \mathbb{S}^{p-1},
v_{S_j}\in \mathbb{S}^{k-1}}\{w\in\mathbb{R}^p:w'v\leq \sqrt{t}\},
\end{align*}
where we have used the fact that $\sup_{v\in \mathbb{S}^{p-1},
v_{S_j}\in \mathbb{S}^{k-1}}w'v=|w_{S_j}|$ by the Cauchy-Schwartz
inequality. Define $\mathcal{F}=\{v\in \mathbb{S}^{p-1}, ||v||_0\leq
k\}$. It is not hard to see that
$$\mathcal{A}(t;k)=\bigcap_{v\in \mathcal{F}}\{w\in\mathbb{R}^p:w'v\leq \sqrt{t}\}.$$

Let $\mathcal{X}$ be a subset of a Euclidean space and let
$\epsilon>0$. A subset $N_{\epsilon}$ of $\mathcal{X}$ is called an
$\epsilon$-net of $\mathcal{X}$ if every point $x\in\mathcal{X}$ can
be approximated to within $\epsilon$ by some point $y\in
N_{\epsilon}$, i.e. $|x-y|\leq \epsilon$. The minimal cardinality of
an $\epsilon$-net of $\mathcal{X}$, if finite, is denoted by
$N(\mathcal{X},\epsilon)$ and is called the covering number of
$\mathcal{X}$.

\begin{lemma}\label{lemma1}
For $\epsilon>0$, there exists an $\epsilon$-net of $\mathcal{F}$,
denoted by $\mathcal{F}_{\epsilon}$, such that
$||\mathcal{F}_{\epsilon}||_0\leq
\left\{\frac{(2+\epsilon)ep}{\epsilon k}\right\}^k$ and
\begin{align}\label{eq:lemma}
\bigcap_{v\in \mathcal{F}_{\epsilon}}\{w\in\mathbb{R}^p:w'v\leq
(1-\epsilon)\sqrt{t}\}\subseteq  \mathcal{A}(t;k)\subseteq
\bigcap_{v\in \mathcal{F}_{\epsilon}}\{w\in\mathbb{R}^p:w'v\leq
\sqrt{t}\}. \end{align}
\end{lemma}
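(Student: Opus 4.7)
The plan is to build $\mathcal{F}_{\epsilon}$ by gluing together coordinate-subspace nets, one for each $k$-subset, and then to exploit the fact that the approximation stays within the same support to obtain the two-sided inclusion. Concretely, I would fix any $S\subseteq [p]$ with $|S|=k$, identify the unit sphere $\mathbb{S}_S^{k-1}$ of vectors supported on $S$ with $\mathbb{S}^{k-1}\subset\mathbb{R}^k$, and invoke the standard volumetric covering bound $N(\mathbb{S}^{k-1},\epsilon)\leq(1+2/\epsilon)^k=((2+\epsilon)/\epsilon)^k$ to obtain an $\epsilon$-net $N_S$ of $\mathbb{S}_S^{k-1}$. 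Taking $\mathcal{F}_{\epsilon}=\bigcup_{|S|=k}N_S$ and using $\binom{p}{k}\leq(ep/k)^k$ gives
$$\|\mathcal{F}_{\epsilon}\|_0\leq\binom{p}{k}\Bigl(\tfrac{2+\epsilon}{\epsilon}\Bigr)^k\leq\Bigl(\tfrac{(2+\epsilon)ep}{\epsilon k}\Bigr)^k,$$
which is the claimed cardinality bound. Since every $v\in\mathcal{F}$ is supported on some $k$-subset (padding out to exactly $k$ indices if $\|v\|_0<k$) and $N_S$ is an $\epsilon$-net of $\mathbb{S}_S^{k-1}$, it is in particular an $\epsilon$-net of $\mathcal{F}$.

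\textbf{The two inclusions.} The right inclusion is immediate from $\mathcal{F}_{\epsilon}\subseteq\mathcal{F}$, since the intersection over the larger set is contained in the intersection over the smaller. For the left inclusion, suppose $w$ lies in $\bigcap_{v\in\mathcal{F}_{\epsilon}}\{w'v\leq(1-\epsilon)\sqrt{t}\}$ and fix any $S$ with $|S|=k$; I must show $|w_S|\leq\sqrt{t}$. Set $u^{*}=w_S/|w_S|$ (extended by zeros outside $S$), so $u^{*}\in\mathbb{S}_S^{k-1}\subseteq\mathcal{F}$ and $w'u^{*}=|w_S|$. By the covering property there is $u'\in N_S\subseteq\mathcal{F}_{\epsilon}$ with $|u^{*}-u'|\leq\epsilon$; crucially, $u^{*}-u'$ is again supported on $S$, so by Cauchy--Schwarz
$$w'(u^{*}-u')=w_S'(u^{*}-u')_S\leq|w_S|\,|u^{*}-u'|\leq\epsilon|w_S|.$$
Combining with $w'u'\leq(1-\epsilon)\sqrt{t}$ yields $|w_S|\leq(1-\epsilon)\sqrt{t}+\epsilon|w_S|$, hence $|w_S|\leq\sqrt{t}$ (for $\epsilon<1$). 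Since $S$ was arbitrary, $w\in\mathcal{A}(t;k)$.

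\textbf{Main obstacle.} There is no serious technical obstacle here; the argument is a standard sparse-vector $\epsilon$-net construction. The one subtle point worth being careful about is that the approximation $v'\in\mathcal{F}_{\epsilon}$ must be chosen with the \emph{same} support $S$ as the target $v\in\mathcal{F}$, so that $v-v'$ remains $k$-sparse rather than $2k$-sparse; otherwise the bound $|w'(v-v')|\leq\epsilon|w_S|$ would not go through and one would end up needing to control $\sup_{|T|\leq 2k}|w_T|$, which is exactly the type of quantity the lemma is trying to bound. By organizing $\mathcal{F}_{\epsilon}$ as a union indexed by supports this issue is handled at the construction step.
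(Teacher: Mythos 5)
Your proposal is correct and follows essentially the same route as the paper: the same union-over-supports net construction with the bounds $N(\mathbb{S}^{k-1},\epsilon)\leq(1+2/\epsilon)^k$ and $\binom{p}{k}\leq(ep/k)^k$, the trivial right inclusion, and the left inclusion via approximating within a fixed support so that the difference stays $k$-sparse (the paper phrases this as the self-bounding inequality $\max_{v\in\mathcal{F}}w'v\leq(1-\epsilon)\sqrt{t}+\epsilon\max_{v\in\mathcal{F}}w'v$, whereas you apply it directly to the maximizer $w_S/|w_S|$, which is the same idea). Your explicit remark that the approximant must share the support of the target is exactly the point the paper's argument implicitly relies on.
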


\begin{proof}[Proof of Lemma \ref{lemma1}]
For the unit sphere $\mathbb{S}^{k-1}$ equipped with the Euclidean
metric, it is well-known that the $\epsilon$-covering number
$N(\mathbb{S}^{k-1},\epsilon)\leq (1+2/\epsilon)^k$, see e.g. Lemma
5.2 of Vershynin (2012). Notice that
$$\mathcal{F}=\{v\in \mathbb{S}^{p-1}, ||v||_0\leq k\}=\bigcup_{S\subseteq [p]:||S||_0=k}\{v\in \mathbb{S}^{p-1}: v_{S}\in \mathbb{S}^{k-1}\},$$
where $[p]=\{1,2,\dots,p\}$. Because $\binom{p}{k}\leq (ep/k)^k$, we
have
$$N(\mathcal{F},\epsilon)\leq \binom{p}{k}\left(1+\frac{2}{\epsilon}\right)^k\leq \left\{\frac{(2+\epsilon)ep}{\epsilon k}\right\}^k.$$
Recall that
$$\mathcal{A}(t;k)=\bigcap_{v\in \mathcal{F}}\{w\in\mathbb{R}^p:w'v\leq \sqrt{t}\}.$$
Let $\mathcal{F}_{\epsilon}$ be an $\epsilon$-net of $\mathcal{F}$
with cardinality $N(\mathcal{F},\epsilon)$, and
$A_1(t):=A_1(t;\epsilon)=\bigcap_{v\in
\mathcal{F}_{\epsilon}}\{w\in\mathbb{R}^p:w'v\leq
(1-\epsilon)\sqrt{t}\}$. It is easy to see that
$$\mathcal{A}(t;k)\subseteq
\bigcap_{v\in \mathcal{F}_{\epsilon}}\{w\in\mathbb{R}^p:w'v\leq
\sqrt{t}\}.$$
For any $v\in\mathcal{F}$, we can find
$v_0\in\mathcal{F}_{\epsilon}$ such that $|v-v_0|\leq \epsilon.$
Thus for $w\in A_1(t)$, we have
\begin{align*}
w'v=w'v_0+|v-v_0|\frac{w'(v-v_0)}{|v-v_0|}\leq
(1-\epsilon)\sqrt{t}+\epsilon \max_{v_1\in\mathcal{F}}w'v_1.
\end{align*}
Taking maximum over $v\in\mathcal{F}$, we obtain
$\max_{v\in\mathcal{F}}w'v\leq (1-\epsilon)\sqrt{t}+\epsilon
\max_{v_1\in\mathcal{F}}w'v_1$, which implies that
$\max_{v\in\mathcal{F}}w'v\leq \sqrt{t}$ and thus $w\in
\mathcal{A}(t;k).$
\end{proof}

\subsection{Proofs of the main results}\label{appendix}

\begin{proof}[Proof of Theorem \ref{thm}]
The triangle inequality yields that
\begin{align*}
&\sup_{t\geq 0}\left|P\left(T_{fe,n}^*(k)\leq
t\bigg|X_1^n\right)-P\left(T_{fe,n}(k)\leq t\right)\right|
\\
\leq &\sup_{t\geq 0}\left|P\left(T_{n}(k)\leq
t\right)-P\left(T_{fe,n}(k)\leq t\right)\right|
\\&+\sup_{t\geq 0}\left|P\left(T_{n}(k)\leq t\right)-P\left(T_{fe,n}^*(k)\leq
t\bigg|X_1^n\right)\right|:=\rho_{1,n}+\rho_{2,n}.
\end{align*}
We bound $\rho_{1,n}$ and $\rho_{2,n}$ in Step 1 and Step 2 respectively.

\textbf{Step 1 (bounding $\rho_{1,n}$)}: Let
$\widehat{\xi}_j=\frac{|\widehat{z}_{j}|}{\sqrt{\widehat{\gamma}_{jj}}}$
and $\widehat{\xi}_{(j)}$ be the order statistic such that
$$\widehat{\xi}_{(1)}\geq \widehat{\xi}_{(2)}\geq \cdots\geq \widehat{\xi}_{(p)}.$$
Similarly we can define $\xi_j$ and $\xi_{(j)}$ in the same way as $\widehat{\xi}_j$ and $\widehat{\xi}_{(j)}$ by replacing $\widehat{\Gamma}$ with the precision matrix $\Gamma$. We have
\begin{align*}
|\widehat{\xi}_j-\xi_j|\leq \left| \frac{|\widehat{z}_{j}|}{\sqrt{\widehat{\gamma}_{jj}}}-\frac{|z_{j}|}{\sqrt{\widehat{\gamma}_{jj}}}\right|
+\left| \frac{|z_{j}|}{\sqrt{\widehat{\gamma}_{jj}}}-\frac{|z_{j}|}{\sqrt{\gamma_{jj}}}\right|
\leq \frac{|z_j-\widehat{z}_j|}{\sqrt{\widehat{\gamma}_{jj}}}+\left|\frac{\sqrt{\gamma_{jj}}-\sqrt{\widehat{\gamma}_{jj}}}{\sqrt{\gamma_{jj}}\sqrt{\widehat{\gamma}_{jj}}}\right||z_j|:=I_1+I_2.
\end{align*}
By Proposition \ref{prop0}, we have $\max_{1\leq j\leq p}|\widehat{\Gamma}_j-\Gamma_j|_{1}=O_p(d\sqrt{\log(p)/n})$ and
$\sup_{1\leq j\leq p}|\gamma_{jj}-\widehat{\gamma}_{jj}|=O_p(\sqrt{d\log(p)/n})$. Also note that $c_1<\min_{1\leq j\leq p}\gamma_{jj}
\leq \max_{1\leq j\leq p}\gamma_{jj}<c_2$ for some constants $0<c_1\leq c_2<\infty.$ Together with the fact that $|\bar{X}|_{\infty}=O_p(\sqrt{\log(p)/n})$, we deduce
$$\sup_{1\leq j\leq p}|z_j-\widehat{z}_j|\leq \max_{1\leq j\leq p}|\widehat{\Gamma}_j-\Gamma_j|_{1}|\bar{X}|_{\infty}=O_p\left(d\log(p)/n\right),$$
and
\begin{align}\label{eq:sqrt}
\sup_{1\leq j\leq
p}|\sqrt{\gamma_{jj}}-\sqrt{\widehat{\gamma}_{jj}}|=\sup_{1\leq
j\leq
p}\left|\frac{\gamma_{jj}-\widehat{\gamma}_{jj}}{\sqrt{\gamma_{jj}}+\sqrt{\widehat{\gamma}_{jj}}}\right|
=O_p\left(\sqrt{d\log(p)/n}\right).
\end{align}
As $\sup_{1\leq j\leq p}|z_j|=O_p(\sqrt{\log(p)/n})$, we obtain
\begin{align*}
\sup_{1\leq j\leq p}|\widehat{\xi}_j-\xi_j|=O_p\left(d\log(p)/n\right),
\end{align*}
and
$$\sup_{1\leq j\leq p}|\widehat{\xi}_j^2-\xi_j^2|=O_p\left(d\log(p)/n\right)\sup_{1\leq j\leq p}|\widehat{\xi}_j+\xi_j|=O_p\left(d(\log(p)/n)^{3/2}\right).$$
Thus we deduce that
\begin{align*}
\left|T_n(k)-T_{fe,n}(k)\right|\leq &\max_{1\leq
j_1<j_2<\cdots<j_k\leq
p}n\sum^{k}_{l=1}\left|\xi_{j_l}^2-\widehat{\xi}_{j_l}^2\right|
\\ \leq&
nk\max_{1\leq j\leq p}|\widehat{\xi}_j^2-\xi_j^2|=O_p\left(kd(\log(p))^{3/2}/\sqrt{n}\right).
\end{align*}
By the assumption $k^2d(\log(np))^{5/2}/\sqrt{n}=o(1)$, we can pick
$\zeta_1$ and $\zeta_2$ such that
$$P(\left|T_n(k)-T_{fe,n}(k)\right|\geq \zeta_1)\leq \zeta_2,$$
where $\zeta_1k\log(np)=o(1)$ and $\zeta_2=o(1)$. Define the event
$\mathcal{B}=\{\left|T_n(k)-T_{fe,n}(k)\right|<\zeta_1\}$. Then we
have
\begin{align*}
&|P(T_n(k)\leq t)-P(T_{fe,n}(k)\leq t)|
\\ \leq & P(T_n(k)\leq t,T_{fe,n}(k)>t)+P(T_n(k)>t,T_{fe,n}(k)\leq t)
\\ \leq & P(T_n(k)\leq t,T_{fe,n}(k)>t,\mathcal{B})+P(T_n(k)>t,T_{fe,n}(k)\leq t,\mathcal{B})+2\zeta_2
\\ \leq & P(t-\zeta_1<T_n(k)\leq t)+P(t+\zeta_1\geq T_n(k)>t)+2\zeta_2.
\end{align*}
Let $V_i=\text{diag}^{-1/2}(\Gamma)\Gamma X_i$ and
$V=\sum^{n}_{i=1}V_i/\sqrt{n}$. Notice that $\left\{T_n(k)\leq
t\right\}=\left\{V\in \mathcal{A}(t;k)\right\}=\{\max_{v\in\mathcal{F}}v'V\leq \sqrt{t}\}.$ By Lemma \ref{lemma1}, we can find an $\epsilon$-net $\mathcal{F}_{\epsilon}$ of
$\mathcal{F}$ such that $||\mathcal{F}_{\epsilon}||_0\leq \{(2+\epsilon)ep/(\epsilon k)\}^k$ and
$$A_1(t):=\bigcap_{v\in\mathcal{F}_{\epsilon}}\{v'V\leq (1-\epsilon)\sqrt{t}\}\subseteq \mathcal{A}(t;k)\subseteq A_2(t):=\bigcap_{v\in\mathcal{F}_{\epsilon}}\{v'V\leq \sqrt{t}\}.$$
We set $\epsilon=1/n$ throughout the following arguments. Notice that
\begin{align*}
&P(t-\zeta_1<T_n(k)\leq t)
\\=&P(T_n(k)\leq t)-P(T_n(k)\leq t-\zeta_1)
\\ \leq &P(\max_{v\in\mathcal{F}_{\epsilon}}v'V\leq \sqrt{t})-P(\max_{v\in\mathcal{F}_{\epsilon}}v'V\leq (1-\epsilon)\sqrt{t-\zeta_1})
\\ \leq &P((1-\epsilon)(\sqrt{t}-\sqrt{\zeta_1})\leq \max_{v\in\mathcal{F}_{\epsilon}}v'V\leq \sqrt{t})
\\ \leq &P((1-\epsilon)(\sqrt{t}-\sqrt{\zeta_1})\leq \max_{v\in\mathcal{F}_{\epsilon}}v'V\leq (1-\epsilon)\sqrt{t})+P((1-\epsilon)\sqrt{t}<\max_{v\in\mathcal{F}_{\epsilon}}v'V\leq \sqrt{t})
\\ :=&I_{1}+I_2.
\end{align*}
Because $\phi(\Gamma;k)>c>0$, we have
$\var(\sum^{n}_{i=1}v'V_i/\sqrt{n})>c'$ for all
$v\in \mathcal{F}$ and some constant $c'>0.$ By the Nazarov
inequality [see Lemma A.1 in Chernozhukov et al. (2015) and Nazarov
(2003)], we have
\begin{align}\label{nazarov}
I_1\leq & C\sqrt{\zeta_1k\log(np/k)}=o(1).
\end{align}
To deal with $I_2$, we note when $t\leq k^3\{\log(np/k)\}^2$, $\epsilon\sqrt{t}\leq k^{3/2}\log(np/k)/n$. Again by the Nazarov's inequality, we have
$$I_2\leq P(\sqrt{t}-k^{3/2}\log(np/k)/n<\max_{v\in\mathcal{F}_{\epsilon}}v'V\leq \sqrt{t})\leq k^2\log(np/k)\sqrt{\log(np/k)}/n=o(1).$$
When $t>k^3\{\log(np/k)\}^2$, we have
\begin{align*}
I_2\leq P((1-\epsilon)\sqrt{t}\leq \max_{v\in\mathcal{F}_{\epsilon}}v'V)\leq \frac{\E\max_{v\in\mathcal{F}_{\epsilon}}v'V}{(1-\epsilon)k^{3/2}\log(np/k)}.
\end{align*}
By Lemma 7.4 in Fan et al. (2015), we have $\E\max_{v\in\mathcal{F}_{\epsilon}}v'V\leq C\sqrt{k\log(np/k)}.$
It thus implies that
$$I_2 \leq \frac{C\sqrt{k\log(np/k)}}{(1-\epsilon)k^{3/2}\log(np/k)}=o(1).$$
Summarizing the above derivations, we have $\rho_{1,n}=o(1)$.

\textbf{Step 2 (bounding $\rho_{2,n}$)}: Define
$\widehat{V}^*=\text{diag}^{-1/2}(\widehat{\Gamma})\widehat{\Gamma}\sum^{n}_{i=1}(X_i-\bar{X})e_i/\sqrt{n}$
with $e_i\sim^{i.i.d} N(0,1)$, where $e_i$'s are independent of
$X^n_1.$ Further define
\begin{align*}
\bar{\rho}=\max\{|P(V\in A_1(t))-P(\widehat{V}^*\in
A_{1}(t)|X^n_1)|,|P(V\in A_2(t))-P(\widehat{V}^*\in
A_2(t)|X^n_1)|\}.
\end{align*}
Using similar arguments in Step 1, we have
\begin{align*}
P(\widehat{V}^*\in\mathcal{A}(t;k) |X^n_1)\leq & P(\widehat{V}^*\in A_2(t) |X^n_1)\leq P(V\in A_2(t))+\bar{\rho}
\\ \leq & P(V\in A_1(t))+\bar{\rho}+o(1)
\\ \leq & P(V\in \mathcal{A}(t;k))+\bar{\rho}+o(1).
\end{align*}
Similarly we have $P(\widehat{V}^*\in\mathcal{A}(t;k)|X^n_1)\geq P(V\in \mathcal{A}(t;k))-\bar{\rho}-o(1)$. Together, we obtain
$$|P(\widehat{V}^*\in \mathcal{A}(t;k)|X^n_1)-P(V\in \mathcal{A}(t;k))|\leq \bar{\rho}+o(1).$$

Let $D=\text{diag}^{-1/2}(\Gamma)\Gamma\text{diag}^{-1/2}(\Gamma)$
and
$\widehat{D}=\text{diag}^{-1/2}(\widehat{\Gamma})\widehat{\Gamma}\widehat{\Sigma}\widehat{\Gamma}'\text{diag}^{-1/2}(\widehat{\Gamma}),$
where
$\widehat{\Sigma}=\sum^{n}_{i=1}(X_i-\bar{X})(X_i-\bar{X})'/n$.
Define $\Delta_n=\max_{u,v\in \mathcal{F}}|u(\widehat{D}-D)v|.$
Notice that $V\sim N(0,D)$ and $\widehat{V}^*|X^n_1 \sim
N(0,\widehat{D})$. To bound $\bar{\rho}$, we note that by equation
(49) in Chernozhukov et al. (2015),
\begin{align*}
&\sup_{t\geq 0}|P(V\in A_1(t))-P(\widehat{V}^*\in A_1(t)|X^n_1)|
\\=&\sup_{t\geq 0}|P(\max_{v\in\mathcal{F}_{\epsilon}}u'V \leq \sqrt{t})-P(\max_{u\in\mathcal{F}_{\epsilon}}u'\widehat{V}^*\leq \sqrt{t}|X^n_1)|
\\ \leq& C\Delta_n^{1/3}(k\log(np/k))^{2/3}.
\end{align*}
and similarly $|P(V\in A_2(t))-P(\widehat{V}^*\in
A_2(t)|X^n_1)|\leq C\Delta_n^{1/3}(k\log(np/k))^{2/3}.$
Therefore we get
$$\rho_{2,n}\leq C\Delta_n^{1/3}(\log(p))^{2/3}+o(1).$$

\textbf{Step 3}: Finally we bound $\Delta_n$. Note that for
any $u,v\in \mathcal{F}$,
\begin{align*}
&|u'(\widehat{D}-D)v|
\\ \leq& |u'(\widehat{D}-\text{diag}^{-1/2}(\widehat{\Gamma})\widehat{\Gamma}'\text{diag}^{-1/2}(\widehat{\Gamma}))v|+
|u'(\text{diag}^{-1/2}(\widehat{\Gamma})\widehat{\Gamma}'\text{diag}^{-1/2}(\widehat{\Gamma})-
\text{diag}^{-1/2}(\widehat{\Gamma})\Gamma\text{diag}^{-1/2}(\widehat{\Gamma}))v|
\\&+|u'(\text{diag}^{-1/2}(\widehat{\Gamma})\Gamma\text{diag}^{-1/2}(\widehat{\Gamma})-D)v|:=J_1+J_2+J_3.
\end{align*}
For the first term, we have
\begin{align*}
J_1=& |u'\text{diag}^{-1/2}(\widehat{\Gamma})(\widehat{\Gamma}\widehat{\Sigma}'\widehat{\Gamma}-\widehat{\Gamma}')\text{diag}^{-1/2}(\widehat{\Gamma})v|
\\ \leq &
|\text{diag}^{-1/2}(\widehat{\Gamma})u|_1|(\widehat{\Gamma}\widehat{\Sigma}'\widehat{\Gamma}-\widehat{\Gamma}')\text{diag}^{-1/2}(\widehat{\Gamma})v|_{\infty}
\\ \leq & |\text{diag}^{-1/2}(\widehat{\Gamma})u|_1|\text{diag}^{-1/2}(\widehat{\Gamma})v|_1||\widehat{\Gamma}\widehat{\Sigma}\widehat{\Gamma}'-\widehat{\Gamma}'||_{\infty}
\\ \leq & k|\text{diag}^{-1/2}(\widehat{\Gamma})u|_2|\text{diag}^{-1/2}(\widehat{\Gamma})v|_2||\widehat{\Gamma}\widehat{\Sigma}\widehat{\Gamma}'-\widehat{\Gamma}'||_{\infty}
=O_p(k\sqrt{d\log(p)/n}),
\end{align*}
where we have used Proposition \ref{prop0}. To handle the second term, note that
\begin{align*}
J_2=& |v'\text{diag}^{-1/2}(\widehat{\Gamma})(\widehat{\Gamma}-\Gamma)\text{diag}^{-1/2}(\widehat{\Gamma})u|
\\ \leq &  |\text{diag}^{-1/2}(\widehat{\Gamma})v|_1|(\widehat{\Gamma}-\Gamma)\text{diag}^{-1/2}(\widehat{\Gamma})u|_{\infty}
\\ \leq &  \sqrt{k}|\text{diag}^{-1/2}(\widehat{\Gamma})v|_2|\text{diag}^{-1/2}(\widehat{\Gamma})u|_{\infty}\max_{1\leq j\leq p}|\widehat{\Gamma}_j-\Gamma_j|_1
=O_p(d\sqrt{k\log(p)/n}).
\end{align*}
Finally, we have
\begin{align*}
J_3\leq& |u'(\text{diag}^{-1/2}(\widehat{\Gamma})\Gamma\text{diag}^{-1/2}(\widehat{\Gamma})-\text{diag}^{-1/2}(\Gamma)\Gamma\text{diag}^{-1/2}(\widehat{\Gamma}))v|+
|u'(\text{diag}^{-1/2}(\Gamma)\Gamma\text{diag}^{-1/2}(\widehat{\Gamma})-D)v|
\\ \leq & ||\Gamma||_2||\text{diag}^{-1/2}(\widehat{\Gamma})||_2\max_{1\leq j\leq p}|1/\sqrt{\gamma_{jj}}-1/\sqrt{\widehat{\gamma}_{jj}}|
+||\Gamma||_2||\text{diag}^{-1/2}(\Gamma)||_2\max_{1\leq j\leq p}|1/\sqrt{\gamma_{jj}}-1/\sqrt{\widehat{\gamma}_{jj}}|
\\=&\sqrt{d\log(p)/n}.
\end{align*}
Under the assumption that $k^2d(\log(np))^{5/2}/\sqrt{n}=o(1)$, we
have $(\log(np))^2J_i=o_p(1)$ for $1\leq i\leq 3$. Therefore we get
$(\log(np))^2\Delta_n=o_p(1)$, which implies that
$\rho_{2,n}=o_p(1)$. The proof is thus completed by combining Steps
1-3.
\end{proof}

\begin{proof}[Proof of Theorem \ref{thm2}]
We first note that by the triangle inequality,
\begin{align*}
&\sup_{t_M\geq t_{M-1}\geq \cdots \geq t_1\geq 0}\left|P\left(\bigcap^{M}_{j=1}\{T_{fe,n}^*(j)\leq
t_j\}\bigg|X_1^n\right)-P\left(\bigcap^M_{j=1}\{T_{fe,n}(j)\leq
t_j\}\right)\right|
\\ \leq &\sup_{t_M\geq t_{M-1}\geq \cdots \geq t_1\geq 0}\left|P\left(\bigcap^{M}_{j=1}\{T_n(j)\leq
t_j\}\right)-P\left(\bigcap^M_{j=1}\{T_{fe,n}(j)\leq
t_j\}\right)\right|
\\&+\sup_{t_M\geq t_{M-1}\geq \cdots \geq t_1\geq 0}\left|P\left(\bigcap^{M}_{j=1}\{T_n(j)\leq
t_j\}\right)-P\left(\bigcap^M_{j=1}\{T_{fe,n}^*(j)\leq
t_j\}\bigg|X_1^n\right)\right|:=\varrho_{1,n}+\varrho_{2,n}.
\end{align*}

\textbf{Step 1 (bounding $\varrho_{1,n}$)}: Following the proof of
Theorem \ref{thm}, we have for any $1\leq j\leq M$,
\begin{align*}
\max_{1\leq j\leq M}\left|T_n(j)-T_{fe,n}(j)\right|\leq nM\max_{1\leq j\leq p}|\widehat{\xi}_j^2-\xi_j^2|
=O_p\left(Md(\log(p))^{3/2}/\sqrt{n}\right).
\end{align*}
Under the assumption that $M^4d(\log(np))^{5/2}/\sqrt{n}=o(1)$, one
can pick $\zeta$ such that $P(\max_{1\leq j\leq
M}\left|T_n(j)-T_{fe,n}(j)\right|>\zeta)=o(1)$ and $\zeta M^3\log
(np)=o(1)$. Define $\mathcal{B}=\{\max_{1\leq j\leq
M}\left|T_n(j)-T_{fe,n}(j)\right|\leq \zeta\}$. We note that
\begin{align*}
&\left|P\left(\bigcap^{M}_{j=1}\{T_n(j)\leq
t_j\}\right)-P\left(\bigcap^{M}_{j=1}\{T_{fe,n}(j)\leq
t_j\}\right)\right|
\\ \leq & P\left(\bigcap^{M}_{j=1}\{T_n(j)\leq
t_j\},\bigcup^{M}_{j=1}\{T_{fe,n}(j)>t_j\},\mathcal{B}\right)
\\&+P\left(\bigcup^{M}_{j=1}\{T_n(j)>
t_j\},\bigcap^{M}_{j=1}\{T_{fe,n}(j)\leq
t_j\},\mathcal{B}\right)+o(1)
\\ \leq & P\left(\bigcup^{M}_{j=1}\{t_j-\zeta<T_n(j)\leq
t_j\}\right)+P\left(\bigcup^{M}_{j=1}\{t_j< T_n(j)\leq
t_j+\zeta\}\right)+o(1).
\end{align*}
By the arguments in the proof of Theorem \ref{thm}, we have
$P\left(t_j-\zeta<T_n(j)\leq t_j\right)=o(1)$. A careful inspection
of the proof shows that
\begin{align*}
&\max_{1\leq j\leq M}\max\{P\left(t_j-\zeta<T_n(j)\leq
t_j\right),P\left(t_j<T_n(j)\leq t_j+\zeta\right)\}
\\ \leq&
C\left\{\sqrt{\zeta
M\log(np)}+M^2(\log(np))^{3/2}/n+1/(M\sqrt{\log(np/M)})\right\},
\end{align*}
where the uniformity over $1\leq j\leq M$ is due to the fact that the constant $C$ in (\ref{nazarov}) is independent of $t$. By the union bound, we deduce that
\begin{align*}
&\left|P\left(\bigcap^{M}_{j=1}\{T_n(j)\leq
t_j\}\right)-P\left(\bigcap^{M}_{j=1}\{T_{fe,n}(j)\leq
t_j\}\right)\right|
\\ \leq &\sum^{M}_{j=1}\left(P\left(t_j-\zeta<T_n(j)\leq
t_j\right)+P\left(t_j< T_n(j)\leq t_j+\zeta\right)\right)+o(1)
\\ \leq &M\max_{1\leq j\leq M}\left(P\left(t_j-\zeta<T_n(j)\leq
t_j\right)+P\left(t_j< T_n(j)\leq t_j+\zeta\right)\right)+o(1)
\\ \leq & C\left(\sqrt{\zeta
M^3\log(np)}+M^3(\log(np))^{3/2}/n+1/\sqrt{\log(np/M)}\right)+o(1)=o(1).
\end{align*}

\textbf{Step 2 (bounding $\varrho_{2,n}$)}: For
$\mathbf{t}=(t_1,\dots,t_M)$, define
\begin{align*}
\mathcal{A}(\mathbf{t})=\bigcap^{M}_{j=1}\mathcal{A}(t_j;j)=\bigcap^M_{j=1}\bigcap_{S\subseteq
[p],||S||_0=j}\{w\in\mathbb{R}^p: w_{S}'w_{S}\leq t_j\}.
\end{align*}
It is easy to see that
\begin{align*}
\bigcap^{M}_{j=1}\{T_n(j)\leq t_j\}=\{V\in
\mathcal{A}(\mathbf{t})\},\quad \bigcap^{M}_{j=1}\{T_{fe,n}^*(j)\leq
t_j\}=\{\widehat{V}^*\in \mathcal{A}(\mathbf{t})\}.
\end{align*}
By Lemma \ref{lemma1}, we know for any fixed
$\mathbf{t}$,
$$\mathbf{A}_1(\mathbf{t}):=\bigcap_{j=1}^MA_1(t_j)\subseteq \mathcal{A}(\mathbf{t})\subseteq \mathbf{A}_2(\mathbf{t}):=\bigcap_{j=1}^MA_2(t_j),$$
where $A_1(t_j)=\bigcap_{v\in
\mathcal{F}_{\epsilon}(j)}\{w\in\mathbb{R}^p:w'v\leq
(1-\epsilon)\sqrt{t_j}\}$ and $A_2(t_j)=\bigcap_{v\in
\mathcal{F}_{\epsilon}(j)}\{w\in\mathbb{R}^p:w'v\leq \sqrt{t_j}\}$
with $\epsilon=1/n$ and $\mathcal{F}_{\epsilon}(j)$ being an
$\epsilon$-net for
$\mathcal{F}(j):=\{v\in\mathbb{S}^{p-1}:||v||_0\leq j\}$. Note that
$\mathbf{A}_1(\mathbf{t})$ and $\mathbf{A}_2(\mathbf{t})$ are both
intersections of no more than $M\{(2+1/n)epn\}^M$ half spaces. Thus
following the arguments in Steps 2 and 3 of the proof of Theorem
\ref{thm}, we can show that $\varrho_{2,n}=o_p(1)$, which completes
our proof.
\end{proof}

\begin{proof}[Proof of Theorem \ref{thm:consis}]
The proof contains two steps. In the first step, we establish the
consistency of the infeasible test $T_n(k)$, while in the second
step we further show that the estimation effect caused by replacing
$\Gamma$ with $\widehat{\Gamma}$ is asymptotically negligible.
\\
\textbf{Step 1:} Consider the infeasible test $T_n(k)=\max_{1\leq
j_1<j_2<\cdots<j_k\leq
n}\sum^{k}_{l=1}\frac{nz_{j_l}^2}{\gamma_{j_l,j_l}}.$ Define
$\widetilde{\theta}=(\widetilde{\theta}_1,\dots,\widetilde{\theta}_p)'$
with $\widetilde{\theta}_j=(\Gamma\theta)_j/\sqrt{\gamma_{jj}}$, and
$q_j=\frac{z_j-(\Gamma\theta)_j}{\sqrt{\gamma_{jj}}}.$ Note that
$\frac{z_{j}^2}{\gamma_{jj}}=\left(q_j+\widetilde{\theta}_j\right)^2.$
Also by Lemma 3 of Cai et al. (2014), we have for any $2r<a<1-2r$,
\begin{equation}\label{eq:sig}
P\left(\max_{j\in
H}|\widetilde{\theta}_j-\sqrt{\gamma_{jj}}\theta_{j}|=O(p^{r-a/2})\max_{j\in
H}|\theta_j|\right)\rightarrow 1,
\end{equation}
where $H$ denotes the support of $\theta$. Suppose $\sqrt{\gamma_{j_1^*,j_1^*}}\theta_{j_1^*}\geq
\sqrt{\gamma_{j_2^*,j_2^*}}\theta_{j_2^*}\geq \cdots\geq
\sqrt{\gamma_{j_p^*,j_p^*}}\theta_{j_p^*}.$ We deduce that with
probability tending to one,
\begin{align*}
T_n(k)=&\max_{1\leq j_1<j_2<\cdots<j_k\leq
n}\sum^{k}_{l=1}n\left(q_{j_l}^2+\widetilde{\theta}_{j_l}^2+2q_{j_l}\widetilde{\theta}_{j_l}\right)
\\ \geq & \left(n\sum^{k}_{l=1}\gamma_{j_l^*,j_l^*}\theta_{j_l^*}^2+n\sum^{k}_{l=1}q_{j_l^*}^2+2n\sum^{k}_{l=1}q_{j_l^*}
\sqrt{\gamma_{j_l^*j_l^*}}\theta_{j_l^*}\right)(1+o(1))
\\ \geq & \left\{n\sum^{k}_{l=1}\gamma_{j_l^*,j_l^*}\theta_{j_l^*}^2+n\sum^{k}_{l=1}q_{j_l^*}^2-2n\left(\sum^{k}_{l=1}q_{j_l^*}^2\right)^{1/2}
\left(\sum^{k}_{l=1}\gamma_{j_l^*j_l^*}\theta_{j_l^*}^2\right)^{1/2}\right\}(1+o(1)).
\end{align*}
We claim that $n\sum^{k}_{l=1}q_{j_l}^2=O_p(k)$ for any $\widetilde{S}:=\{j_1,\dots,j_k\}\subseteq [p].$ Note that
\begin{align*}
\var\left(\sum^{k}_{l=1}(nq_{j_l}^2-1)/\sqrt{k}\right)=&\frac{2}{k}\sum^{k}_{i,l=1}\gamma_{j_i,j_l}^2/(\gamma_{j_i,j_i}\gamma_{j_l,j_l})
\leq \frac{C}{k}\sum^{k}_{i,l=1}\gamma_{j_i,j_l}^2=\frac{C}{k}\text{Tr}(\Gamma_{\widetilde{S},\widetilde{S}}^2)
\leq C/\lambda_{\min}^2\leq C,
\end{align*}
where $\text{Tr}(\cdot)$ denotes the trace of a matrix.
Therefore conditional on $\widetilde{S}$, $n\sum^{k}_{l=1}q_{j_l}^2=O_p(k)$. As $q_j$ is independent of the non-zero locations of $\theta$, $n\sum^{k}_{l=1}q_{j_l^*}^2=O_p(k)$.

By the assumption that
$\sum^{k}_{l=1}\gamma_{j_l^*,j_l^*}\theta_{j_l^*}^2\geq
(2k+\epsilon)\log(p)/n$ and
$n\sum^{k}_{l=1}q_{j_l^*}^2=O_p(k)$, we obtain
\begin{align}\label{power-eq1}
T_n(k)\geq
\left\{(2k+\epsilon)\log(p)+O_p(k)-O_p(\sqrt{(2k+\epsilon)k\log(p)})\right\}(1+o(1)).
\end{align}
Under Assumption \ref{ass4a}, we have by Lemma 6 of Cai et al. (2014),
$$\max_{1\leq j_1<j_2<\cdots<j_k\leq p}n\sum^{k}_{l=1}q_{j_l}^2\leq kn\max_{1\leq j\leq p}q_{j}^2=\{2k\log(p)-k\log\log(p)\}+O_p(1).$$
%Note that when $\Gamma=I_p$, $\max_{1\leq j_1<j_2<\cdots<j_k\leq p}n\sum^{k}_{l=1}q_{j_l}^2-\{2k\log(p)-k\log\log(p)\}=O_p(1)$, see Proposition 3.2 of Fan et al. (2015).
As shown in Theorem \ref{thm}, the bootstrap statistic
$T_{fe,n}^*(k)$ imitates the sampling distribution of $\max_{1\leq
j_1<j_2\cdots<j_k\leq n}\sum^{k}_{l=1}nq_{j_l}^2$. By Step 2 in the
proof of Theorem \ref{thm}, we have
\begin{align}\label{power-eq2}
c_{\alpha}^*(k)\leq \{2k\log p-k\log\log(p)\}+O_p(1).
\end{align}
Combining (\ref{power-eq1}) and (\ref{power-eq2}), we get
$$P(T_n(k)>c_{\alpha}^*(k))\rightarrow 1.$$
\\
\textbf{Step 2:} Next we quantify the difference between $T_n(k)$
and $T_{fe,n}(k)$. Note that
\begin{align*}
\left|T_n(k)-T_{fe,n}(k)\right|\leq& \max_{1\leq
j_1<j_2<\cdots<j_k\leq
p}n\sum^{k}_{l=1}\left|\xi_{j_l}^2-\widehat{\xi}_{j_l}^2\right| \leq
nk\max_{1\leq j\leq p}|\widehat{\xi}_j^2-\xi_j^2|.
\end{align*}
We define $\widehat{\theta}_j$ and $\widehat{q}_j$ by replacing
$\Gamma$ with $\widehat{\Gamma}$ in $\widetilde{\theta}_j$ and
$q_j$. Simple algebra yields that
\begin{align}\label{eq:xi}
\widehat{\xi}_j^2-\xi_j^2=&(\widehat{q}_j+\widehat{\theta}_j)^2-(q_j+\widetilde{\theta}_j)^2
=(\widehat{q}_j-q_j+\widehat{\theta}_j-\widetilde{\theta}_j)(q_j+\widetilde{\theta}_j+\widehat{q}_j+\widehat{\theta}_j).
\end{align}
Using similar argument in Step 1 of the proof of Theorem \ref{thm},
we obtain
\begin{align*}
\max_{1\leq j\leq p}|q_j-\widehat{q}_j|=O_p\left(d\log(p)/n\right),
\end{align*}
and
\begin{align*}
\max_{1\leq j\leq
p}|\widetilde{\theta}_j-\widehat{\theta}_j|=O_p(d\sqrt{\log(p)/n})\max_{1\leq
j\leq p}|\theta_j| +O_p(\sqrt{d\log(p)/n})\max_{1\leq j\leq
p}|\widetilde{\theta}_j|,
\end{align*}
where we have used (\ref{e2}), (\ref{eq:sqrt}), the triangle inequality and the fact that $\max_{1\leq j\leq
p}|(\Gamma\theta)_j|\leq C\max_{1\leq j\leq
p}|\widetilde{\theta}_j|$ for some $C>0.$ By (\ref{eq:sig}), we have
with probability tending to one, $\max_{j\in
H}|\widetilde{\theta}_j|\leq (C'+o(1))\max_{j\in H}|\theta_j|$ for
$C'>0.$ Define the event
$$A=\left\{\max_{1\leq j\leq p}|\theta_j|<C_0\sqrt{\log(p)/n}\right\},$$
for some large enough constant $C_0>0.$ On $A$, we have $\max_{1\leq
j\leq p}|\widetilde{\theta}_j-\widehat{\theta}_j|=O_p(d\log(p)/n).$
In view of (\ref{eq:xi}), we have on the event $A$,
$$|\widehat{\xi}_j^2-\xi_j^2|=O_p(d(\log(p)/n)^{3/2})$$
which implies that
$$\left|T_n(k)-T_{fe,n}(k)\right|\leq O_p(kd(\log(p))^{3/2}/\sqrt{n}).$$
For any $\epsilon>0$, pick $C''$ such that
$$P(\left|T_n(k)-T_{fe,n}(k)\right|\leq C^{''}kd(\log(p))^{3/2}/\sqrt{n}|A)\geq 1-\epsilon.$$
Thus we have
\begin{equation}
\begin{split}\label{eq:f1}
P(T_{fe,n}(k)>c_{\alpha}^*(k)|A)\geq &
P(T_{n}(k)>c_{\alpha}^*(k)+|T_n(k)-T_{fe,n}(k)||A)
\\\geq &P(T_{n}(k)>c_{\alpha}^*(k)+C^{''}kd(\log(p))^{3/2}/\sqrt{n}|A)-\epsilon.
\end{split}
\end{equation}
Recall for $\zeta>0$ with
$\zeta k\log(np)=o(1)$, we have
$$P(t\leq T_n(k)\leq t+\zeta)=o(1).$$
See Step 1 in the proof of Theorem \ref{thm}. Under the assumption
that $k^2d(\log(np))^{5/2}/\sqrt{n}=o(1)$, we have
\begin{align*}
&P(T_{n}(k)>c_{\alpha}^*(k)|A)-P(T_{n}(k)>c_{\alpha}^*(k)+C^{''}kd(\log(p))^{3/2}/\sqrt{n}|A)
\\=&P(c_{\alpha}^*(k)<T_{n}(k)\leq c_{\alpha}^*(k)+C^{''}kd(\log(p))^{3/2}/\sqrt{n}|A)=o(1).
\end{align*}
Together with (\ref{eq:f1}) and the result in Step 1, we obtain
\begin{align*}
P(T_{fe,n}(k)>c_{\alpha}^*(k)|A)\geq
P(T_{n}(k)>c_{\alpha}^*(k)|A)-o(1)-\epsilon\rightarrow 1-\epsilon.
\end{align*}

Suppose $\max_{1\leq j\leq p}|\theta_j|=|\theta_{k_1^*}|$. On $A^c$,
we have for large enough $C_0$,
\begin{align*}
T_n(k)\geq
&n\left(q_{k_1^*}^2+\widetilde{\theta}_{k_1^*}^2+2q_{k_1^*}\widetilde{\theta}_{k_1^*}\right)
\geq C_1\log(p)>2k\log(p)-k\log\log(p),
\end{align*}
which holds with probability tending to one. It implies that
$P(T_n(k)>2k\log(p)-k\log\log(p)|A^c)\rightarrow 1.$ Similar
argument indicates that
\begin{align}\label{eq:f2}
P(T_{fe,n}(k)>c_{\alpha}^*(k)|A^c)\rightarrow 1.
\end{align}
By (\ref{eq:f1}) and (\ref{eq:f2}), we deduce that
$$P(T_{fe,n}(k)>c_{\alpha}^*(k))=P(A)P(T_{fe,n}(k)>c_{\alpha}^*|A)+P(A^c)P(T_{fe,n}(k)>c_{\alpha}^*|A^c)\rightarrow 1-\epsilon P(A).$$
The conclusion follows as $\epsilon$ is arbitrary.

Finally, we show the consistency of $\widetilde{T}_{fe,n}(M)$. As
$\widetilde{T}_{fe,n}^*(M)$ imitates the sampling distribution of
$\widetilde{T}_{fe,n}(M)$ under the null, we know
$$\widetilde{c}_{\alpha}^*(M)=\sqrt{2M}\log(p)(1+o_p(1)).$$
Therefore we have
\begin{align*}
P(\widetilde{T}_{fe,n}(M)>\widetilde{c}_{\alpha}^*(M))\geq
P\left(\sqrt{\frac{1-M/n}{2M}}\left(T_{fe,n}(M)-M\right)
>\widetilde{c}_{\alpha}^*(M)\right)\rightarrow 1.
\end{align*}
\end{proof}

\begin{proof}[Proof of Proposition \ref{prop00}]
Under Assumption \ref{ass5}, we have
$|\bar{X}|_{\infty}=O_p(\sqrt{\log(p)/n})$ and
$|Z|_{\infty}=O_p(\sqrt{\log(p)/n})$. By the arguments in van de
Geer et al. (2014), (\ref{e1}), (\ref{e2}) and (\ref{e3}) still hold
under the sub-gaussian assumption. Therefore using the same
arguments in Step 1 of the proof of Theorem \ref{thm}, we can pick
$\zeta_1$ and $\zeta_2$ such that
$$P(\left|T_n(k)-T_{fe,n}(k)\right|>\zeta_1)\leq \zeta_2,$$
where $\zeta_1k\log(np)=o(1)$ and $\zeta_2=o(1)$. Thus we have
\begin{align*}
&|P(T_n(k)\leq t)-P(T_{fe,n}(k)\leq t)| \leq P(t-\zeta_1<T_n(k)\leq t)+P(t+\zeta_1\geq T_n(k)>t)+2\zeta_2.
\end{align*}
By Lemma \ref{lemma1}, we have
\begin{align*}
&P(t-\zeta_1<T_n(k)\leq t)
\\ \leq &P((1-\epsilon)(\sqrt{t}-\sqrt{\zeta_1})\leq \max_{v\in\mathcal{F}_{\epsilon}}v'V\leq (1-\epsilon)\sqrt{t})+P((1-\epsilon)\sqrt{t}\leq \max_{v\in\mathcal{F}_{\epsilon}}v'V\leq \sqrt{t}).
\end{align*}
Corollary 2.1 in Chernozukov et al. (2015) yields that
\begin{align*}
&P(t-\zeta_1<T_n(k)\leq t)
\\ \leq &P((1-\epsilon)(\sqrt{t}-\sqrt{\zeta_1})\leq \max_{v\in\mathcal{F}_{\epsilon}}v'W\leq (1-\epsilon)\sqrt{t})+P((1-\epsilon)\sqrt{t}\leq \max_{v\in\mathcal{F}_{\epsilon}}v'W\leq \sqrt{t})+c_{n,p,k},
\end{align*}
where $\epsilon=1/n$ and $c_{n,p,k}=C\{k\log(pn/k)\}^{7/6}/n^{1/6}=o(1)$ under the assumption in Proposition \ref{prop00}. Thus following the arguments in the proof of Theorem \ref{thm}, we can show that
\begin{align}\label{p1}
\sup_{t\geq 0}\left|P\left(T_{fe,n}(k)\leq
t\right)-P\left(T_{n}(k)\leq t\right)\right|=o_p(1),
\end{align}
as we only need to deal with the Gaussian vector $W$ and the
arguments are analogous as above. On the other hand, by Lemma
\ref{lemma1} and Corollary 2.1 in Chernozukov et al. (2015), we have
\begin{align}
P\left(T_{n}(k)\leq t\right)-P\left(T^W(k)\leq t\right)\leq& P(\max_{v\in\mathcal{F}_{\epsilon}}v'V\leq \sqrt{t})-P(\max_{v\in\mathcal{F}_{\epsilon}}v'W\leq (1-\epsilon)\sqrt{t}) \nonumber
\\ \leq& P((1-\epsilon)\sqrt{t}\leq \max_{v\in\mathcal{F}_{\epsilon}}v'W\leq \sqrt{t})+c_{n,p,k}. \label{e4}
\end{align}
Similarly $P\left(T^W(k)\leq t\right)-P\left(T_{n}(k)\leq t\right)$ can be bounded above by the same quantity on the RHS of (\ref{e4}). Thus we have
\begin{align}\label{p2}
\sup_{t\geq 0}|P\left(T_{n}(k)\leq t\right)-P\left(T^W(k)\leq t\right)|\leq& \sup_{t\geq 0}|P((1-\epsilon)\sqrt{t}\leq \max_{v\in\mathcal{F}_{\epsilon}}v'W\leq \sqrt{t})|+c_{n,p,k}=o(1),
\end{align}
where we have used the Nazarov inequality and Lemma 7.4 in Fan et al. (2015) to control the term $\sup_{t\geq 0}|P((1-\epsilon)\sqrt{t}\leq \max_{v\in\mathcal{F}_{\epsilon}}v'W\leq \sqrt{t})|$.
The conclusion thus follows from (\ref{p1}) and (\ref{p2}).
\end{proof}

\begin{proof}[Proof of Proposition \ref{prop1}]
The negative log-likelihood (up to a constant) is given by
\begin{align*}
l_n(\theta_1,\theta_2)=\frac{1}{2}\sum^{n_1}_{i=1}(X_i-\theta_1)'\Gamma_1(X_i-\theta_1)+\frac{1}{2}\sum^{n_2}_{i=1}(Y_i-\theta_2)'\Gamma_2(Y_i-\theta_2).
\end{align*}
Under the null, we have $\theta:=\theta_1=\theta_2$. The MLE for $\theta$ is given by
$$\widetilde{\theta}=(n_1\Gamma_1+n_2\Gamma_2)^{-1}\left(\Gamma_1\sum^{n_1}_{i=1}X_i+\Gamma_2\sum^{n_2}_{i=1}Y_i\right).$$
Define
$$(\widetilde{\Delta},\widetilde{\theta}_2)=\underset{\theta_2\in\mathbb{R}^p,\Delta\in \Theta_{a,k}}{\arg\min}l_n(\theta_2+\Delta,\theta_2)$$
and $\widetilde{\theta}_1=\widetilde{\theta}_2+\widetilde{\Delta}.$
Taking the derivative of $l_n(\theta_2+\Delta,\theta_2)$ with
respect to $\theta_2$ and setting it to be zero, we obtain
$$\widetilde{\theta}_2=\widetilde{\theta}-C_1\widetilde{\Delta}.$$
Thus by direct calculation, we have
\begin{align*}
\min_{\theta_1-\theta_2\in
\Theta_{a,k}}l_n(\theta_1,\theta_2)=&\min_{\Delta\in\Theta_{a,k}}\bigg[\frac{n_1}{2}\left\{\Delta'C_2'\Gamma_1C_2\Delta-2\Delta'C_2'\Gamma_1(\bar{X}-\widetilde{\theta})\right\}
\\&+\frac{n_2}{2}\left\{\Delta'C_1'\Gamma_2C_1\Delta+2\Delta'C_1'\Gamma_2(\bar{Y}-\widetilde{\theta})\right\}\bigg]+l_n(\widetilde{\theta},\widetilde{\theta}).
\end{align*}
The log-likelihood ratio test for
testing $H_0'$ against $H_{a,k}'$ is given by
\begin{align*}
LR_n(k)=&2l_n(\widetilde{\theta},\widetilde{\theta})
-2\min_{\theta_1-\theta_2\in
\Theta_{a,k}}l_n(\theta_1,\theta_2)
\\=&\max_{\Delta\in\Theta_{a,k}}\bigg[n_1\left\{2\Delta'C_2'\Gamma_1(\bar{X}-\widetilde{\theta})-\Delta'C_2'\Gamma_1C_2\Delta\right\}
\\&-n_2\left\{\Delta'C_1'\Gamma_2C_1\Delta+2\Delta'C_1'\Gamma_2(\bar{Y}-\widetilde{\theta})\right\}\bigg].
\end{align*}
Recall that $\Omega^{21}=C_2'\Gamma_1C_2$, $\Omega^{12}=C_1'\Gamma_2C_1$,
$\widetilde{X}=C_2'\Gamma_1(\bar{X}-\widetilde{\theta})$ and
$\widetilde{Y}=C_1'\Gamma_2(\bar{Y}-\widetilde{\theta})$. Therefore we have
\begin{align}
LR_n(k)=&\max_{S:||S||_0=k}\max_{\Delta_S\in\mathbb{R}^k}\bigg[n_1\left\{2\Delta'_S\widetilde{X}_S-\Delta'_S\Omega^{21}_{S,S}\Delta_S\right\}-n_2\left\{\Delta'_S\Omega^{12}_{S,S}\Delta_S+2\Delta'_S\widetilde{Y}_S\right\}\bigg] \label{eq:m}
\\=&\max_{S:||S||_0=k}(n_1\widetilde{X}_S-n_2\widetilde{Y}_S)'(n_1\Omega_{S,S}^{21}+n_2\Omega_{S,S}^{12})^{-1}(n_1\widetilde{X}_S-n_2\widetilde{Y}_S), \nonumber
\end{align}
where the maximizer in (\ref{eq:m}) is equal to $\hat{\Delta}_S=(n_1\Omega_{S,S}^{21}+n_2\Omega_{S,S}^{12})^{-1}(n_1\widetilde{X}_S-n_2\widetilde{Y}_S).$
\end{proof}

\begin{figure}[h]
\centering
\includegraphics[height=5.3cm,width=5.8cm]{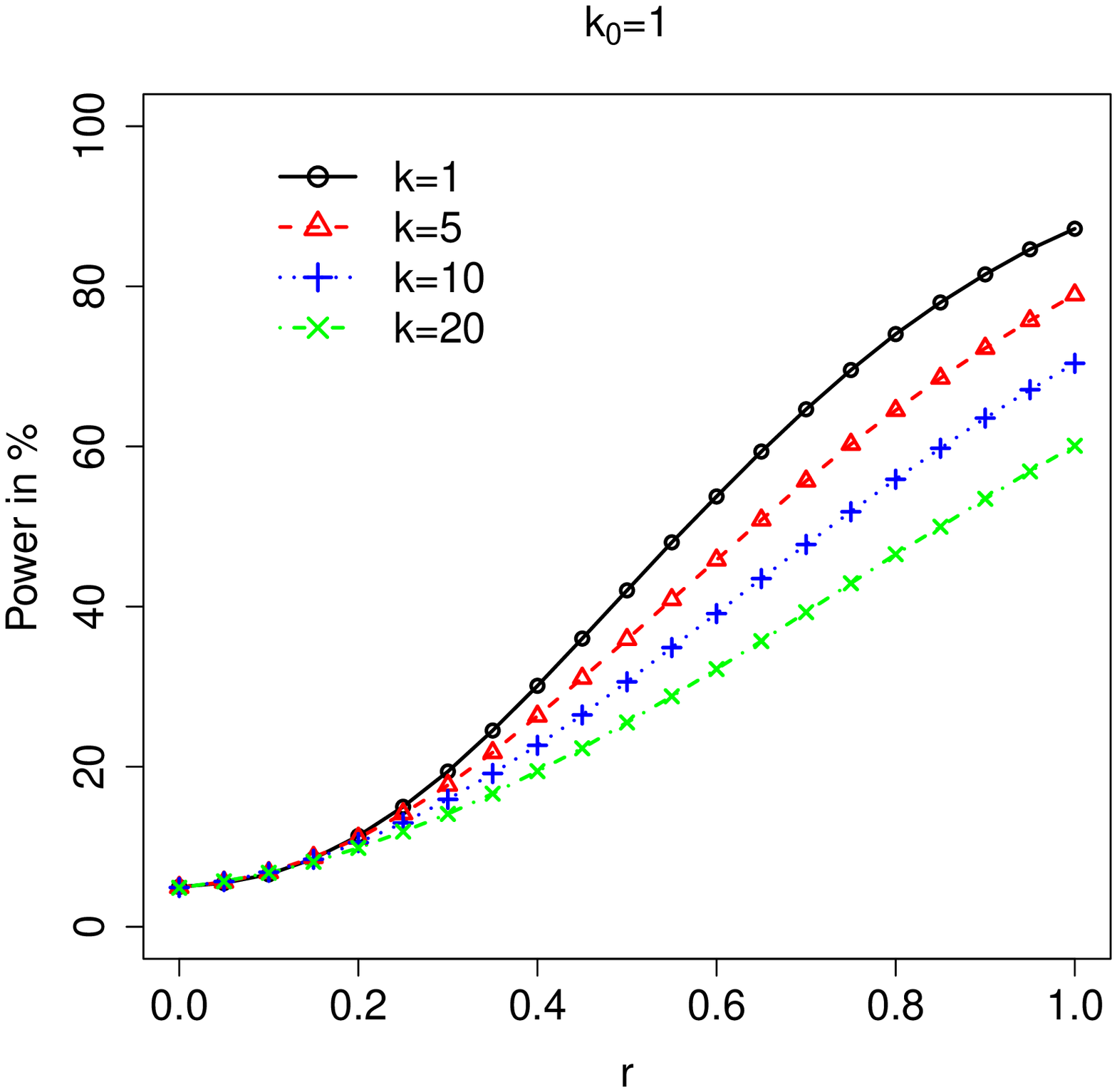}
\includegraphics[height=5.3cm,width=5.8cm]{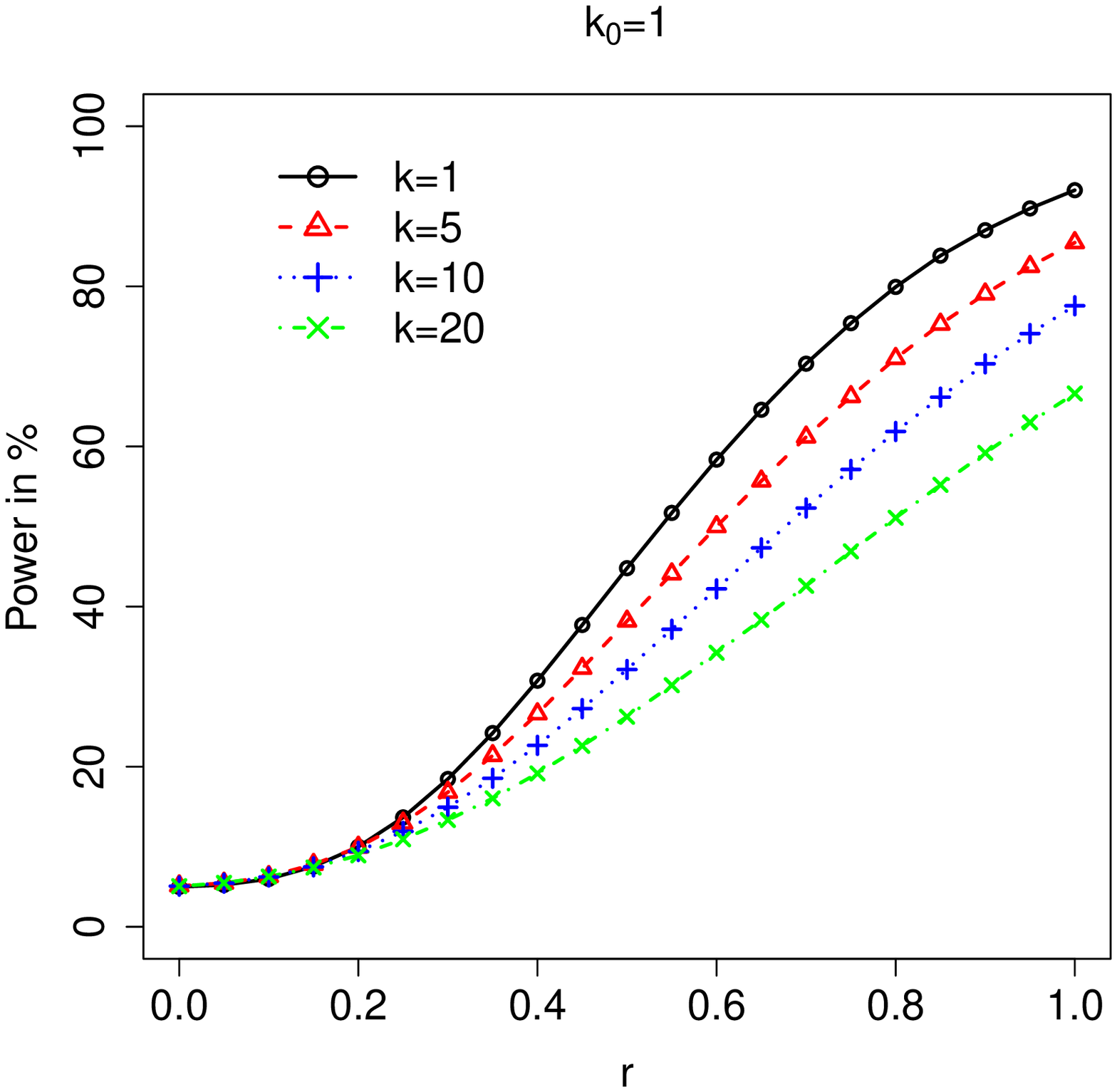}
\includegraphics[height=5.3cm,width=5.8cm]{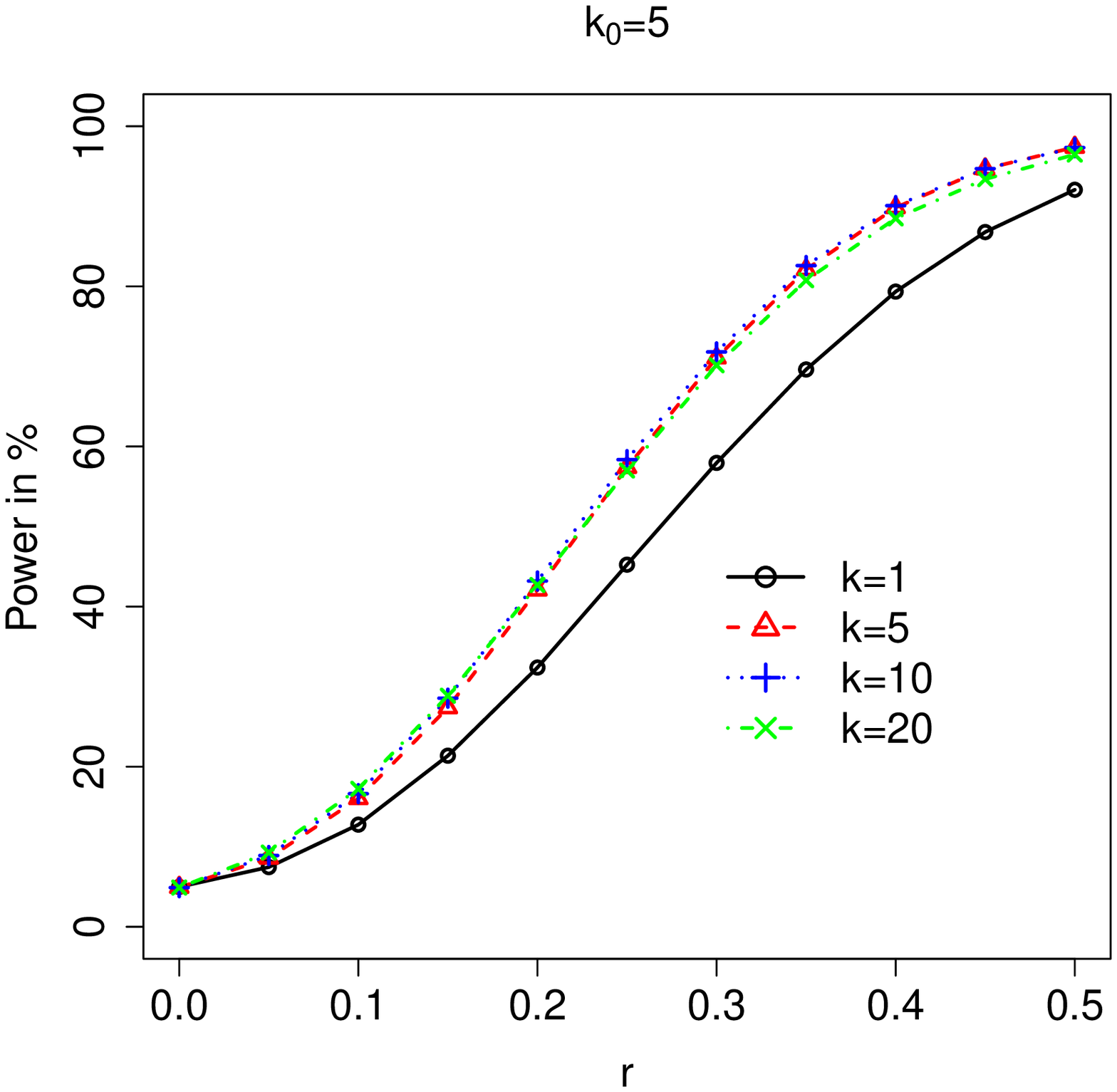}
\includegraphics[height=5.3cm,width=5.8cm]{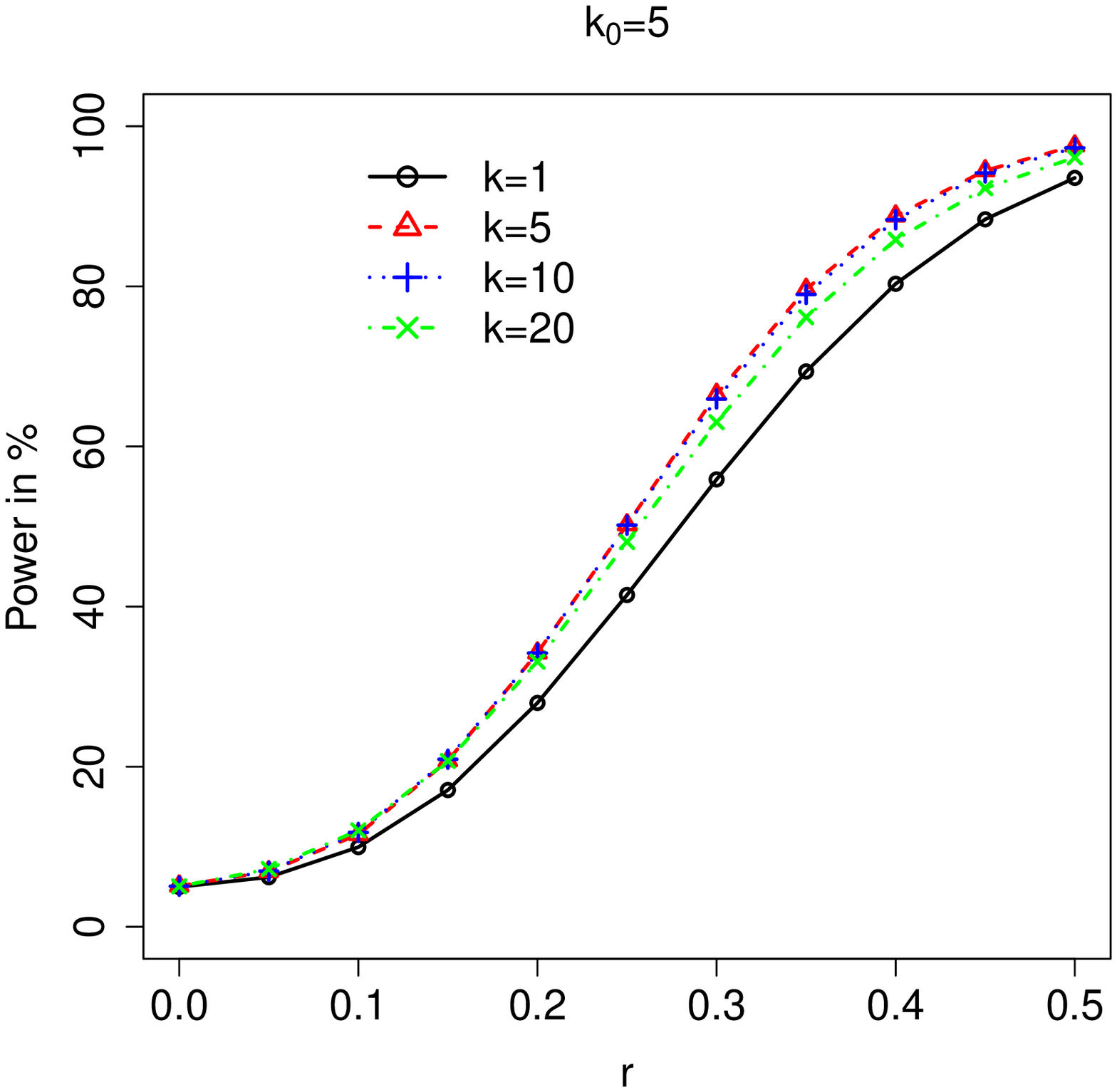}
\includegraphics[height=5.3cm,width=5.8cm]{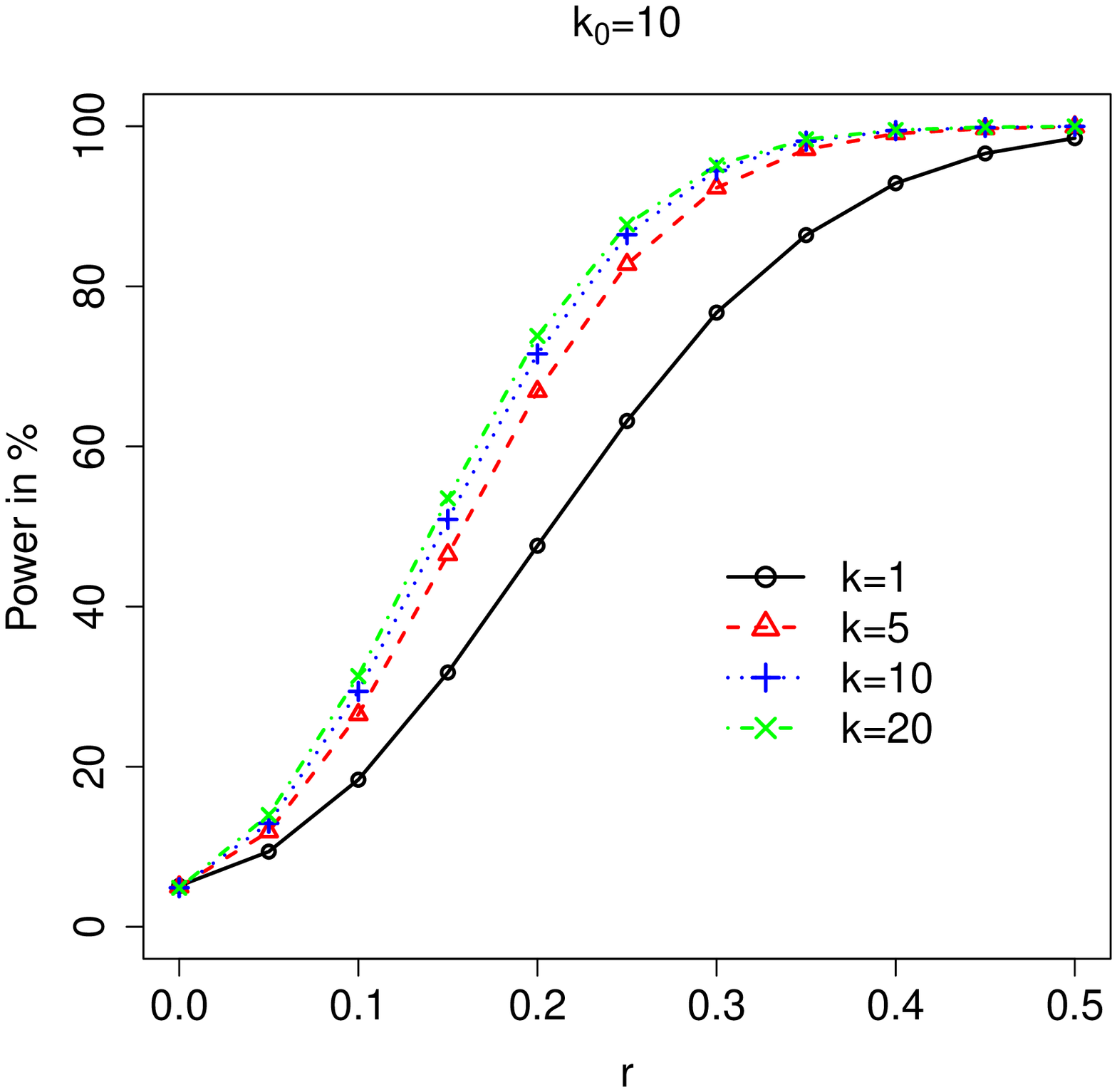}
\includegraphics[height=5.3cm,width=5.8cm]{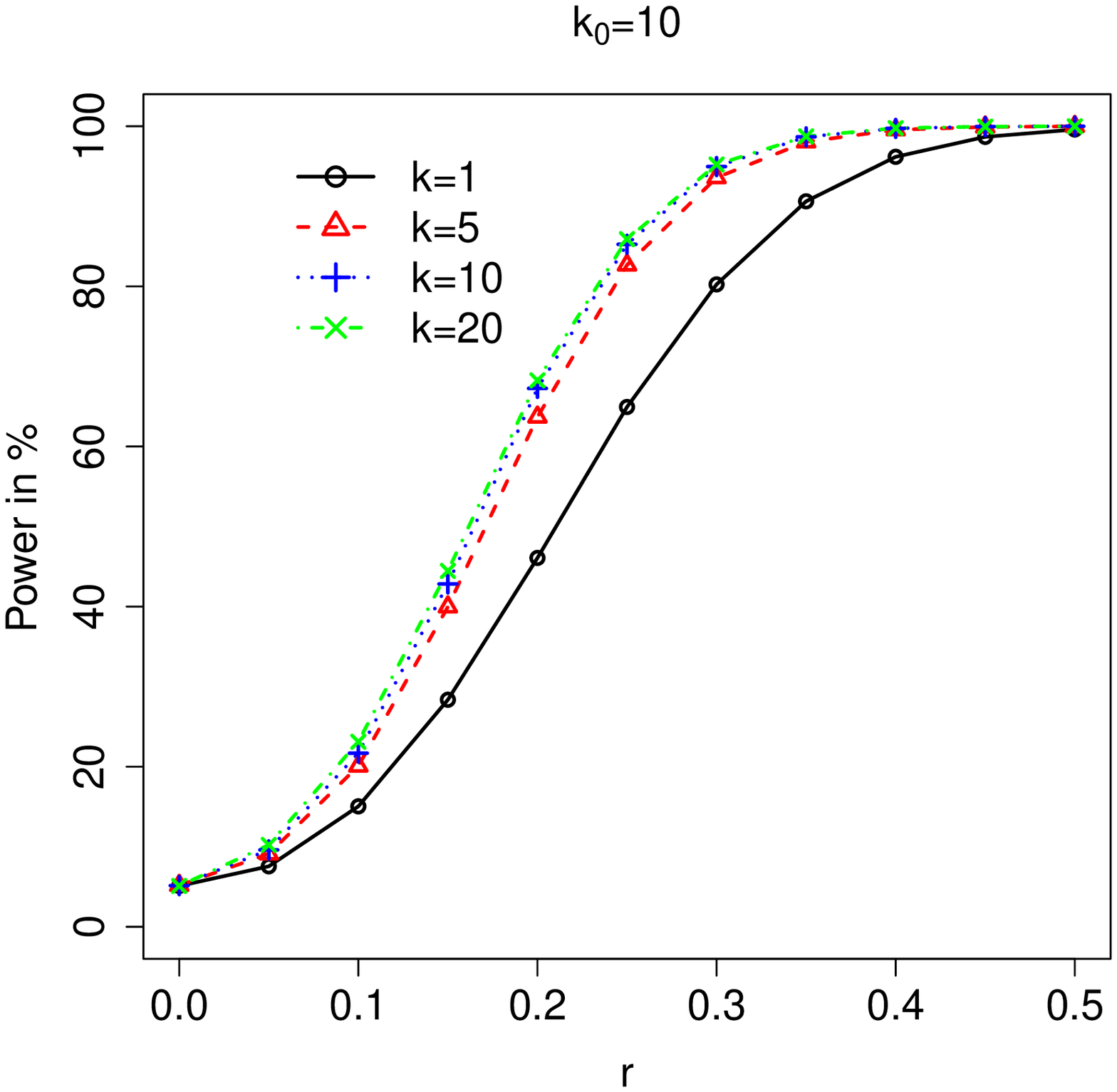}
\includegraphics[height=5.3cm,width=5.8cm]{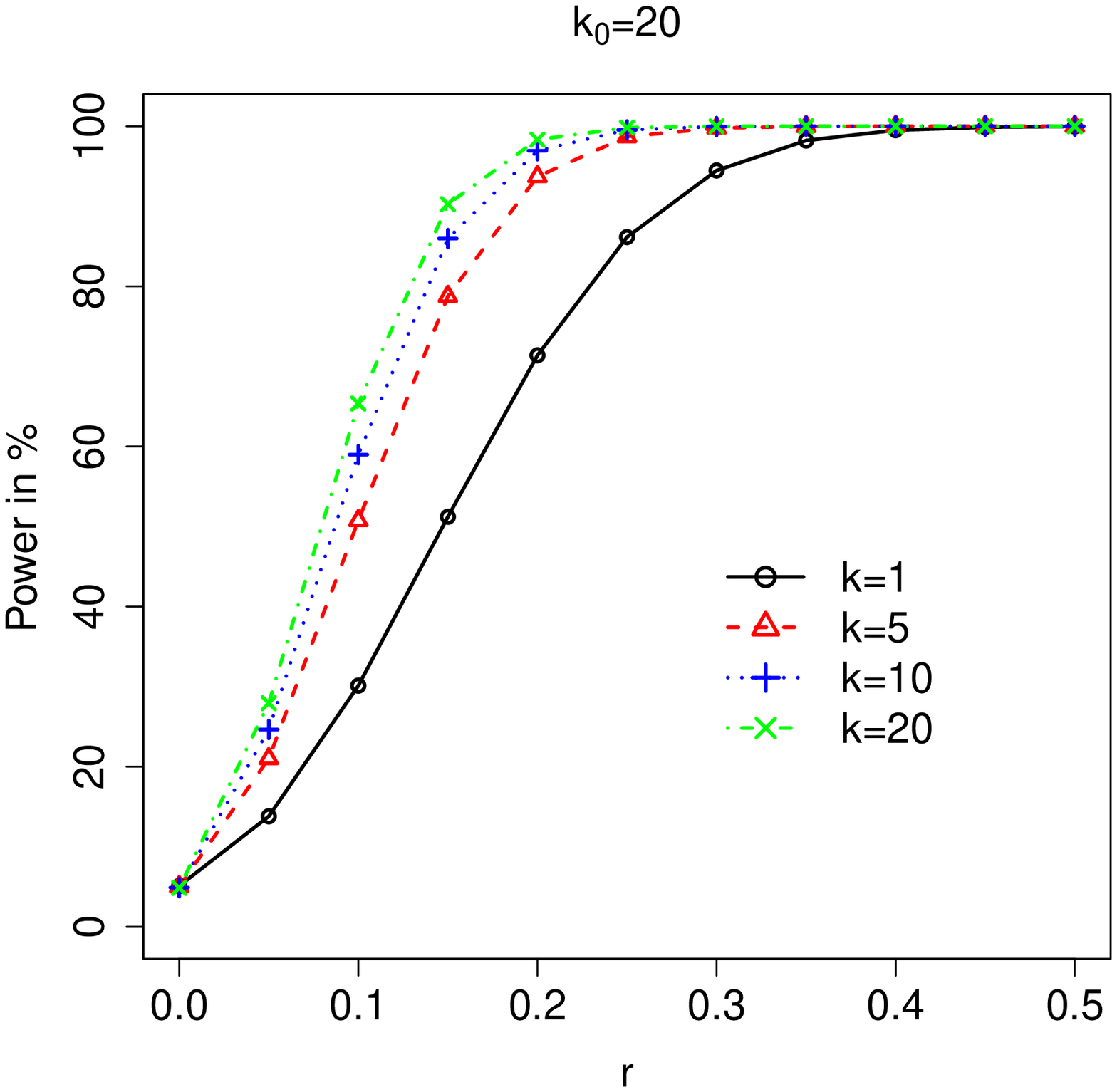}
\includegraphics[height=5.3cm,width=5.8cm]{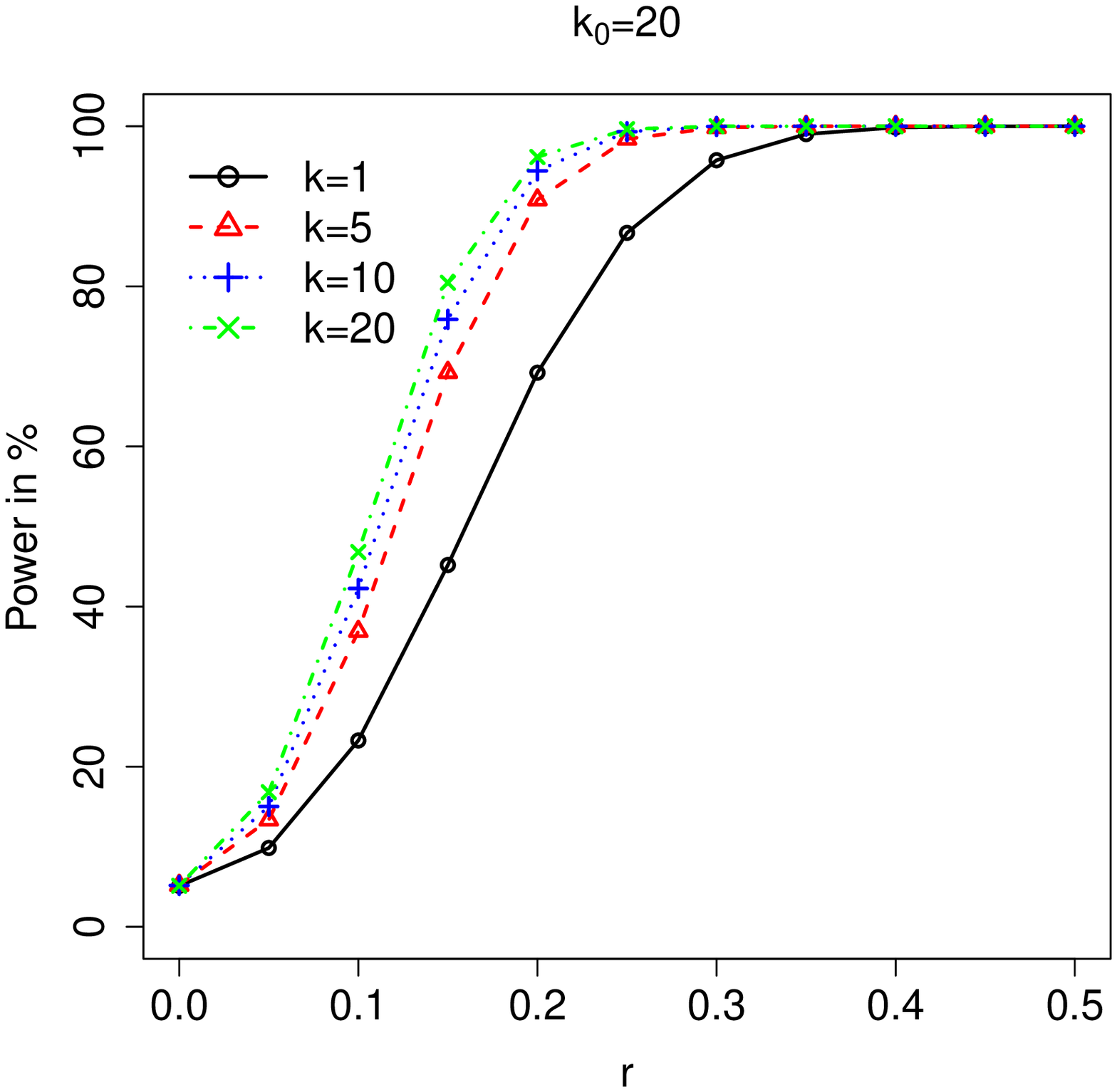}
\caption{Power curves for $T_n(k)$, where $k=1,5,10,20$, and
$\Sigma=(\sigma_{i,j})^{p}_{i,j=1}$ with $\sigma_{i,j}=0.6^{|i-j|}$.
Here $p=200$ for the left panels and $p=1000$ for the right panels.
The number of Monte Carlo replications is 100000.}\label{fig:power}
\end{figure}

\begin{table}[ht]\footnotesize
\caption{Rejection probabilities in \% for Models (a), (b), (c), and (d), where
$p=50,100,200$, and $n_1=n_2=80$. The results are obtained based on
1000 Monte Carlo replications.}\label{tb1}
\begin{center}
\begin{tabular}{clccccccccccc}
\toprule
Model & & $p$  & $T^2$ & BS & CQ & CLX & $T_{fe,n}(4)$ & $T_{fe,n}(8)$ & $T_{fe,n}(12)$ & $T_{fe,n}(24)$ & $\widetilde{T}_{fe,n}(40)$ \\
\hline
(a) & $H_0$& $50$ & 5.5 & 6.8 & 6.8 & 4.1 & 5.5 & 5.5 & 5.6 & 5.4 & 6.0 \\
&& $100$  & 6.6 & 6.3 & 6.3 & 6.2 & 6.9 & 6.1 & 5.8 & 4.8 & 6.7 \\
&& $200$  & NA & 5.1 & 5.1 & 5.7 & 6.8 & 5.3 & 5.5 & 4.3 & 5.1\\
%& $300$ & NA & \\
&Case 1 & $50$  & 22.9 & 10.5 & 10.5 & 32.4 & 40.7 & 40.2 & 38.9 & 35.5 & 42.7  \\
&& $100$  & 34.8 & 19.6 & 19.6 & 56.8 & 80.4 & 81.7 & 81.6 & 79.4 & 82.1 \\
&& $200$  & NA & 28.9 & 28.9 & 81.7 & 96.5 & 97.7  & 98.4 & 98.6 & 98.5\\
%& $300$ & NA & \\
&Case 2 & $50$  & 88.3 & 32.0 & 32.0 & 70.3 & 91.1 & 94.3 & 96.1 & 96.1 & 92.6 \\
&& $100$  & 55.8 & 37.6 & 37.6 & 77.4 & 93.5 & 96.1 & 96.3 & 97.0 & 95.9 \\
&& $200$  & NA & 42.0 & 42.0 & 97.9 & 99.9 & 100.0  & 100.0 & 100.0 & 100.0\\
%& $300$ & NA & \\

\hline (b) & $H_0$ & $50$ & 5.5 & 6.6 & 6.6 & 5.2 & 5.9 & 6.7 & 5.9 & 6.2 & 6.4 \\
&& $100$  & 6.6 & 8.2 & 8.2 & 5.8 & 8.7 & 6.6 & 5.9 & 5.6 & 6.9\\
&& $200$  & NA & 5.8 & 5.8 & 6.5 & 8.1 & 6.9 & 6.0 & 4.9 & 6.0\\
%& $300$ & NA & \\
&Case 1 & $50$  & 16.5 & 9.2 & 9.2 & 23.0 & 28.2 & 28.7 & 27.4 & 24.6 & 29.1  \\
&& $100$  & 30.9 & 16.8 & 16.8 & 35.3 & 53.0 & 53.2 & 52.2 & 49.5 &53.4\\
&& $200$  & NA & 22.8 & 22.8 & 57.7 & 80.3 & 84.0 & 84.9 & 83.3 & 84.1\\
%& $300$ & NA & \\
&Case 2 & $50$ & 71.5 & 24.6 & 24.5 & 62.6 & 83.2 & 87.9 & 88.4 & 86.7 & 83.6\\
&& $100$  & 47.0 & 31.3 & 31.3 & 50.1  & 74.8 & 77.9 & 78.7 & 79.7 & 77.7\\
&& $200$  & NA & 33.8 & 33.8 & 78.5 & 93.3 & 95.3 & 95.6 & 95.9 & 95.9\\

\hline (c)&$H_0$& $50$ & 5.5 & 6.6 & 6.6 & 4.4 & 7.3 & 7.2 & 6.8 & 6.2 & 7.1  \\
&& $100$ & 6.6 & 7.0 & 7.0 & 6.9 & 8.1 & 7.4 & 7.2 & 6.4 & 7.1\\
&& $200$  & NA & 5.9 & 5.9 & 7.5 & 8.2 & 7.4 & 6.3 & 5.8 & 6.5\\
%& $300$ & NA & \\
&Case 1 & $50$ & 15.6 & 14.4 & 14.4 & 16.5 & 22.0 & 21.2 & 19.5 & 17.6 & 21.8 \\
&& $100$  & 19.7 & 28.0 & 28.0 & 30.1 & 43.0 & 42.5 & 40.6 & 37.2 & 43.7\\
&& $200$  & NA & 47.9 & 47.9 & 45.9 & 69.4 & 71.9 & 71.4 & 68.4 & 71.1\\
%& $300$ & NA & \\
&Case 2 & $50$ & 47.1 & 52.3 & 52.3 & 37.1 & 56.6 & 60.3 & 59.1 & 57.8 & 56.8 \\
&& $100$  & 52.4 & 63.2 & 63.3 & 53.5 &  78.3 & 80.4 & 81.4 & 81.6 & 80.1\\
&& $200$  &  NA & 70.0 & 70.0 & 60.0 & 82.9 & 86.6 & 87.5 & 87.9 & 87.7\\
%& $300$ & NA & \\

\hline (d) &$H_0$& $50$ & 5.5 & 6.6 & 6.5 & 3.5 & 5.6 & 5.9 & 5.9 & 5.7 & 5.6  \\
&& $100$ & 6.6 & 6.3 & 6.3 &  5.1 & 7.0 & 6.2 & 5.7 & 5.5 & 6.7\\
&& $200$  & NA & 5.6 & 5.6 &  5.3 & 6.5 & 6.1 & 5.2 & 4.6 & 5.5\\
%& $300$ & NA & \\
&Case 1 & $50$ & 32.3 & 7.7 & 7.7 & 26.2 & 35.7 & 35.2 & 33.2 & 30.2 & 35.5 \\
&& $100$  & 36.8 & 7.8 & 7.8 & 77.5 & 91.9 & 93.8 & 93.8 & 93.1 & 93.6\\
&& $200$  & NA & 7.4 & 7.4 & 96.1 & 99.6 & 100.0 & 100.0 & 100.0 & 100.0\\
%& $300$ & NA & \\
&Case 2 & $50$ & 79.9 & 8.2 & 8.2 & 82.7 & 95.9 & 98.0 & 98.4 & 99.0 & 97.5 \\
&& $100$  & 86.3 & 7.7 & 7.7 & 99.9 & 100.0 & 100.0 & 100.0 & 100.0 & 100.0\\
&& $200$  & NA & 8.4 & 8.4 & 97.5 & 100.0 & 100.0 & 100.0 & 100.0 & 100.0\\
\hline
\end{tabular}
\\Note: $T^2$, BS, CQ and CLX denote the Hotelling's $T^2$ test and the two sample tests in Bai and Saranadasa (1996), Chen and Qin (2010), and Cai
et al. (2014) respectively.
\end{center}
\end{table}

\begin{figure}[h]
\centering
\includegraphics[height=5.6cm,width=6cm]{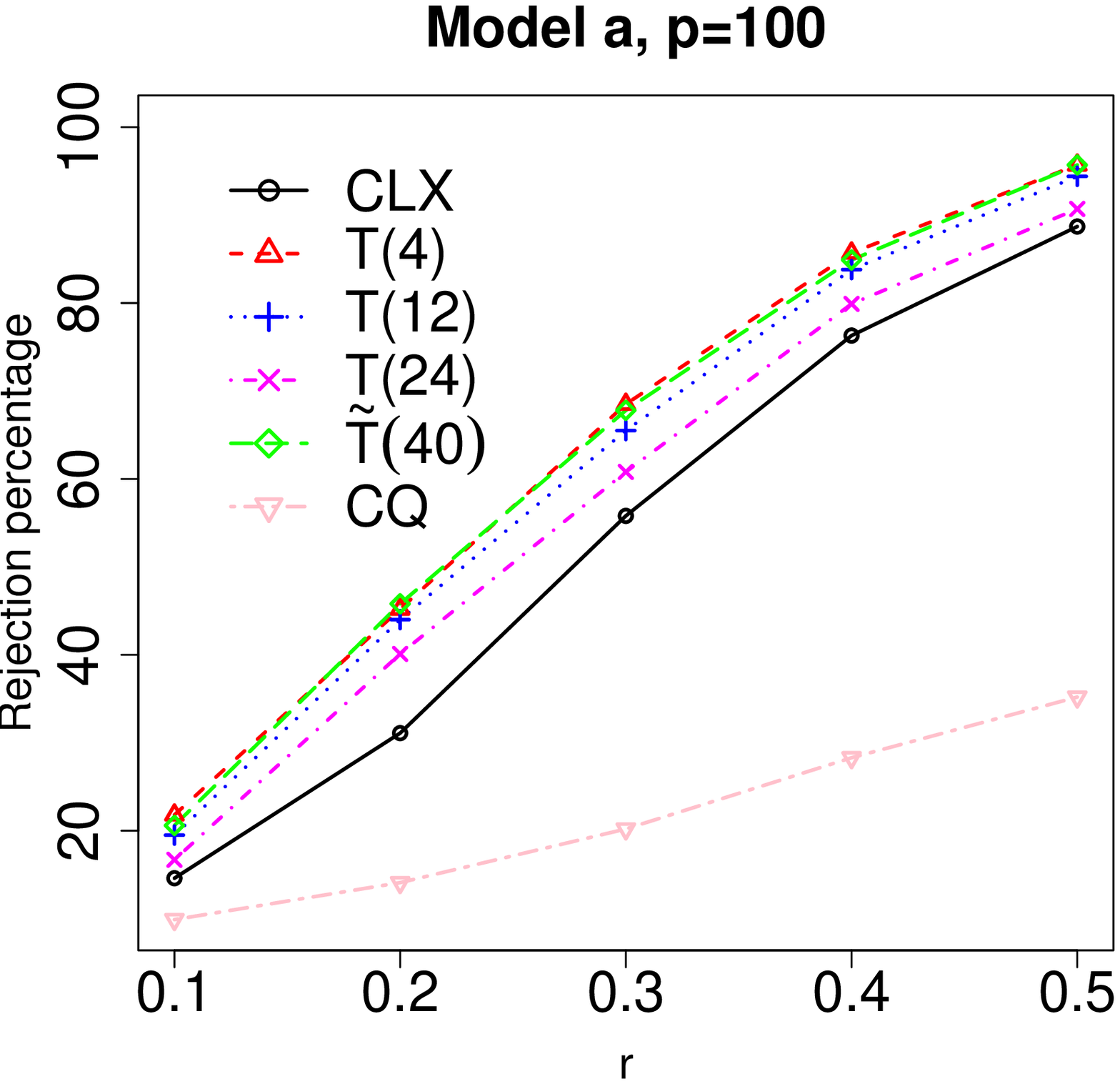}
\includegraphics[height=5.6cm,width=6cm]{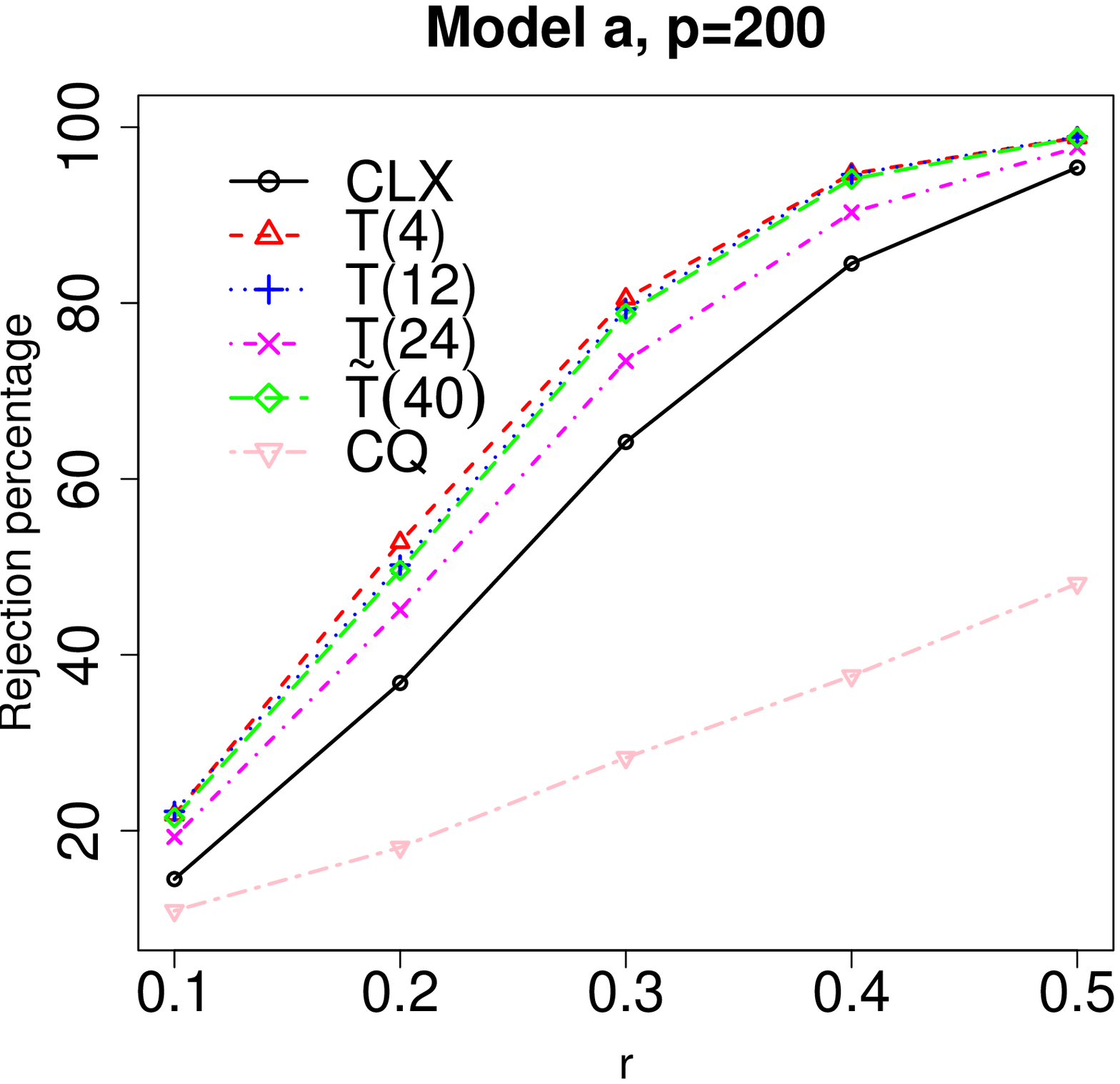}
\includegraphics[height=5.6cm,width=6cm]{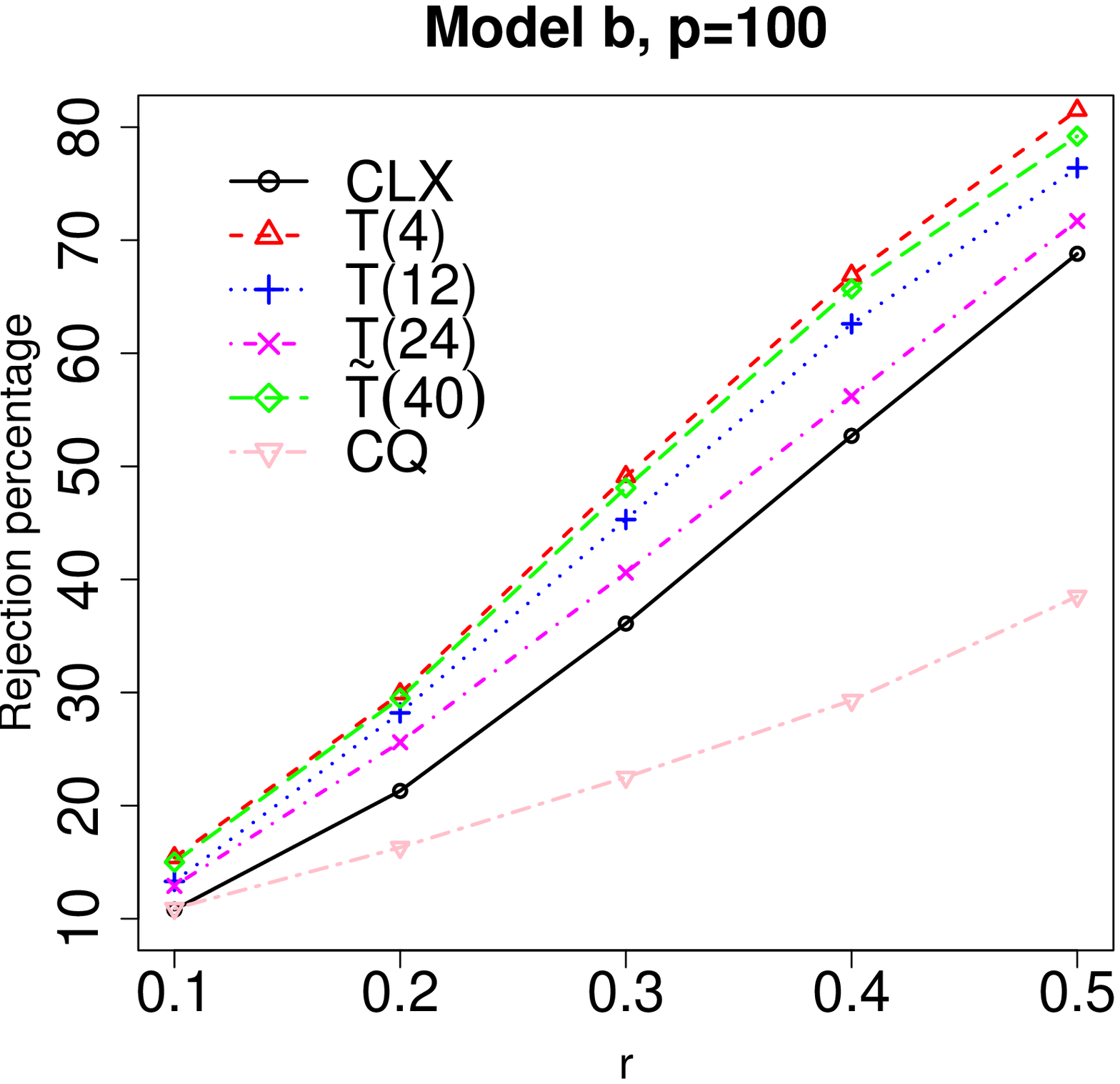}
\includegraphics[height=5.6cm,width=6cm]{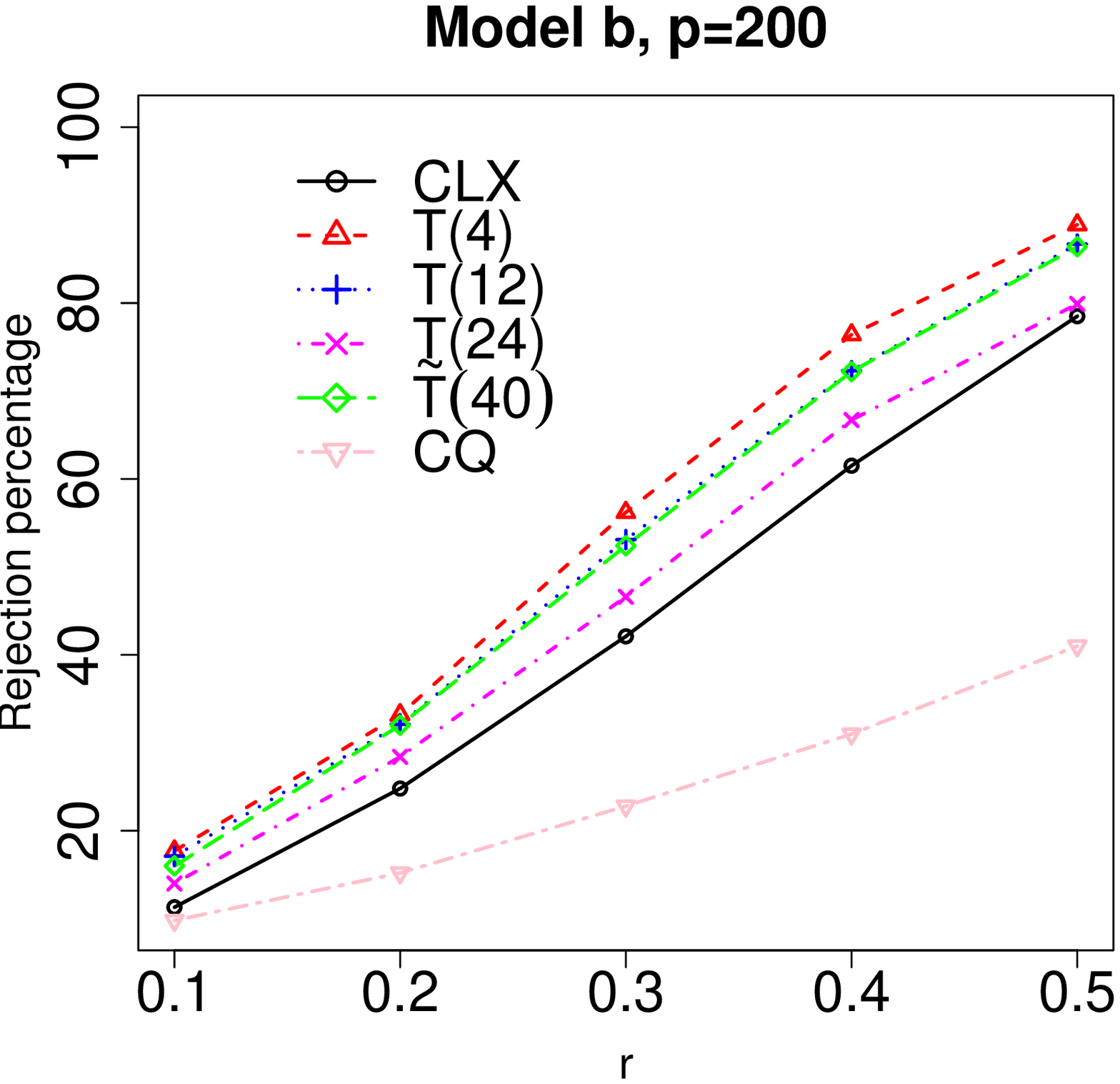}
\includegraphics[height=5.6cm,width=6cm]{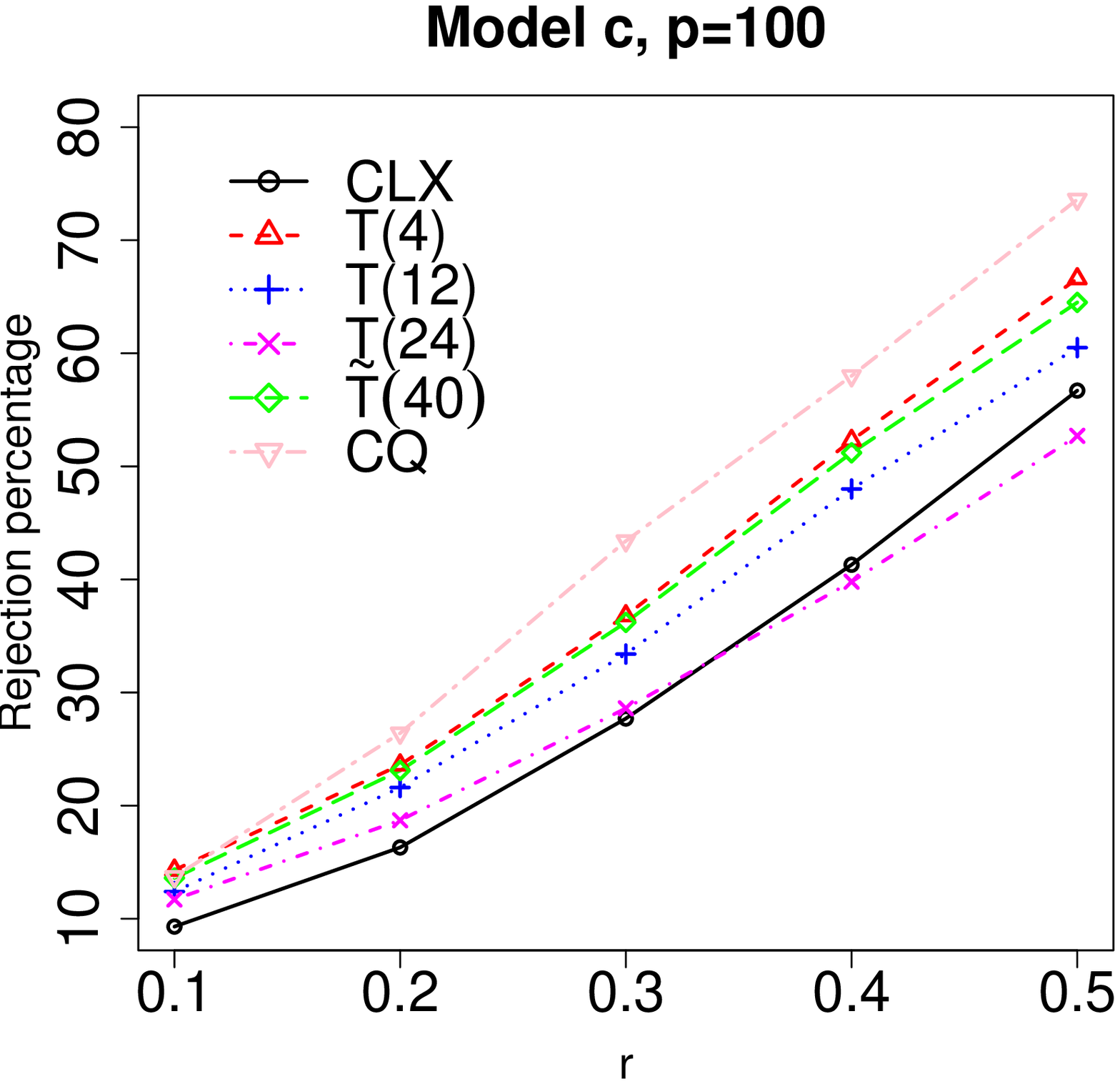}
\includegraphics[height=5.6cm,width=6cm]{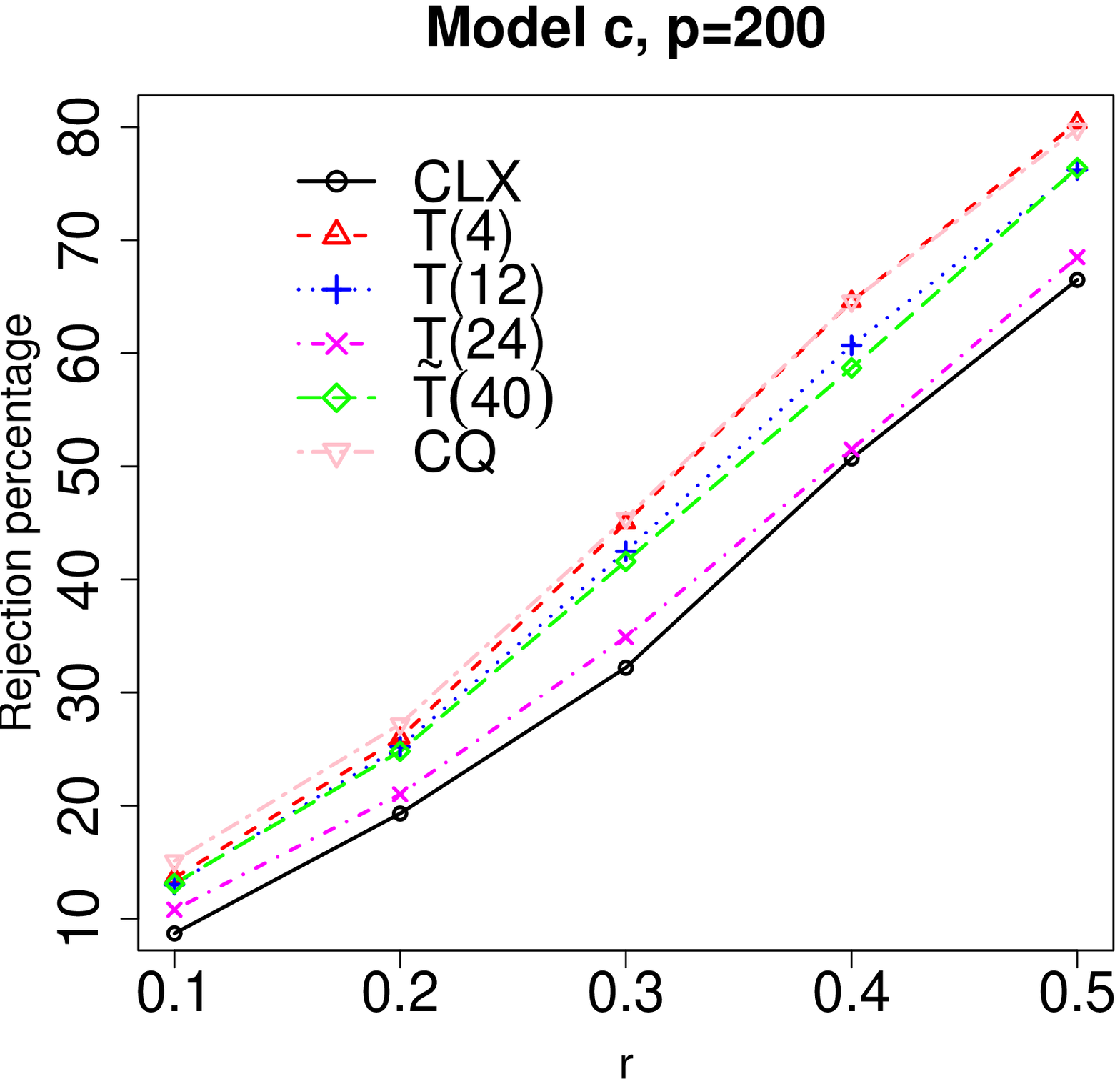}
\includegraphics[height=5.6cm,width=6cm]{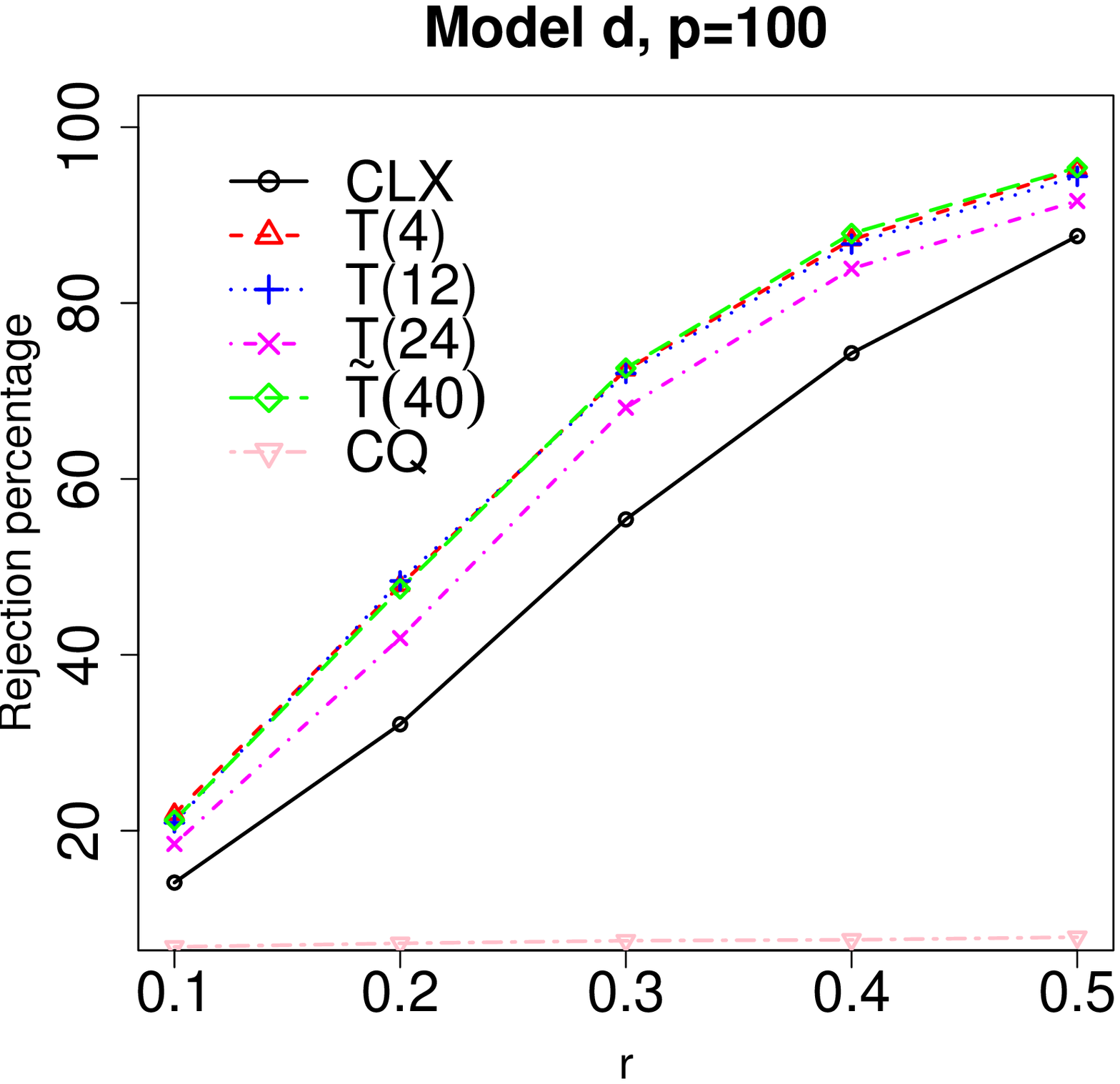}
\includegraphics[height=5.6cm,width=6cm]{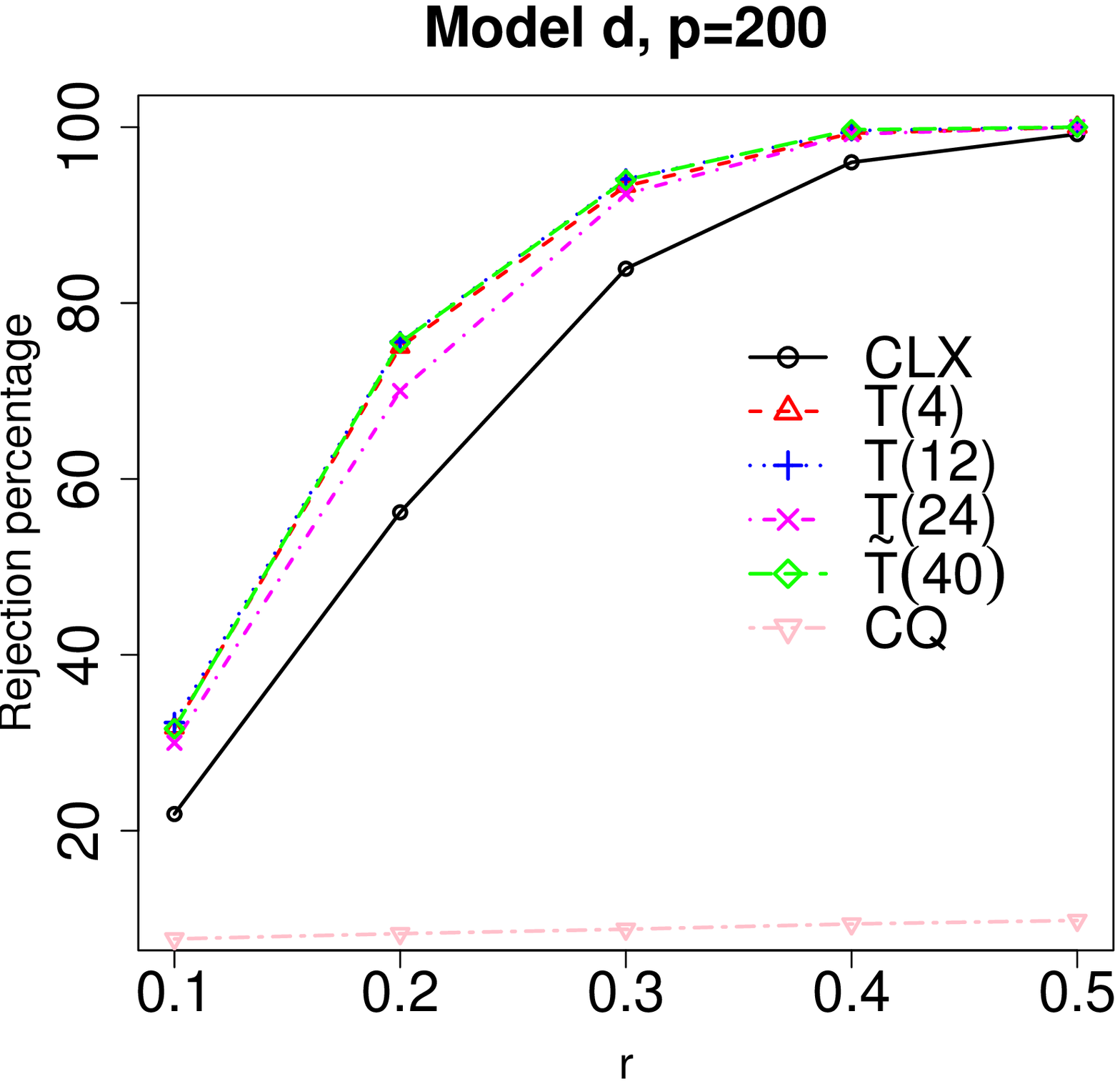}
\caption{Empirical powers for CQ, CLX and the proposed tests under
Models (a), (b), (c), and (d), and case 3, where $n_1=n_2=80$ and
$p=100,200$. The results are obtained based on 1000 Monte Carlo
replications.}\label{fig:p1}
\end{figure}

\begin{figure}[h]
\centering
\includegraphics[height=5.6cm,width=6cm]{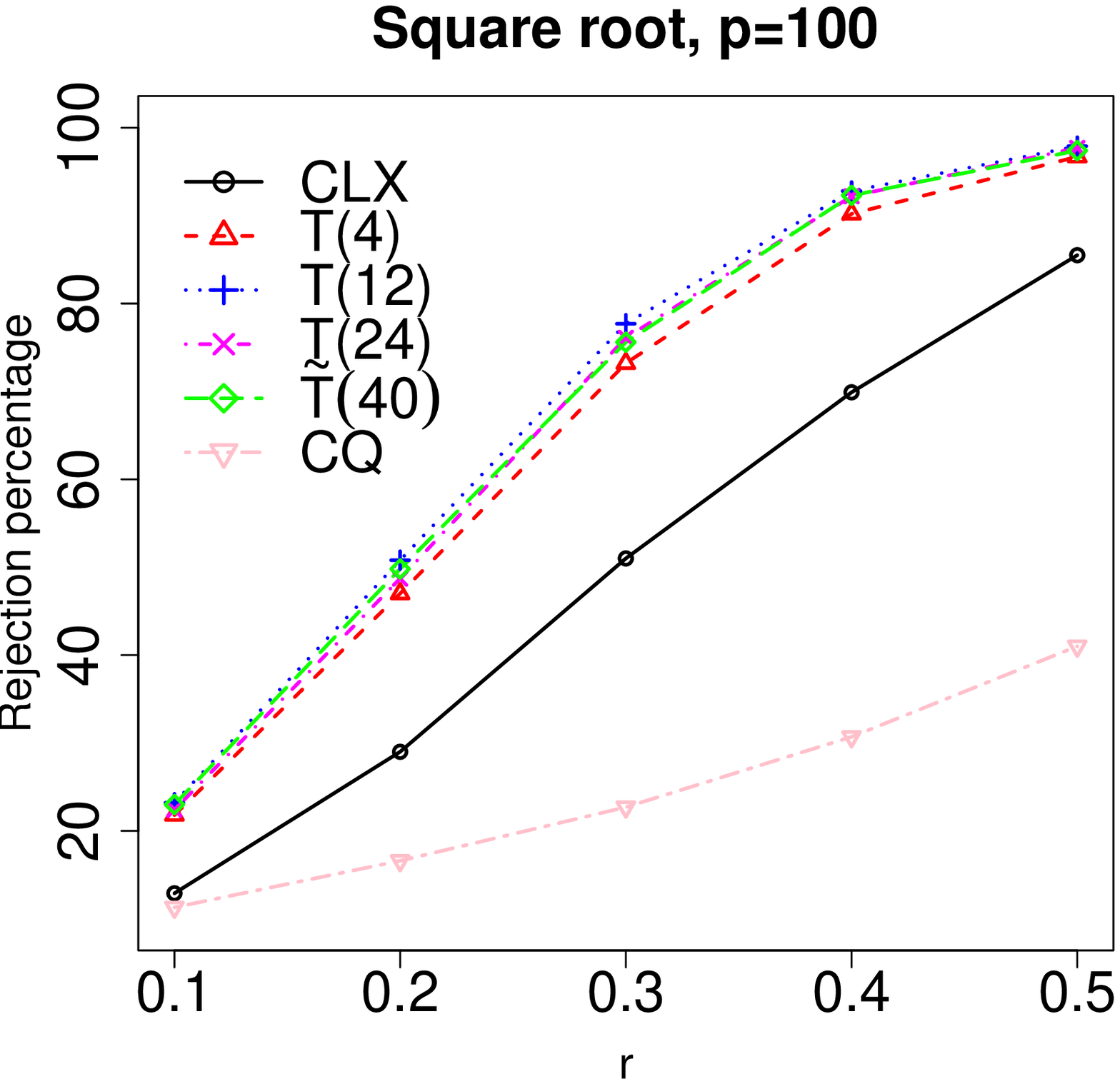}
\includegraphics[height=5.6cm,width=6cm]{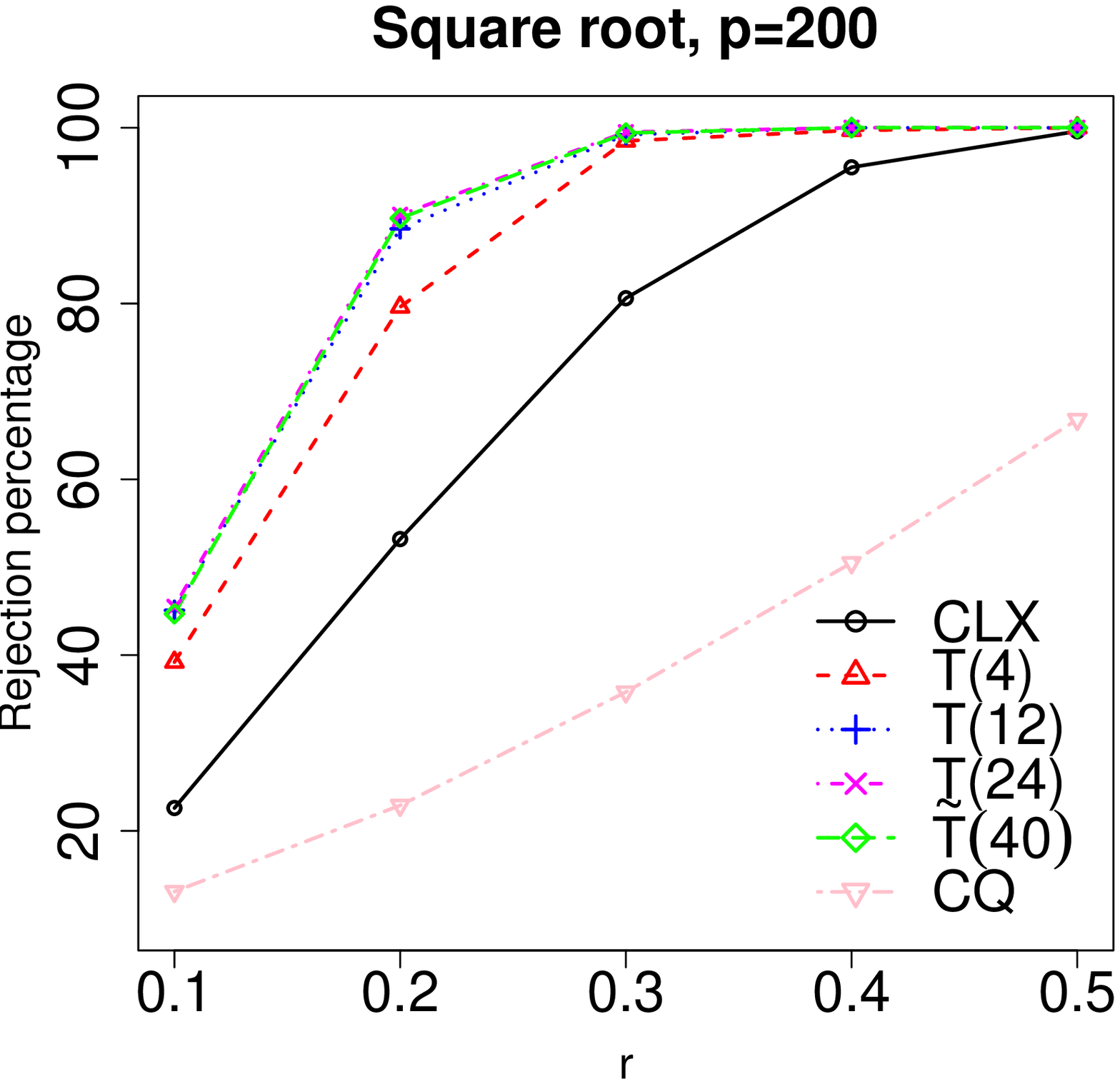}
\includegraphics[height=5.6cm,width=6cm]{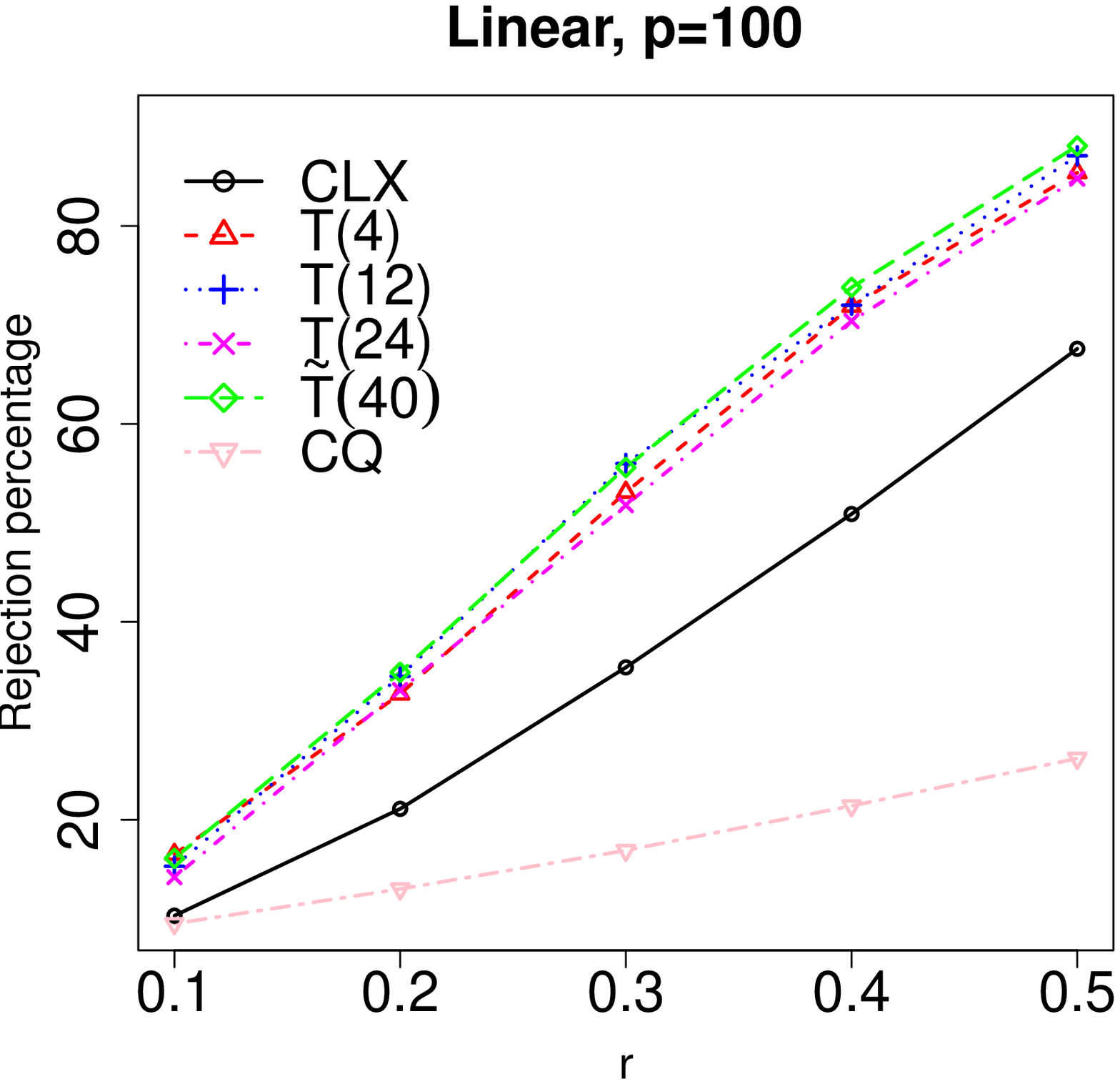}
\includegraphics[height=5.6cm,width=6cm]{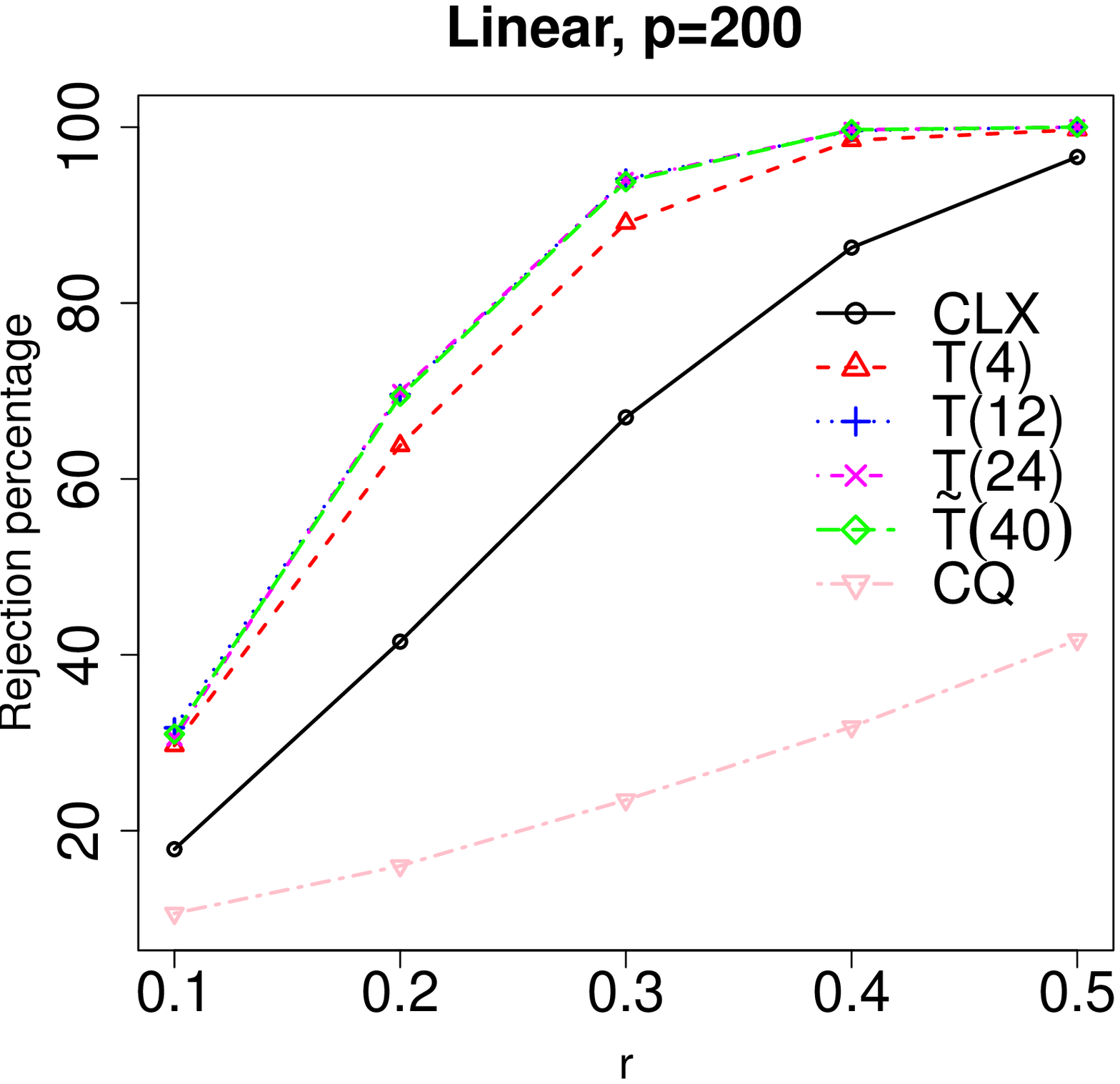}
\includegraphics[height=5.6cm,width=6cm]{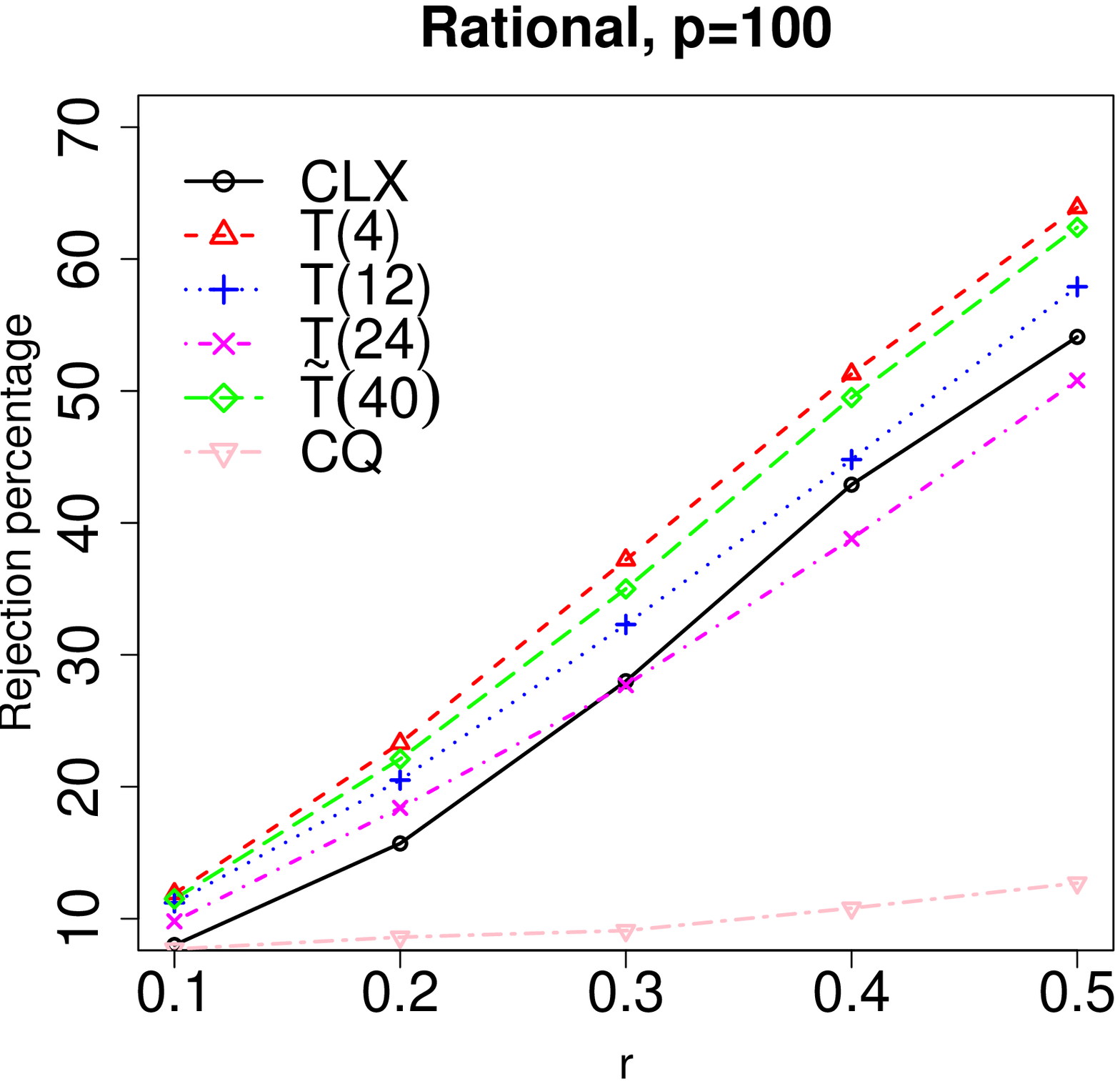}
\includegraphics[height=5.6cm,width=6cm]{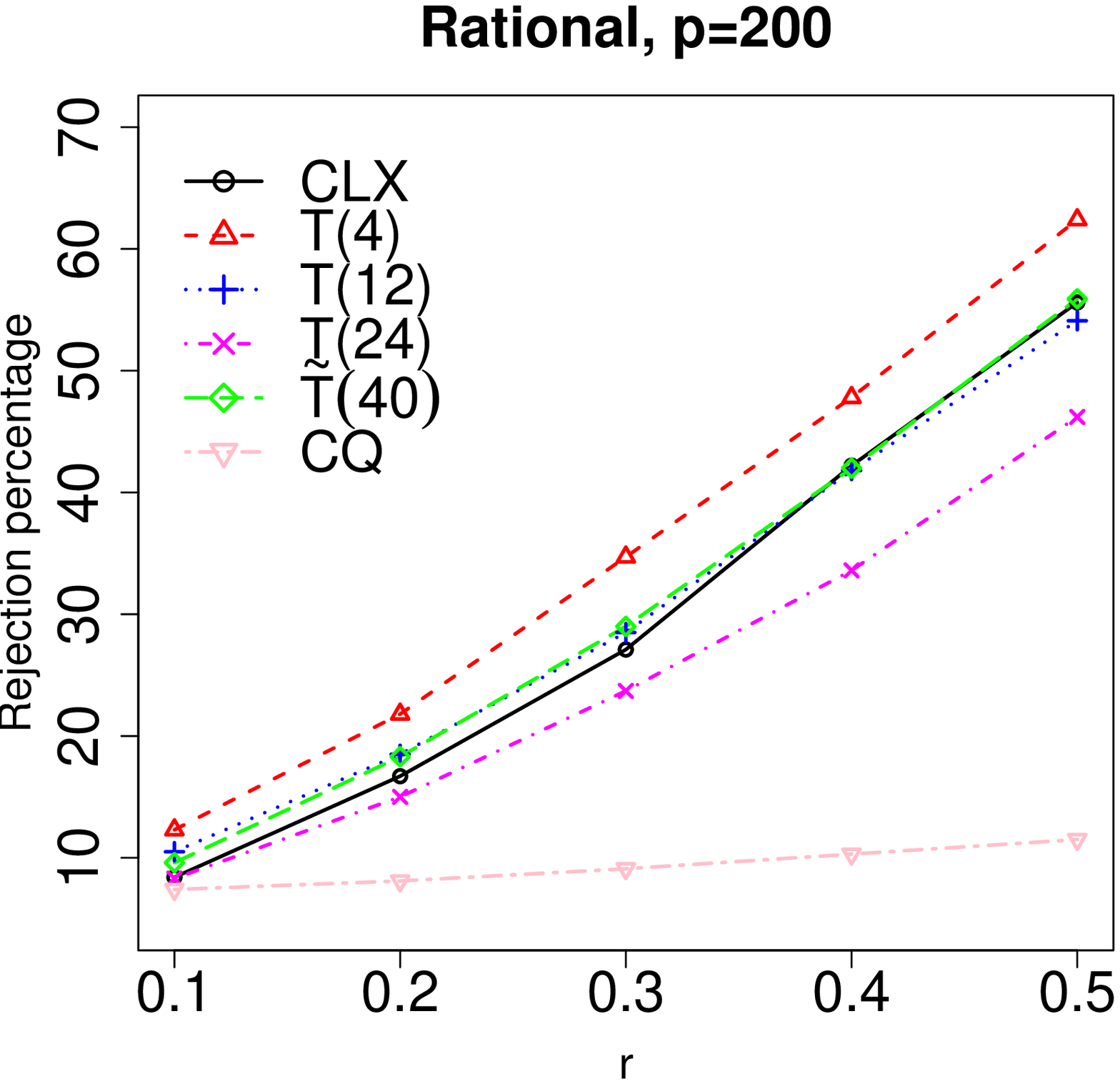}
\includegraphics[height=5.6cm,width=6cm]{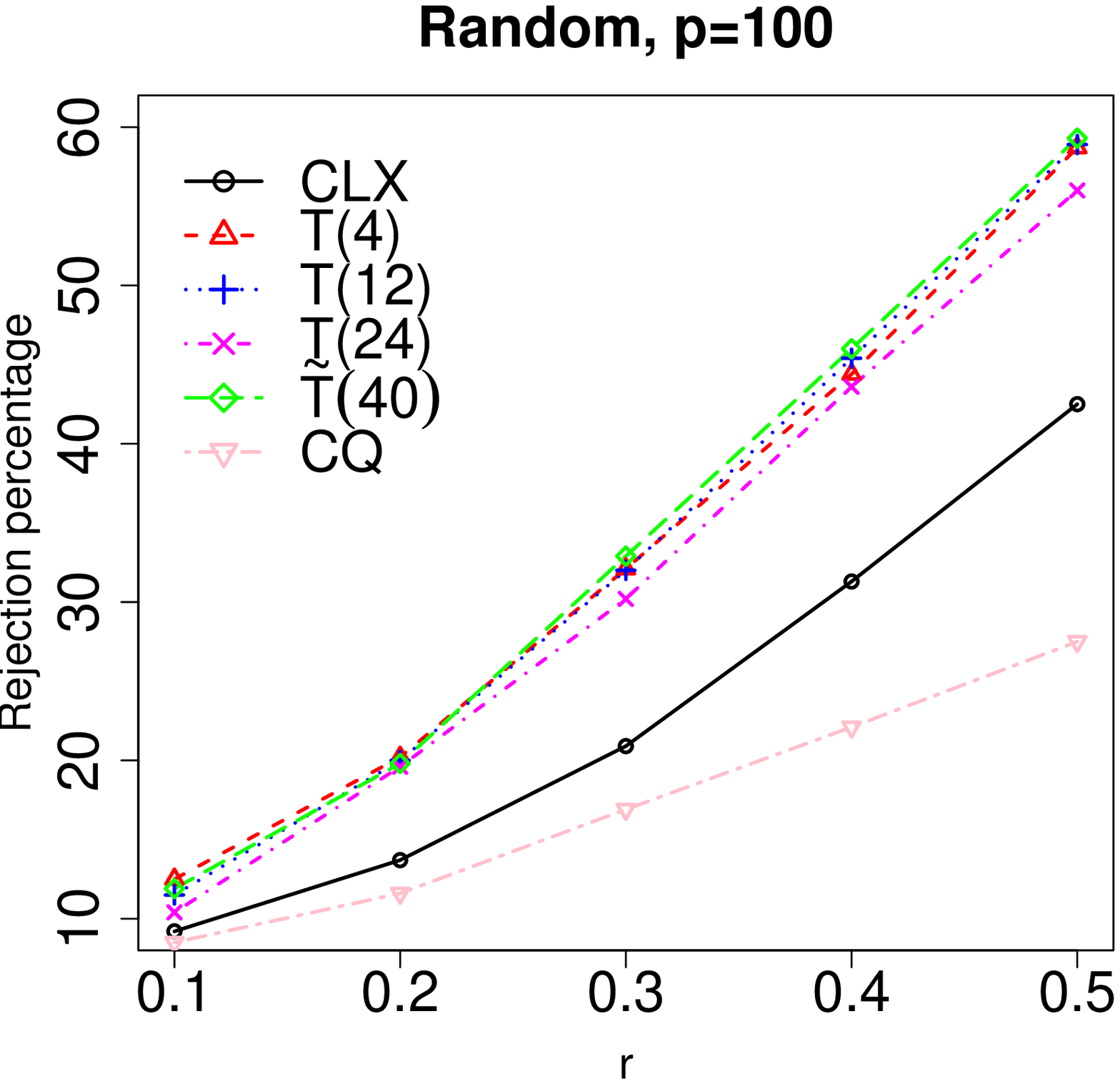}
\includegraphics[height=5.6cm,width=6cm]{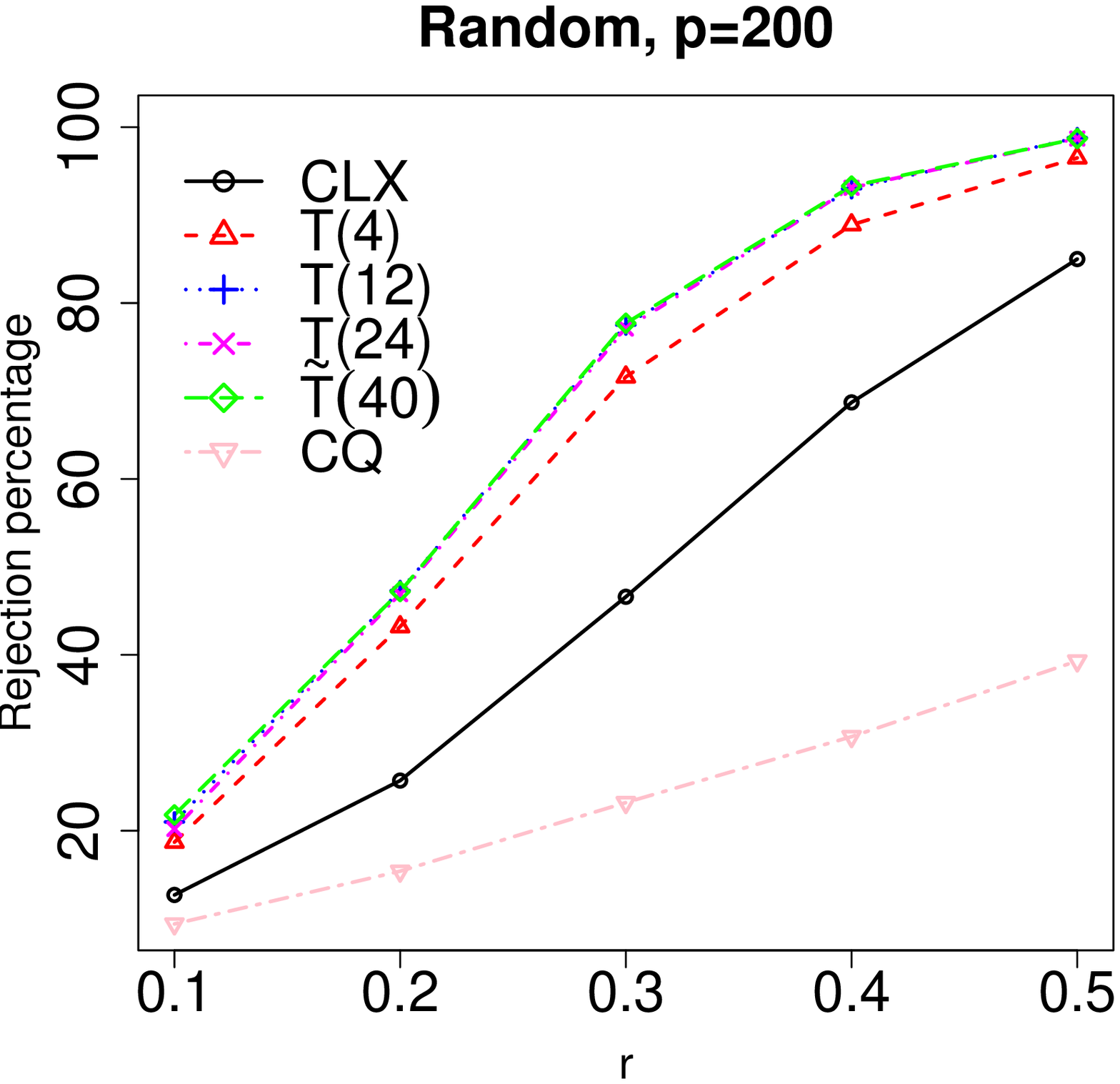}
\caption{Empirical powers for CQ, CLX and the proposed tests under
different signal allocations, where $n_1=n_2=80$ and $p=100,200$.
The results are obtained based on 1000 Monte Carlo
replications.}\label{fig:p2}
\end{figure}

\end{document}